\definecolor{dark-red}{rgb}{0.4,0.15,0.15}
\definecolor{dark-blue}{rgb}{0.15,0.15,0.4}
\definecolor{medium-blue}{rgb}{0,0,0.5}
\definecolor{gray}{rgb}{0.5,0.5,0.5}
\definecolor{color-Ig}{rgb}{0.15,0.7,0.15}
\newcommand{\NP}{\ensuremath{\mathsf{NP}}\xspace}
\renewcommand{\P}{\ensuremath{\mathsf{P}}\xspace}
\newcommand{\Acal}{\mathcal{A}}
\newcommand{\ETH}{{\sf ETH}\xspace}
\newcommand{\FPT}{{\sf FPT}\xspace}
\newcommand{\XP}{{\sf XP}\xspace}
\renewcommand{\P}{{\sf P}\xspace}
\renewcommand{\boldmath}{}
\newcommand{\true}{{\sf true}\xspace}
\newcommand{\false}{{\sf false}\xspace}
\theoremstyle{plain}
\newtheorem{Claim}[theorem]{Claim}
\newenvironment{proof1}[1][Proof of Theorem~\ref{safefpt}]{\begin{proof}[#1]}{\end{proof}}
\newenvironment{proof2}[1][Proof of Theorem~\ref{flowfpt}]{\begin{proof}[#1]}{\end{proof}}
\newenvironment{proof3}[1][Proof of Theorem~\ref{treefpt}]{\begin{proof}[#1]}{\end{proof}}
\newenvironment{proof4}[1][Proof of Theorem~\ref{theo:lower-graph}]{\begin{proof}[#1]}{\end{proof}}
\title{FPT algorithms for packing $k$-safe spanning rooted sub(di)graphs}
\titlerunning{FPT algorithms for packing $k$-safe spanning rooted sub(di)graphs} 
\author{St\'ephane Bessy}{LIRMM, Universit\'e de Montpellier, CNRS, Montpellier, France}{stephane.bessy@lirmm.fr}{https://orcid.org/0000-0001-7130-4990}{{\footnotesize DIGRAPHS (ANR-19-CE48-0013-02).}}
\author{Florian H\"orsch}{Université Grenoble Alpes, Grenoble INP, CNRS, G-SCOP, Grenoble, France}{florian.hoersch@grenoble-inp.fr}{}{}
\author{Ana Karolinna Maia}{Departamento de Computa\c c\~ao, Universidade Federal do Cear\'a, Fortaleza, Brazil}{karolmaia@ufc.br}{https://orcid.org/0000-0002-9027-7948}{{\footnotesize FUNCAP Pronem 4543945/2016 and  CAPES/STIC-AmSud~88881.197438/2018-01.}}
\author{Dieter Rautenbach}{Institute of Optimization and Operations Research, Ulm University, Germany}{dieter.rautenbach@uni-ulm.de}{https://orcid.org/0000-0002-7214-042X}{}
\author{Ignasi Sau}{LIRMM, Universit\'e de Montpellier, CNRS, Montpellier, France}{ignasi.sau@lirmm.fr}{https://orcid.org/0000-0002-8981-9287}{{\footnotesize DEMOGRAPH (ANR-16-CE40-0028), ESIGMA (ANR-17-CE23-0010), ELIT (ANR-20-CE48-0008-01), and French-German Collaboration ANR/DFG Project UTMA (ANR-20-CE92-0027).}}
\authorrunning{S. Bessy, F. Hoersch, A. K. Maia, D. Rautenbach, and I. Sau} 
\keywords{Digraphs, packing problems, arborescences, branching flows, safe spanning trees,  parameterized complexity, fixed-parameter tractability.}
\begin{document}

\maketitle


\begin{abstract}
We study three problems introduced by Bang-Jensen and Yeo~[Theor. Comput. Sci. 2015] and by Bang-Jensen, Havet, and Yeo~[Discret. Appl. Math. 2016] about finding disjoint ``balanced'' spanning rooted substructures in graphs and digraphs, which generalize classic packing problems. Namely, given a positive integer $k$, a digraph $D=(V,A)$, and a root $r \in V$, we
consider the problem of finding two arc-disjoint $k$-safe spanning
$r$-arborescences and the problem of finding two arc-disjoint
$(r,k)$-flow branchings. We show that both these problems are \FPT
with  parameter $k$, improving on existing \XP algorithms.  The latter of these results answers a question of Bang-Jensen, Havet, and Yeo~[Discret. Appl. Math. 2016]. Further, given an
integer $k$, a graph $G=(V,E)$, and $r \in V$, we consider the
problem of finding two arc-disjoint $(r,k)$-safe spanning trees. We
show that this problem is also \FPT with  parameter
$k$, again improving on a previous \XP algorithm.
Our main technical contribution is to prove that the existence of such spanning substructures is equivalent to the existence of substructures with size and maximum (out-)degree both bounded by a (linear or quadratic) function of $k$, which may be of independent interest.
\end{abstract}


\section{Introduction}
\label{sec:intro}

This article deals with finding certain disjoint substructures in
graphs and digraphs. Throughout the article, when given a graph or a digraph, we use {\boldmath$n$} for its number of vertices.


All graphs and digraphs considered in this paper are loopless, but may have multiple edges or arcs. Given a graph $G=(V,E)$, we say that an edge $e=uv$ is {\it
  incident} to $u$ and $v$. For some $X \subseteq V$, we denote by
{\boldmath $d_G(X)$} the number of edges that are incident to exactly
one vertex in $X$. For some $v \in V$, we use {\boldmath$N_G(v)$} for
the set of vertices $w$ such that there is an edge between $v$ and
$w$. A {\it rooted graph} is a graph $G=(V+r,E)$ with a special vertex
$r$ called the \emph{root}. A vertex $v \in V$ with $d_G(v)=1$ is called a {\it leaf} of $G$.


Given a digraph $D=(V,A)$ and some $X \subseteq V$, we use {\boldmath
  $\delta_D^-(X)$} (resp. {\boldmath $\delta_D^+(X)$}) for the set of
arcs entering (resp. leaving) $X$. We use {\boldmath $d_D^-(X)$}
(resp. {\boldmath $d_D^+(X)$}) for $|\delta_D^-(X)|$
(resp. $|\delta_D^+(X)|$). For a single vertex $v$, we abbreviate
$\delta_D^+(\{v\})$ (resp. $\delta_D^-(\{v\}),d_D^+(\{v\}),d_D^-(\{v\})$) to
{\boldmath $\delta_D^+(v)$} (resp. {\boldmath
  $\delta_D^-(v),d_D^+(v),d_D^-(v)$}). We call $d_D^-(v)$ (resp. $d_D^+(v)$) the
{\it in-degree} (resp. {\it out-degree)} of $v$. We use
{\boldmath$N_D^+(v)$} for the set of vertices $w$ such that there is
an arc from $v$ to $w$. A vertex $v \in V$ with $d^+(v)=0$ is called a {\it sink} of $D$. Subscripts may be omitted when they are clear from the context. The {\it underlying graph} of $D$ is obtained
by replacing all arcs of $A$ by an edge between the same two vertices. A {\it rooted digraph} is a digraph $D=(V+r,A)$ with a special vertex
$r$, called the \emph{root}, whose in-degree is 0.



\medskip

\noindent \textbf{Packing $k$-safe spanning $r$-arborescences}. The first objects we deal with are called arborescences. We remark here that the term \emph{out-branching} is used in \cite{bhy} to describe the same object.
An {\it $r$-arborescence} is a directed graph $X=(V+r,A)$ such that the underlying graph of $X$ is a  tree, the in-degree of
$r$ is $0$ in $X$, and the in-degree of all other vertices  is $1$ in
$X$. We say that $r$ is the root of $X$. Observe that every $v \in V$ is the root of a unique
subarborescence of $X$. We denote this subarborescence by
{\boldmath$B_X^v$}. Given a digraph $D$, an $r$-arborescence $X$ that is a subdigraph of $D$ is {\it spanning} in $D$ if it has the same vertex set as $D$. The following is a fundamental result in digraph theory.

\begin{theorem}[Edmonds~\cite{e}]
\label{basic1}
Let $D=(V+r,A)$ be a rooted digraph and $k$ a positive integer. There
exists a set of $k$ arc-disjoint spanning $r$-arborescences in $D$ if
and only if $d_A^{-}(X)\geq k \text{ for all } \emptyset \neq
X\subseteq V.$
\end{theorem}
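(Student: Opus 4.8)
The ``only if'' direction is a short counting argument: suppose $X_1,\dots,X_k$ are arc-disjoint spanning $r$-arborescences of $D$ and let $\emptyset\neq Y\subseteq V$. Fixing $y\in Y$, each $X_i$ contains a directed path from $r$ to $y$, which starts outside $Y$ (as $r\notin Y$) and ends inside $Y$, hence uses at least one arc of $\delta_D^-(Y)$; since the $X_i$ are arc-disjoint, $d_A^-(Y)\geq\sum_{i=1}^k d_{X_i}^-(Y)\geq k$.

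For the ``if'' direction I would argue by induction on $k$, the case $k=0$ being vacuous. The crux is the following reduction lemma: \emph{if $k\geq 1$ and $d_A^-(X)\geq k$ for every nonempty $X\subseteq V$, then $D$ contains a spanning $r$-arborescence $T$ such that $d_{A\setminus A(T)}^-(X)\geq k-1$ for every nonempty $X\subseteq V$.} Granting this, the induction closes at once: apply the induction hypothesis to $D-A(T)$ to obtain $k-1$ arc-disjoint spanning $r$-arborescences there, and together with $T$ these form the desired $k$ arborescences in $D$.

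To prove the reduction lemma I would build $T$ greedily. Starting from the one-vertex arborescence on $\{r\}$, I would maintain a partial $r$-arborescence $T$ with vertex set $U\supseteq\{r\}$ satisfying two invariants: (i) $d_{A\setminus A(T)}^-(X)\geq k-1$ for every nonempty $X\subseteq V$, and (ii) no \emph{tight} set $X$ (one with $d_{A\setminus A(T)}^-(X)=k-1$) is disjoint from $U$. Both hold at the start, since then $A(T)=\emptyset$ and even $d_A^-(X)\geq k$. While $U\neq V+r$ I claim there is an arc $a=(u,v)\in A\setminus A(T)$ with $u\in U$ and $v\in V\setminus U$ whose addition to $T$ preserves (i) and (ii). Deleting $a$ from $A\setminus A(T)$ lowers $d^-(X)$ by exactly one for the sets with $v\in X$ and $u\notin X$, so adding $a$ violates (i) precisely when some tight set $X$ satisfies $v\in X\not\ni u$; by (ii) every tight set meets $U$, so $V\setminus U$ is itself not tight and $d_{A\setminus A(T)}^-(V\setminus U)\geq k\geq 1$, which already guarantees at least one arc from $U$ to $V\setminus U$. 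Since $d^-$ is submodular, two tight sets that intersect have tight intersection and tight union, and this structure is enough to run a standard uncrossing argument showing that one of these arcs from $U$ to $V\setminus U$ is not ``blocked'' by any tight set; that this choice also preserves (ii) is routine, because any tight set newly created by the deletion of $a$ must contain $v\in U\cup\{v\}$. Each step enlarges $U$ by one vertex, so the process terminates with a partial $r$-arborescence on $V+r$, that is, a spanning $r$-arborescence $T$ for which $D-A(T)$ satisfies the cut condition with $k-1$ by (i).

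I expect the uncrossing step — deciding which arc leaving $U$ to add so as to keep every cut at $\geq k-1$ — to be the only genuinely delicate point; the rest is bookkeeping. As an alternative one could instead deduce the theorem from matroid theory, using that the spanning $r$-arborescences of $D$ are exactly the common bases of the partition matroid ``at most one arc into each vertex of $V$'' and the graphic matroid of the underlying graph of $D$, and then invoking the matroid union and matroid intersection theorems; the inductive argument above, however, is more self-contained.
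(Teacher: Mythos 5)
The paper itself offers no proof of this statement: it is Edmonds' disjoint branchings theorem, imported from \cite{e} as a black box. So your attempt has to be measured against the classical proofs, and your outline is recognisably Lov\'asz's: the counting argument for necessity is correct, the induction on $k$ via the reduction lemma is the right skeleton, and growing $T$ greedily subject to the cut condition staying at $k-1$ is the right mechanism. One small simplification: your invariant (ii) never needs to be maintained, because every arc of the partial arborescence $T$ has its head in $U$, so any nonempty $X\subseteq V$ disjoint from $U$ automatically satisfies $d^-_{A\setminus A(T)}(X)=d^-_A(X)\geq k$.

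The genuine gap is the step you yourself defer as ``a standard uncrossing argument'': that step is the entire content of the theorem, and the way you gesture at it (``one of these arcs from $U$ to $V\setminus U$ is not blocked'') does not yet contain the idea that makes it work. Here is the missing idea. Call $X$ tight if $\emptyset\neq X\subseteq V$ and $d^-_{A\setminus A(T)}(X)=k-1$. An arc with head $v\in V\setminus U$ can only be blocked by a tight set meeting $V\setminus U$; if no such tight set exists, any arc of $A\setminus A(T)$ leaving $U$ into $V\setminus U$ (at least one exists, as you observe) may be added. Otherwise take $X$ tight with $X\setminus U\neq\emptyset$ and \emph{inclusionwise minimal} such. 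Since $X\setminus U$ is disjoint from $U$ we have $d^-_{A\setminus A(T)}(X\setminus U)\geq k>k-1=d^-_{A\setminus A(T)}(X)$, so some arc $a=uv$ of $A\setminus A(T)$ enters $X\setminus U$ but not $X$, which forces $u\in X\cap U$ and $v\in X\setminus U$. If $a$ entered a tight set $Z$ (so $v\in Z$, $u\notin Z$), then $X\cap Z\ni v$ is nonempty, submodularity makes $X\cap Z$ tight, and $X\cap Z$ meets $V\setminus U$; minimality gives $X\cap Z=X$, i.e.\ $X\subseteq Z$, contradicting $u\in X\setminus Z$. Note that the more tempting choice of a \emph{maximal} tight set does not close the argument (uncrossing then only yields $Z\subseteq X$, which is no contradiction), so this is not a cosmetic detail. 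Finally, the matroid route you offer as an alternative does not exist in the form you describe: spanning $r$-arborescences are indeed the common bases of a partition matroid and a graphic matroid, but the common independent sets of two matroids do not form a matroid, so matroid union cannot be applied to pack them; packing common bases of two matroids is hard in general, and Edmonds' theorem is precisely a case that requires its own argument.
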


A number of alternative proofs of \autoref{basic1} have been found,
several of which are algorithmic and yield polynomial-time algorithms to find the desired arc-disjoint spanning $r$-arborescences, if they exist~\cite{e4,l}.

This naturally raises the question whether we also can efficiently find
spanning arborescences satisfying some extra properties. This consideration, as well as practical applications concerning protection against arc failures,
motivated Bang-Jensen and Yeo~\cite{balanced}
 to introduce the notion of $k$-safe spanning
arborescences. An $r$-arborescence $X=(V+r,A)$ is called {\it
  $k$-safe} if $n-|V(B_X^v)|\geq k$ for all $v\in
V$. Notice that it is enough that only the
  out-neighbours of $r$ satisfy this latter condition for $X$ to be
  $k$-safe. They proved the following negative result showing that in
general not even the problem of finding a {\sl single} $k$-safe spanning $r$-arborescence
is tractable.
\begin{theorem}[Bang-Jensen and Yeo~\cite{balanced}]
Given a rooted digraph $D=(V+r,A)$,  deciding whether
$D$ has an $(n-k)$-safe spanning $r$-arborescence is \NP-complete for any fixed $k
\geq 3$.
\end{theorem}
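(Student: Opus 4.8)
Membership in \NP is clear: a spanning $r$-arborescence $X$ of $D$ is a certificate of size $O(n)$, and, as observed above, $X$ is $(n-k)$-safe if and only if $|V(B_X^v)|\le k$ for every out-neighbour $v$ of $r$, which can be checked in linear time by a single traversal of $X$.

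For \NP-hardness the plan is to reduce from \textsc{3-Sat} (a related problem such as \textsc{3-Dimensional Matching} would serve equally well). The guiding reformulation is that an $r$-arborescence $X$ of $D$ is $(n-k)$-safe exactly when $X-r$ is a spanning out-forest of $D-r$ each of whose trees has at most $k$ vertices and is rooted at an out-neighbour of $r$ in $D$; so the question asks whether $V$ can be \emph{tiled} by ``clusters'' of size at most $k$, each inducing an out-tree rooted at some ``portal'' (an out-neighbour of $r$). Any hardness proof must exploit the interplay between clusters of all allowed sizes: if $r$ were adjacent to every vertex, the all-singleton tiling would trivially work, so the construction will make $r$ adjacent only to carefully chosen portals and force many vertices to lie at depth between $1$ and $k-1$ inside their cluster, turning ``admits a valid tiling'' into a rigid condition that we then program to encode satisfiability.

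I would first settle the base case $k=3$, where every cluster is a single vertex, an arc, or a $3$-vertex out-path or out-star rooted at a portal; a routine padding argument then lifts the result to every fixed $k\ge 3$. Each variable $x$ gets a \emph{choice gadget}: a path carrying suitably placed portals and a few additional arcs, whose only tilings by clusters of $\le 3$ vertices split into two families, one per truth value of $x$, with the chosen value propagating along the whole path. For every occurrence of a literal of $x$ in a clause, the gadget carries a pendant \emph{occurrence vertex}, which in the tiling family where that literal is \emph{false} is forced into a cluster of the variable gadget, while in the other family it is left \emph{available}. Each clause $C$ gets a portal $q_C$ whose cluster is budgeted to absorb exactly one of the three occurrence vertices of $C$; since an all-false clause has all three of its occurrence vertices already covered by variable gadgets, such a clause cannot be tiled, so a valid tiling exists only if every clause has an available (i.e.\ true) literal, and for clauses with more than one true literal the variable gadgets also absorb the surplus available occurrence vertices. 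Conversely, a satisfying assignment prescribes the tiling family of each choice gadget, which is then seen to assemble into a valid tiling, i.e.\ an $(n-k)$-safe spanning $r$-arborescence.

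The hard part will be making the choice and occurrence gadgets simultaneously \emph{restrictive} — no tiling may mix the two truth-value families along a choice path, and no all-false clause may be tiled — and \emph{permissive} — every satisfying assignment must extend to a genuine spanning $(n-k)$-safe $r$-arborescence with no vertex left accidentally uncoverable, the delicate point being clauses with two or three true literals, whose surplus available occurrence vertices must always retain a fall-back coverer. Determining the exact arcs and portal positions that give each choice path precisely the two intended tiling families, and then verifying that the local gadgets glue into a globally consistent instance, is where essentially all of the work lies; the narrowness of three-vertex clusters in the base case $k=3$ is what makes this the most technical step.
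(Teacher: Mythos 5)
First, a point of reference: the paper does not prove this statement at all --- it is quoted from Bang-Jensen and Yeo~\cite{balanced} --- so there is no in-paper proof to compare yours against; the comparison below is with what a complete proof would require. Your \NP-membership argument and your reformulation are both correct: since $X$ is $(n-k)$-safe iff $|V(B_X^v)|\le k$ for every out-neighbour $v$ of $r$, the question is exactly whether $V$ can be partitioned into out-trees of at most $k$ vertices, each rooted at an out-neighbour of $r$ and reachable within its part. The choice of \textsc{3-Sat} as the source problem and the variable-path/clause-portal architecture are also entirely plausible for a problem of this tiling type.

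The gap is that beyond this point the proposal is a plan rather than a proof. The entire content of such a hardness result lives in the gadget construction and its verification, and you explicitly defer both: no arcs or portal positions are given for the choice gadget, the claim that its only tilings form exactly two families is not established, and the ``restrictive versus permissive'' tension you correctly identify is named but not resolved. Two places where the plan itself is shakiest. (1) You assert that a clause portal $q_C$ can be ``budgeted to absorb exactly one'' occurrence vertex, but with clusters of size up to $3$ a portal plus two occurrence vertices is a legal cluster unless additional structure forbids it; and the converse direction --- clauses with two or three true literals, whose surplus available occurrence vertices must still be coverable --- is precisely the case you admit is delicate and do not handle. (2) The ``routine padding argument'' from $k=3$ to general $k\ge 3$ is not routine in the direction needed: raising the bound from $3$ to $k$ only \emph{relaxes} the constraint, so the same instance does not work, and naive paddings (e.g.\ hanging a pendant out-path of length $k-3$ from every vertex) overshoot the budget for size-$3$ clusters; the padding must force each cluster to consume its slack exactly, which is itself a small gadget with its own correctness argument. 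As it stands, the proposal establishes membership in \NP and a reasonable strategy, but not \NP-hardness.
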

 In this light, a characterization in the shape of
 \autoref{basic1} clearly seems out of reach. It remains nevertheless
 interesting to investigate the possibility of finding arc-disjoint
 $k$-safe spanning $r$-arborescences for small values of $k$. This question has
 been dealt with by Bang-Jensen, Havet, and Yeo~\cite{bhy}. On the negative side, they implicitly proved the following result,
which shows that there is little hope to algorithmically find
arc-disjoint $k(n)$-safe spanning $r$-arborescences if $k$ is a function that does
not grow too slowly. While a polynomial-time algorithm for the problem
they consider would not imply $\P=\NP$, it would imply the failure of the\emph{ Exponential Time Hypothesis} (\ETH for short) of Impagliazzo and Paturi~\cite{ETH}, stating that there is an $\varepsilon > 0$ such that there is no algorithm for solving a 3-{\sc Sat} formula
with $\ell$ variables and $m$ clauses in time $2^{\varepsilon \ell} \cdot (\ell+m)^{O(1)}$.

\begin{theorem}[Bang-Jensen, Havet, and Yeo~\cite{bhy}]
  \label{theo:lower-arbo}
Suppose that the \ETH holds, let $\varepsilon >0$ be arbitrary and let
$k:\mathbb{Z}_{\geq 0} \rightarrow \mathbb{Z}_{\geq 0}$ be a function
such that $(\log(n))^{1+\varepsilon}\leq k(n) \leq \frac{n}{2}$ for all
$n>0$. Further, suppose that there exists a constant $C^*$ such that
for all $c \geq C^*$ there exists an $n$ such that $k(n)=c$. Then
there is no algorithm running in time $n^{O(1)}$ for deciding whether a
given rooted digraph $D=(V+r,A)$ has two arc-disjoint $k(n)$-safe spanning
$r$-arborescences.
\end{theorem}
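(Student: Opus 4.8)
The plan is to prove \autoref{theo:lower-arbo} by a polynomial-time ``core'' reduction from $3$-\textsc{Sat}, followed by a padding step that inflates the number $n$ of vertices so that the forced safety value $k(n)$ lands exactly on the threshold of the core instance, while keeping $n$ subexponential in the size of the formula; a polynomial-time algorithm would then solve $3$-\textsc{Sat} in subexponential time, contradicting the \ETH. (We may assume $k$ is polynomial-time computable, for otherwise no polynomial-time algorithm exists --- one could use it to compute $k(n)$ by a binary search over threshold instances --- and the statement is trivial.) For the core, I would construct, in time polynomial in $\ell+m$, from a $3$-\textsc{Sat} formula $\phi$ with $\ell$ variables and $m$ clauses a rooted digraph $C_\phi$ on $N_0=\Theta(\ell+m)$ vertices, together with an integer $a_0=\Theta(\ell+m)$ with $a_0\le N_0/2$ and a vertex $p$ that is an out-neighbour of $r$ in \emph{every} spanning $r$-arborescence of $C_\phi$, such that: (i) every spanning $r$-arborescence $X$ of $C_\phi$ satisfies $|V(B_X^v)|\le a_0$ for all out-neighbours $v\ne p$ of $r$; and (ii) $C_\phi$ has two arc-disjoint spanning $r$-arborescences $X_1,X_2$ with $|V(B_{X_i}^p)|\le a_0$ for $i=1,2$ if and only if $\phi$ is satisfiable. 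In words, the satisfiability of $\phi$ is encoded precisely as the question of whether the single ``large'' subarborescence $B^p$ can be kept of size at most $a_0$ in two arc-disjoint spanning $r$-arborescences, all other root-subarborescences being automatically small. This is, in essence, the construction implicit in the work of Bang-Jensen, Havet, and Yeo~\cite{bhy}, with the two arc-disjoint arborescences controlled simultaneously in the spirit of \autoref{basic1}.

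Set $\kappa:=N_0-a_0=\Theta(\ell+m)$; note $\kappa\ge a_0$ because $a_0\le N_0/2$. When $\ell+m$ is large enough we have $\kappa\ge C^*$ (small $\phi$ being handled by brute force), so the hypothesis on $k$ yields an integer $n^*$ with $k(n^*)=\kappa$. From $\kappa=k(n^*)\le n^*/2$ we obtain $n^*\ge 2\kappa\ge\kappa+a_0=N_0$, and from $\kappa=k(n^*)\ge(\log n^*)^{1+\varepsilon}$ we obtain $n^*\le 2^{\kappa^{1/(1+\varepsilon)}}=2^{o(\ell+m)}$. Now let $D_\phi$ be obtained from $C_\phi$ by appending at $p$ a directed path on $n^*-N_0$ new vertices in which every arc (including the one leaving $p$) has multiplicity $2$. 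Since each new vertex has in-degree $2$ with both in-arcs coming from its predecessor on the path, this path is forced in every spanning $r$-arborescence of $D_\phi$ (each of two arc-disjoint such arborescences uses one copy of each of its arcs); deleting the new vertices turns any spanning $r$-arborescence of $D_\phi$ into one of $C_\phi$, and every spanning $r$-arborescence of $C_\phi$ extends back uniquely. Hence $|V(D_\phi)|=n^*$, and pairs of arc-disjoint spanning $r$-arborescences of $D_\phi$ correspond to pairs of arc-disjoint spanning $r$-arborescences of $C_\phi$ in which $|V(B^p)|$ is raised by exactly $n^*-N_0$ and all other root-subarborescences are unchanged.

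It remains to put the pieces together. In $D_\phi$, being $k(n^*)$-safe means that every root-subarborescence has size at most $n^*-k(n^*)=n^*-\kappa=a_0+(n^*-N_0)$; by (i) all root-subarborescences other than the one at $p$ meet this bound for free, while the one at $p$ meets it if and only if the corresponding $B^p$ in $C_\phi$ has size at most $a_0$. By (ii), this means $D_\phi$ has two arc-disjoint $k(n^*)$-safe spanning $r$-arborescences if and only if $\phi$ is satisfiable. Therefore a hypothetical $n^{O(1)}$-time algorithm for the problem, run on $D_\phi$, would decide the satisfiability of $\phi$ in time $(n^*)^{O(1)}=2^{o(\ell+m)}$ --- the value $n^*$ being located by scanning $n=1,2,\dots$ and evaluating $k(n)$, which also costs $2^{o(\ell+m)}$ since one may stop at some $n^*\le 2^{o(\ell+m)}$. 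Invoking the Sparsification Lemma to assume $m=O(\ell)$, this is a $2^{o(\ell)}$-time algorithm for $3$-\textsc{Sat}, contradicting the \ETH.

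Given the core reduction, everything above is just bookkeeping with the displayed inequalities, so the real obstacle is the core reduction itself: a hardness reduction from $3$-\textsc{Sat} to finding \emph{two arc-disjoint} spanning $r$-arborescences --- where arc-disjointness couples the two arborescences --- whose only binding safety constraint is the size of one designated root-subarborescence, with a sharp threshold at $a_0=\Theta(\ell+m)$ and $a_0\le N_0/2$. Designing gadgets that pin down $|V(B^p)|$ while jointly controlling two arc-disjoint arborescences, and checking that the doubled padding path creates no spurious spanning arborescences and does not loosen the constraint, is the delicate part.
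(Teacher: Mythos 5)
The paper never proves this statement: \autoref{theo:lower-arbo} is imported verbatim from Bang-Jensen, Havet, and Yeo~\cite{bhy}, and the only reduction actually carried out in the paper is the one for the undirected analogue, \autoref{theo:lower-graph}, in \autoref{sec:proof-hard-trees}. Measured against that proof, your scaffolding is the right one and essentially identical in architecture: a polynomial-size $3$-\textsc{Sat} gadget in which satisfiability is equivalent to keeping one designated branch below a threshold $\Theta(\ell+m)$, a padding step that inflates $n$ until $k(n)$ lands exactly on that threshold (the paper pads with the independent set $Q$ of size $q=n-(k(n)+\ell+1)$ glued to $t$, you pad with a doubled directed path hanging from $p$; both are forced structures that shift the critical branch and the threshold by the same amount), and the final accounting $n^*\leq 2^{\kappa^{1/(1+\varepsilon)}}=2^{o(\ell+m)}$ combined with the Sparsification Lemma. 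Your bookkeeping ($n^*\geq 2\kappa\geq N_0$, the equivalence of the shifted safety condition, locating $n^*$ by scanning) is correct.

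The genuine gap is the one you name yourself: the core reduction. Properties (i) and (ii) of $C_\phi$ --- that every root-subarborescence other than $B^p$ is automatically small in \emph{every} spanning $r$-arborescence, and that two \emph{arc-disjoint} spanning $r$-arborescences both keeping $|V(B^p)|\leq a_0$ exist iff $\phi$ is satisfiable --- are precisely the content of the theorem, and you assert them rather than construct them. Note that the difficulty is not symmetric with the undirected case the paper does prove: there, hardness already holds for a \emph{single} $(r,k)$-safe spanning tree (the two-trees version follows by duplicating edges $p$ times), whereas for arborescences the coupling of the two arc-disjoint copies must be built into the gadget, since the theorem as stated is specifically about packing two. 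So while every step you wrote down is sound, the proposal is a correct reduction \emph{template} rather than a proof; without exhibiting $C_\phi$, $p$, and $a_0$ and verifying (i)--(ii), the statement is not established.
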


On the positive side, they show that the problem becomes tractable
when fixing the value of $k$. While several results considered in this
article hold for finding an {\sl arbitrary number} of disjoint objects,
we focus on the case where we want to find just {\sl two} of them in order to avoid technicalities. In \autoref{con} we discuss the generalization to more than two objects.
\begin{theorem}[Bang-Jensen, Havet, and Yeo~\cite{bhy}]
\label{safexp}
Deciding whether a given rooted digraph $D=(V+r,A)$
contains two arc-disjoint $k$-safe spanning $r$-arborescences is \XP with parameter~$k$.
\end{theorem}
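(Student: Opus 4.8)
The plan is to reduce the problem, by exhaustively guessing a vertex set of size $O(k)$, to polynomially many instances of a constrained arc-disjoint arborescence packing problem, each solvable in polynomial time for fixed $k$. First I would use the observation already recorded above: a spanning $r$-arborescence $X$ of $D$ is $k$-safe if and only if every component of $X-r$ (equivalently, every branch $B_X^v$ with $v\in N_X^+(r)$) has at most $n-k$ vertices. If $n\le 2k+2$, then $D$ has a number of vertices bounded by a function of $k$, and one can decide the problem directly, for instance by brute force over all pairs of spanning $r$-arborescences of $D$, of which there are at most $|A|^{2(n-1)}$, polynomial in the input for fixed $k$. So assume $n\ge 2k+3$ and put $m:=\lceil (n-1)/2\rceil$. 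Since the branches at $r$ are pairwise disjoint subsets of the $(n-1)$-element set $V$, at most one branch of $X$ can have more than $m$ vertices; moreover, by the choice of $n$, every branch of size at most $m$ already satisfies $m\le n-k$. Hence $X$ is $k$-safe if and only if $X$ has no branch of size exceeding $n-k$, and this constrains at most one branch of $X$.

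For each $i\in\{1,2\}$ I would then guess whether $X_i$ has a branch with more than $m$ vertices; if so, I additionally guess the child $u_i\in N_{X_i}^+(r)$ rooting it and a set $A_i\subseteq V$ of $k-1$ vertices lying outside $B_{X_i}^{u_i}$ (at least $k-1$ such vertices exist whenever this branch has at most $n-k$ vertices). This amounts to $n^{O(k)}$ guesses. For a fixed guess, the residual problem asks whether $D$ has arc-disjoint spanning $r$-arborescences $X_1,X_2$ such that, for each $i$: if a large branch was guessed for $i$, then $u_i\in N_{X_i}^+(r)$ and the component of $X_i-r$ containing $u_i$ has more than $m$ vertices and avoids $A_i$; and if no large branch was guessed for $i$, then every component of $X_i-r$ has at most $m$ vertices. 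In the first case that component is forced to be the unique branch of size more than $m$ and, being disjoint from $A_i$, has at most $n-k$ vertices, while all other branches have at most $m\le n-k$ vertices; in the second case every branch has at most $m\le n-k$ vertices. Thus any solution of the residual problem is a pair of arc-disjoint $k$-safe spanning $r$-arborescences, and conversely, guessing according to a hypothetical such pair (taking $u_i$ to root its unique large branch when present, and $A_i$ inside the complement of that branch) yields a solvable residual instance. The algorithm outputs \yes iff some guess admits a solution, which is correct, with running time $n^{O(k)}$ times the cost of solving the residual problem.

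It remains to solve the residual problem in polynomial time, and this is the main obstacle. The constraint ``$u_i$ is a child of $r$ and $B_{X_i}^{u_i}$ avoids $A_i$'' can be handled by forcing the arc $(r,u_i)$ into $X_i$ and forbidding the vertices of $A_i$ from being reachable from $u_i$ inside $X_i$, by appropriate local modifications of $D$; the genuinely delicate part is the size condition ``some (resp.\ every) component of $X_i-r$ has more than (resp.\ at most) $m$ vertices'', which is a balance-type requirement on the partition of $V$ induced by the children of $r$ and is not, on its face, a cut- or matroid-type constraint compatible with Edmonds' theorem (\autoref{basic1}). I would attack it by expressing it, possibly after further bounded guessing, as reachability and cut requirements, and feeding these into a packing routine for arc-disjoint arborescences with side constraints (for instance through submodular flows or matroid intersection). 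Making this precise, and thereby solving the constrained packing in polynomial time for fixed $k$, is where the technical work lies; by contrast, the branch-size characterisation of $k$-safety and the ``at most one large branch'' observation are routine.
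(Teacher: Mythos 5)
There is a genuine gap, and you have located it yourself: everything hinges on solving the ``residual problem'' in polynomial time, and you do not do so. Your preliminary reductions are sound --- the branch-size reformulation of $k$-safety, the fact that at most one branch of a spanning $r$-arborescence can exceed $m=\lceil (n-1)/2\rceil$ vertices, and the $n^{O(k)}$ guessing of $(u_i,A_i)$ are all correct --- but they only transfer the difficulty, they do not reduce it. In the ``no large branch'' case the residual instance asks for arc-disjoint spanning $r$-arborescences in which \emph{every} branch has at most $m$ vertices, which is a balance constraint at least as restrictive as $k$-safety itself; and there is no known way to express ``every component of $X_i-r$ has at most $m$ vertices'' as a cut-, reachability- or matroid-type side constraint compatible with Edmonds' theorem (\autoref{basic1}). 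Indeed, the hardness of finding even a single spanning $r$-arborescence with a uniform upper bound on branch sizes (the $(n-k)$-safety result of Bang-Jensen and Yeo quoted in the introduction) is a strong indication that your hoped-for submodular-flow or matroid-intersection routine does not exist in the generality you would need. So the proposal is not a proof.

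The actual argument (from the cited source, restated in this paper as \autoref{classksafe}) handles the balance constraint by an entirely different mechanism: one shows that $D$ has two arc-disjoint $k$-safe spanning $r$-arborescences if and only if it has an \emph{extendable} pair of arc-disjoint classic $(r,k)$-kernels, i.e.\ $k$-safe subarborescences on exactly $2k-2$ non-root vertices whose arc sets can each be removed while keeping $D$ root-connected. The point is that once the $2k-2$ vertices nearest the root are distributed so that each branch of the kernel has at most $k-1$ of them, the remaining $n-2k+1$ vertices can be attached greedily, always to a currently smallest branch, without ever violating $k$-safety; extendability guarantees the greedy growth never gets stuck. The XP algorithm is then simply: enumerate the $n^{O(k)}$ candidate kernel pairs, test extendability in polynomial time (\autoref{extend}), and extend. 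The idea you are missing is precisely this replacement of a global balance condition on the whole spanning structure by a local condition on a bounded-size substructure near the root that provably extends.
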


(See \autoref{prelim:pc}
 for the definition of the classes \XP and \FPT.) Our first contribution is to improve \autoref{safexp} by showing that the problem is \emph{fixed-parameter tractable} (\FPT).

\begin{theorem}\label{safefpt}
Deciding whether a given rooted digraph $D=(V+r,A)$ contains two
arc-disjoint $k$-safe spanning $r$-arborescences is \FPT with
parameter $k$. More precisely, it can be solved in time $2^{O(k^2 \cdot \log k)} \cdot n^{O(1)}$. Further, if they exist, the
two arc-disjoint $k$-safe spanning $r$-arborescences can be computed
within the same running time.
\end{theorem}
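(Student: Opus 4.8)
The plan is to reduce the problem to a bounded-size instance via the "main technical contribution" advertised in the abstract: showing that if $D$ contains two arc-disjoint $k$-safe spanning $r$-arborescences, then it contains a pair whose "interesting part" near the root has size and out-degree bounded by a polynomial (here quadratic) in $k$. Concretely, I would first analyze the structure of a single $k$-safe spanning $r$-arborescence $X$. The $k$-safety condition only needs to be checked at the out-neighbours of $r$: each child $u$ of $r$ must satisfy $|V(B_X^u)| \leq n-k$. The root therefore must have at least two children (unless $n \leq k$, a trivial case), and more precisely the subtrees hanging off $r$ partition $V$ into pieces, each of size at most $n-k$, equivalently the complement of each piece has size at least $k$. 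The key observation is that to certify $k$-safety we do not need to know the whole arborescence: it suffices to control, for each child $u$ of $r$, a set of $k$ vertices \emph{outside} $B_X^u$, and these can be absorbed into a small "core" of the arborescence.

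Next I would make this precise as a replacement/irrelevant-vertex argument. Suppose we have two arc-disjoint $k$-safe spanning $r$-arborescences $X_1, X_2$. Call a vertex $v$ \emph{branching} in $X_i$ if $d^+_{X_i}(v) \geq 2$, and let the \emph{skeleton} of $X_i$ be the minimal subtree spanning $r$, all branching vertices, and a carefully chosen set of $O(k)$ "witness" vertices per child of $r$ certifying safety. A counting argument (an arborescence on $N$ non-root vertices has at most $N-1$ arcs, so the number of branching vertices, weighted by excess out-degree, is bounded by the number of children of $r$) shows that if $r$ has many children then most of the subtrees $B_{X_i}^u$ are just directed paths contributing nothing to the skeleton except their first vertex. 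The goal is to show that $V$ decomposes into a part $W$ of size $O(k^2)$, incident to all the "structure" of both arborescences and to the root, together with a collection of long directed paths, and that the precise identity of the path-parts is irrelevant: any vertex deep inside such a path can be contracted/rerouted without affecting arc-disjointness or $k$-safety. This yields an equivalent instance where we must find two arc-disjoint $k$-safe spanning arborescences whose branching structure is confined to a set $W$ of size $2^{O(k^2)}$ — more accurately, of size polynomial in $k$ but with $2^{O(k^2\log k)}$ many candidate structures on it.

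Then the algorithm is: guess the set $W$ of size $O(k^2)$ around $r$ (this costs $n^{O(k^2)}$ naively, so instead I would guess only the \emph{combinatorial type} — how the arborescences branch on $W$, which child-subtree each of the $O(k)$ witnesses per child lies in, and how the at most $O(k^2)$ "attachment points" of the long paths are distributed among the children of $r$ in each arborescence; there are $2^{O(k^2 \log k)}$ such types), and for a fixed type, complete it to two full arc-disjoint spanning arborescences of $D$ using a polynomial-time subroutine. The completion step is where Edmonds-type machinery re-enters: once the behaviour near $r$ is fixed, the remaining requirement is purely "two arc-disjoint spanning arborescences rooted appropriately in the contracted graph", solvable by a matroid-intersection / submodular-flow algorithm in polynomial time, and one must check the guessed safety witnesses survive, which they do by construction.

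The main obstacle I expect is the structural lemma — proving that one may always assume the two arc-disjoint $k$-safe arborescences have their branching vertices and safety witnesses confined to a set of size $O(k^2)$, with the rest being long paths whose interiors are interchangeable. The subtlety is that arc-disjointness couples the two arborescences: a vertex that is an irrelevant path-interior vertex of $X_1$ might be a branching vertex of $X_2$, and rerouting for one may create conflicts for the other. Handling this requires a simultaneous analysis of both arborescences, counting branching vertices of $X_1$ and $X_2$ together (still $O(k)$ each, so $O(k)$ total "heavy" vertices), and a careful exchange argument showing that a vertex far from $r$ in \emph{both} arborescences, lying on a path in both with no incident arcs used for branching in either, can be bypassed. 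The quadratic rather than linear bound in $k$ presumably comes from needing $k$ witnesses for each of up to $O(k)$ relevant children of $r$ across the two arborescences.
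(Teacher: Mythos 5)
There is a genuine gap, and it sits exactly at the step you flag as the ``main obstacle.'' Your structural lemma is built on the counting claim that the branching vertices of an arborescence, weighted by excess out-degree, are bounded by the number of children of $r$. This is false: in an $r$-arborescence the total excess out-degree $\sum_{v:\,d^+(v)\ge 1}(d^+(v)-1)$ equals the number of leaves minus one, which can be $\Theta(n)$ independently of $k$. A $k$-safe spanning arborescence only constrains the sizes of the subtrees hanging off the children of $r$; inside those subtrees it can branch arbitrarily. Consequently the proposed decomposition of $V$ into an $O(k^2)$-size core $W$ plus interchangeable long directed paths simply does not hold for the objects you are looking for, and no rerouting argument can impose it in general (consider a digraph whose only spanning arborescences are highly branched everywhere). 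A second, independent problem is that even granting the decomposition, guessing only the ``combinatorial type'' of $W$ does not tell you \emph{which} vertices of $D$ realize it, and finding an embedding of a guessed type is essentially the original search problem; your own estimate of $n^{O(k^2)}$ for guessing $W$ is the honest cost of this step, which is \XP, not \FPT.

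The paper takes a different and simpler route that avoids controlling the spanning arborescences globally. First, by a known lemma of Bang-Jensen, Havet, and Yeo (\autoref{classksafe}), it suffices to find an \emph{extendable} pair of arc-disjoint $k$-safe subarborescences on exactly $2k-2$ non-root vertices (``classic $(r,k)$-kernels''); extendability (each $D-A_i$ remains root-connected) guarantees a polynomial-time completion to full $k$-safe spanning arborescences, so nothing about the rest of the arborescence needs to be guessed. This already gives \XP. The new ingredient (\autoref{compactsafe}) is that one may additionally insist that every vertex of the kernel with out-degree at least $6k-5$ in $D$ be a \emph{sink} of the kernel: a greedy exchange argument, using the bound of \autoref{utfy} on critical arcs, shows a large-out-degree vertex always has enough unobstructed out-neighbours to regrow its pruned branch without destroying arc-disjointness or extendability. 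This degree condition is what localizes the search: every kernel vertex is then reachable from $r$ by a path of length at most $k-1$ through small-out-degree vertices, a set of size at most $(6k)^k$, so all candidate pairs can be enumerated in time $2^{O(k^2\log k)}$. If you want to salvage your write-up, the piece to import is this pair of ideas: certify with a bounded-size rooted fragment rather than the whole arborescence, and prune at high-out-degree vertices so the fragment lives in an $f(k)$-size neighbourhood of $r$.
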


\medskip

\noindent \textbf{Packing $(r,k)$-flow branchings}. The second structure we consider, which was introduced by Bang-Jensen and Bessy~\cite{bb}, builds a connection between the
theory of finding arc-disjoint spanning arborescences and flow problems. A {\it flow} in a digraph $D=(V,A)$ is a function ${\mathbf z}:A \rightarrow \mathbb{Z}_{\geq
  0}$. Given a rooted
digraph $X=(V+r,A)$ and a capacity function ${\mathbf c}:A \rightarrow
\mathbb{Z}_{\geq 0}$, an {\it $(r,{\mathbf c})$-branching flow} is a flow ${\mathbf z}:A
\rightarrow \mathbb{Z}_{\geq 0}$ such that ${\mathbf z}(a)\leq {\mathbf c}(a)$ for all
$a\in A$, ${\mathbf z}(\delta^+(r))-{\mathbf z}(\delta^-(r))=n-1$ and
${\mathbf z}(\delta^-(v))-{\mathbf z}(\delta^+(v))=1$ for all $v \in V$. If $X$ admits an
$(r,{\mathbf c})$-branching flow, we say that $D$ is an {\it $(r,{\mathbf c})$-flow
  branching}. If for some positive integer $k$, we have ${\mathbf c}(a)=n-k$ for
all $a \in A$, we speak of an {\it $(r,k)$-branching flow} and an {\it
  $(r,k)$-flow branching}. Given a digraph $D$, an $(r,k)$-branching flow $X$ that is a subdigraph of $D$ is {\it spanning} in $D$ if it has the same vertex set as $D$. If $D$ admits a $k$-safe spanning
  $r$-arborescence, then it is easy to see that $D$ is an $(r,k)$-flow
  branching, but the converse is not necessarily true, as pointed out in~\cite{bhy}.

Bang-Jensen and Bessy~\cite{bb} consider the problem of finding arc-disjoint
spanning flow branchings in a rooted digraph. Among others, they show the
following negative result that makes a characterization in the shape
of  \autoref{basic1} seem out of reach.
\begin{theorem}[Bang-Jensen and Bessy~\cite{bb}]
Given a rooted digraph $D=(V+r,A)$ and a capacity function ${\mathbf c}:A
\rightarrow \mathbb{Z}_{\geq 0}$, it is \NP-complete to decide whether
$D$ contains two arc-disjoint spanning $(r,{\mathbf c})$-flow branchings even if ${\mathbf c}(a)\in
\{1,2\}$ for all $a \in A$.
\end{theorem}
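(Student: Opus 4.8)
\emph{Membership in \NP.}
A certificate consists of two arc-disjoint spanning subdigraphs $X_1,X_2$ of $D$. Arc-disjointness and the spanning property are checked directly, so it suffices to verify in polynomial time that a given spanning subdigraph $X=(V+r,A')$ is an $(r,{\mathbf c})$-flow branching. This is a feasible-flow question: add a new vertex $z$ and, for every $v\in V$, an arc $vz$ of capacity $1$; keeping capacities ${\mathbf c}$ on the arcs of $A'$, the subdigraph $X$ is an $(r,{\mathbf c})$-flow branching if and only if a maximum $(r,z)$-flow in this network has value $n-1$, which a single maximum-flow computation decides. Equivalently, in the spirit of \autoref{basic1}, $X$ is an $(r,{\mathbf c})$-flow branching if and only if ${\mathbf c}(\delta_X^-(Y))\geq |Y|$ for every $\emptyset\neq Y\subseteq V$. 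In particular, deciding the existence of a \emph{single} spanning $(r,{\mathbf c})$-flow branching is in \P, and the problem in the statement lies in \NP.

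\emph{The reduction.}
We reduce from \textsc{Not-All-Equal 3-Sat}, which is \NP-complete. Let $\varphi$ have variables $x_1,\dots,x_N$ and clauses $C_1,\dots,C_M$. We construct $D$, a root $r$, and ${\mathbf c}\colon A(D)\to\{1,2\}$ so that the two spanning flow branchings we look for will encode an assignment $\alpha$ and its complement $\overline\alpha$. For each clause $C_j$ we create a \emph{clause vertex} $w_j$, which will be a sink, with exactly three in-arcs, all of capacity $1$: one \emph{literal arc} for each literal $\ell$ of $C_j$, whose tail $u_{j,\ell}$ has the literal arc as its only out-arc. By flow conservation, a branching can send a unit along the literal arc of $\ell$ into $w_j$ only if its flow routes two units into $u_{j,\ell}$ --- one to be absorbed by $u_{j,\ell}$ and one to be forwarded. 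For each variable $x_i$ we attach a \emph{variable gadget}, built from arcs of capacity in $\{1,2\}$ (parallel arcs allowed), feeding into all tails $u_{j,\ell}$ with $\ell\in\{x_i,\overline{x_i}\}$ and designed so that, inside one branching, it can ``double-feed'' either all tails of $x_i$ or all tails of $\overline{x_i}$, but no mixture. Since the two branchings must be arc-disjoint, the variable gadget forces them to make opposite choices. The root is joined to the gadgets with enough capacity to supply the flow they absorb, but \emph{not} directly to any tail $u_{j,\ell}$, so that the only way to activate a literal arc is through the corresponding variable gadget; the remaining vertices are padding, reachable from $r$ and from the gadgets through arcs of capacity up to $2$, whose sole purpose is to make it possible for both branchings to be spanning. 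The construction has polynomial size and all capacities lie in $\{1,2\}$.

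\emph{Correctness.}
If $\varphi$ has an assignment $\alpha$ under which no clause is monochromatic, let $X_1$ be the branching in which every variable gadget double-feeds the polarity of $x_i$ prescribed by $\alpha$; then the literal arc of each $\alpha$-true literal can carry a unit into its clause vertex, and since $\alpha$ satisfies $C_j$ at least one of these arcs feeds $w_j$. Let $X_2$ be the complementary branching, which realizes $\overline\alpha$; because $\alpha$ is not-all-equal, $\overline\alpha$ also satisfies every clause, so $X_2$ feeds every $w_j$ as well, and for each $j$ the arc used by $X_1$ and the arc used by $X_2$ to feed $w_j$ are distinct since no literal is true under both $\alpha$ and $\overline\alpha$; extending both branchings through the padding yields the required pair. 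Conversely, given two arc-disjoint spanning $(r,{\mathbf c})$-flow branchings $X_1,X_2$, each variable gadget forces $X_i$ to commit to one polarity of each variable, giving assignments $\alpha_1$ and $\alpha_2=\overline{\alpha_1}$; since $X_i$ is spanning it feeds every clause vertex, and the literal arc it uses must be one made true by $\alpha_i$, so $\alpha_1$ satisfies every clause, and non-monochromatically because $\alpha_1=\overline{\alpha_2}$.

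\emph{Main obstacle.}
The delicate part is the gadget engineering only sketched above. Because a flow branching must be \emph{spanning}, the construction may not contain any vertex that one branching is allowed to skip; yet the capacity budget $\{1,2\}$ must be tight enough that, in a spanning branching flow, the only remaining freedom is the intended polarity choice at each variable gadget. Supplying two arc-disjoint routes by which \emph{both} branchings reach every tail $u_{j,\ell}$, every clause vertex, and every padding vertex --- while forbidding any shortcut that would let a branching feed a clause vertex without committing to a globally consistent assignment --- is where the difficulty concentrates, and is what makes the statement nontrivial.
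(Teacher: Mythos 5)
This theorem is quoted from Bang-Jensen and Bessy~\cite{bb}; the paper under review gives no proof of it, so your attempt can only be judged on its own merits. As it stands it is not a proof: the entire weight of the hardness direction rests on a ``variable gadget'' with the properties that (a) inside one branching flow it can double-feed all tails of one polarity of $x_i$ but no mixture, and (b) arc-disjointness forces the two branchings to make opposite choices, and this gadget is never constructed. Your own ``main obstacle'' paragraph concedes that the padding and routing needed to make both branchings spanning, while excluding shortcuts, is exactly where the difficulty lies. A reduction whose central gadget is unspecified cannot be checked, so the NP-hardness part is missing, not merely terse. (The NP-membership part via a single max-flow computation is fine and matches \autoref{classkern}.)

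Moreover, the clause gadget as you describe it is internally inconsistent. In a branching flow every vertex $v\in V$ must satisfy ${\mathbf z}(\delta^-(v))-{\mathbf z}(\delta^+(v))=1$ exactly, not at least. Your clause vertex $w_j$ is a sink, so it must receive \emph{exactly} one unit, i.e.\ exactly one literal arc of $C_j$ carries flow. But each tail $u_{j,\ell}$ has the literal arc as its \emph{only} out-arc, so whenever the gadget double-feeds $u_{j,\ell}$ (receives $2$, absorbs $1$), that tail is \emph{forced} to push a unit into $w_j$. If an assignment $\alpha$ makes two or three literals of $C_j$ true --- which is unavoidable for some clauses under a not-all-equal assignment --- then all of their tails are double-fed and $w_j$ receives more than one unit, so the intended branching flow does not exist and the forward direction of your correctness argument fails. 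Fixing this requires either giving the tails alternative out-arcs (which reopens the ``shortcut'' problem you flag) or redesigning the gadget so that the polarity choice does not rigidly double-feed every tail; either way, the construction has to be exhibited and verified before the claimed equivalence can be believed.
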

The above result has been strengthened in \cite{bhy}.
\begin{theorem}[Bang-Jensen, Havet, and Yeo~\cite{bhy}]
Given a rooted digraph $D=(V+r,A)$ and a fixed positive integer $k\geq 2$,
it is \NP-complete to decide whether $D$ contains two arc-disjoint spanning
$(r,n-k)$-flow branchings.
\end{theorem}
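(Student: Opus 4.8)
The plan is to reduce from the immediately preceding result, the \NP-hardness of deciding whether a rooted digraph equipped with a capacity function taking values in $\{1,2\}$ admits two arc-disjoint spanning $(r,\mathbf{c})$-flow branchings. Membership in \NP is clear: one guesses the two arc-disjoint spanning subdigraphs $X_1,X_2$ and checks in polynomial time, for each $i$, that $X_i$ admits an $(r,n-k)$-branching flow, which amounts to testing feasibility of an integral flow with prescribed vertex demands and is decidable by a single max-flow computation.

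For the hardness direction, the main tool is the Hoffman (max-flow/min-cut) feasibility criterion: a rooted digraph $X=(V+r,A)$ with capacities $\mathbf{c}$ admits an $(r,\mathbf{c})$-branching flow if and only if $\sum_{a\in\delta^-(Y)}\mathbf{c}(a)\ge|Y|$ for every nonempty $Y\subseteq V$; consequently, $D$ contains two arc-disjoint spanning $(r,\mathbf{c})$-flow branchings if and only if $A$ can be partitioned into two classes each of which satisfies this cut inequality. Starting from an instance $(D,\mathbf{c})$ with $\mathbf{c}(a)\in\{1,2\}$, I would build a digraph $D'$ on $n'$ vertices carrying the uniform capacity $n'-k$ so that the arcs inherited from $D$ behave, in every branching flow of $D'$, as though they still had their original small capacities while the new arcs never obstruct a cut. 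The construction would have two ingredients: (i) a large ``padding'' component attached to $r$ — a set of vertices each joined to $r$ by two parallel arcs — whose role is to inflate $n'$ enough that, for every set $Y$ relevant to the analysis, the inequality $\sum_{a\in\delta^-(Y)}(n'-k)\ge|Y|$ collapses to a purely combinatorial lower bound on $d^-(Y)$; and (ii) for each arc $a=uv$ of $D$ with capacity $c\in\{1,2\}$, a \emph{capacity-limiting gadget}, essentially a short directed path from $u$ to $v$ through auxiliary vertices together with a family of private ``drain'' vertices reachable only through the gadget, arranged (with suitably duplicated arcs, so that the gadget works for a pair of arc-disjoint branching flows and not just one) so that no branching flow of $D'$ can route more than $c$ units from $u$ to $v$ through it, and exactly $c$ units when needed.

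The correctness argument would then split into two simulation claims, both proved through the cut criterion. From a yes-instance of $(D,\mathbf{c})$ — a partition of $A$ into two classes each meeting the $\{1,2\}$-capacity cut condition — one distributes the gadget and padding arcs consistently with the chosen classes to obtain a partition of $A(D')$ into two classes each meeting the $(n'-k)$-capacity cut condition. Conversely, the limiting gadgets should force any $(r,n'-k)$-branching flow of a spanning subdigraph of $D'$ to project to an $(r,\mathbf{c})$-branching flow of a spanning subdigraph of $D$, so that two arc-disjoint spanning $(r,n'-k)$-flow branchings of $D'$ descend to two arc-disjoint spanning $(r,\mathbf{c})$-flow branchings of $D$.

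I expect the main obstacle to be the gadget itself and its coupling with the global size $n'$. Since the nominal capacity $n'-k$ grows with the total number of vertices, a gadget that caps an arc at $c$ by draining off the surplus seems to require on the order of $n'-k-c$ private vertices, which feeds back into $n'$; making the resulting system of size requirements simultaneously satisfiable, with $n'$ polynomial in the size of $(D,\mathbf{c})$, is the delicate point and may force a cleverer gadget than the naive ``drain the surplus'' one. A second, related difficulty is calibration: the auxiliary vertices of a gadget each demand one unit of flow, and the gadget must be built so that supplying them does not compel the inherited arc to carry flow it could not carry in $D$ (which would turn yes-instances into no-instances) while still genuinely capping that arc at $c$ (so that no spurious yes-instances arise), and all of this must survive the passage from one branching flow to an arc-disjoint pair. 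Should a direct gadget prove too rigid, a natural fallback is to first preprocess the $\{1,2\}$-capacity instance into an equivalent one whose cut condition is governed by only a bounded number of ``critical'' arcs, so that the capacity of only a few arcs actually needs to be reduced.
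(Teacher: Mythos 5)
There is a genuine gap, and it is located before the construction even starts: you have misread which problem you are reducing to. In this paper's notation, an $(r,\lambda)$-flow branching is one where every arc has capacity $n-\lambda$; hence an $(r,n-k)$-flow branching has uniform capacity $n-(n-k)=k$, a \emph{small fixed constant}, not the large value $n'-k$ your construction assumes. (This is the whole point of the theorem: it strengthens the Bang-Jensen--Bessy result from capacities in $\{1,2\}$ to the uniform constant capacity $k\geq 2$, in parallel with the $(n-k)$-safe arborescence hardness result stated just above it.) The problem you are actually targeting --- two arc-disjoint spanning flow branchings with uniform capacity $n'-k$ --- is precisely the packing of $(r,k)$-flow branchings, which \autoref{flowxp} (and indeed \autoref{flowfpt}) shows is polynomial-time solvable for fixed $k$; a correct reduction to it would therefore be impossible unless \PisNP. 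All of the difficulties you flag --- the padding component needed to make the cut inequality collapse, the drain gadgets of size on the order of $n'-k-c$, and the circular dependence of the gadget size on $n'$ --- are artifacts of this misreading and disappear in the correct formulation.

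With the capacity correctly read as the constant $k$, the reduction you want is much tamer: starting from the $\{1,2\}$-capacity instance, one must simulate a capacity-$1$ (resp.\ capacity-$2$) arc using arcs of uniform capacity $k$, which calls for a \emph{constant-size} local gadget (e.g.\ subdividing the arc and attaching pendant demand vertices that absorb exactly $k-1$ or $k-2$ units, duplicated appropriately so the gadget is consistent with an arc-disjoint pair), plus a routine verification via the cut criterion you correctly state. Even granting the corrected target, your write-up does not yet exhibit such a gadget or verify either simulation claim --- you explicitly defer "the delicate point" --- so the argument is a plan rather than a proof. Note also that the paper itself does not prove this statement; it is quoted from Bang-Jensen, Havet, and Yeo, so your attempt must stand on its own, and at present it does not.
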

On the other hand, the problem of finding arc-disjoint spanning $(r,k)$-flow
branchings for some small values of $k$ turns out to be more
tractable. The study of such spanning flow branchings has surprisingly many
similarities with the study of $k$-safe spanning arborescences.

On the negative side, the following result is proven in \cite{bhy}. It
shows that there is little hope to algorithmically find
arc-disjoint spanning $(r,k(n))$-flow branchings if $k$ is a function that does
not grow too slowly. It can be viewed as an analogue of \autoref{theo:lower-arbo} for $(r,k)$-flow branchings.


\begin{theorem}[Bang-Jensen, Havet, and Yeo~\cite{bhy}]
  \label{theo:lower-branching}
Suppose that \ETH holds, let $\varepsilon >0$ be arbitrary and let
$k:\mathbb{Z}_{\geq 0} \rightarrow \mathbb{Z}_{\geq 0}$ be a function
such that $(\log(n))^{1+\varepsilon}\leq k(n) \leq \frac{n}{2}$ for all
$n>0$. Further, suppose that there exists a constant $C^*$ such
that for all $c \geq C^*$ there exists an $n$ such that $k(n)=c$. Then
there is no algorithm running in time $n^{O(1)}$ for deciding whether a
given rooted digraph $D=(V+r,A)$ has two arc-disjoint spanning $(r,k(n))$-flow
branchings.
\end{theorem}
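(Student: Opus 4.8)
I would mirror the proof of the arborescence analogue \autoref{theo:lower-arbo}, combining a near-linear \NP-hardness reduction from $3$-{\sc Sat} with a padding argument that tunes the safety parameter to the prescribed function $k$. The starting point is a reduction that, given a $3$-{\sc Sat} formula $\phi$ with $\ell$ variables and $m$ clauses, produces in polynomial time a rooted digraph $D_\phi$ with $O(\ell+m)$ (or at most $(\ell+m)^{1+o(1)}$) vertices and a fixed global capacity $k_0 \ge 2$, such that $D_\phi$ admits two arc-disjoint spanning branching flows if and only if $\phi$ is satisfiable; such a reduction can be extracted from the \NP-completeness proof for a constant parameter stated above.

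The heart of the argument is a padding operation that, for a target value $c$ chosen in a suitable range, transforms $D_\phi$ into a rooted digraph $D_{\phi,c}$ on exactly $n$ vertices with global capacity $n-c$, equivalent to $\phi$ with respect to the existence of two arc-disjoint spanning $(r,c)$-flow branchings. The dummy vertices one adds have to serve three purposes at once: they bring the vertex count to a value $n$ for which $k(n)=c$ — which exists for $c \ge C^\ast$ by hypothesis, with enough room because $c=k(n)\le n/2$; they force each of the two branching flows to push enough ``dummy traffic'' through the arcs inherited from $D_\phi$ that exactly $k_0$ units of capacity stay free there, so that the large global capacity $n-c$ behaves like the constant $k_0$ on the relevant arcs; and they are transparent, so that a pair of arc-disjoint spanning branching flows of $D_{\phi,c}$ always restricts to one of $D_\phi$ and, conversely, any solution of $D_\phi$ extends to one of $D_{\phi,c}$. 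Suitable ingredients are pairs of parallel root-to-dummy arcs and short chains of dummy vertices arranged to ``absorb'' surplus capacity; establishing these three properties is the bulk of the work.

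It then remains to balance the parameters against \ETH. Given $\phi$, pick $c := (\ell+m)^{1+\varepsilon/2}$; then $c \ge C^\ast$ for $\phi$ large and $c = o\big((\ell+m)^{1+\varepsilon}\big)$, and since $D_\phi$ has $(\ell+m)^{1+o(1)} \le 2c \le n$ vertices the padding fits. From $(\log n)^{1+\varepsilon} \le k(n) = c$ we get $\log n \le c^{1/(1+\varepsilon)} = o(\ell+m)$, hence $n = 2^{o(\ell+m)}$. An algorithm deciding the existence of two arc-disjoint spanning $(r,k(n))$-flow branchings in time $n^{O(1)}$ would therefore decide the satisfiability of $\phi$ in time $2^{o(\ell+m)}$, which — via the sparsification lemma — contradicts \ETH.

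I expect the padding construction of the second paragraph to be the main obstacle: since the branching-flow relaxation is strictly more permissive than spanning arborescences (as recalled right after the definition of $(r,k)$-flow branchings), forcing the two flows into the intended routing, and proving that no other routing survives, is delicate; moreover the padding must be engineered so that its size lands \emph{exactly} on a vertex count $n$ with $k(n)=c$, not merely near one. Obtaining a sufficiently economical (near-linear) base reduction is a secondary technical point, and the parameter balancing is then routine.
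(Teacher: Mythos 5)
The theorem you are proving is imported from~\cite{bhy} and not reproved in this paper, but the paper's own proof of the undirected analogue (\autoref{theo:lower-graph}, in \autoref{sec:proof-hard-trees}) shows the intended strategy, and it is not the one you propose. There, the reduction from {\sc 3-Sat} is built \emph{directly} in the large-$k$ regime: $k$ is set to an explicit polynomial in $\ell$ and $m$ (namely $1+\ell+3\ell m/2+m$), the root has exactly two neighbours so that the safety/capacity constraint forces the solution to split $V$ into one part of size exactly $k$ and a complementary ``assignment path plus padding set $Q$'', and the padding set $Q$ only adjusts the vertex count so that $k(n)$ hits the prescribed value. Your parameter balancing at the end is essentially the same as the paper's and is fine. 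The divergence — and the gap — is your second paragraph.

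Concretely, the step that fails is the claim that one can pad a constant-capacity instance (two arc-disjoint $(r,n-k_0)$-flow branchings, i.e.\ capacity $k_0$ on every arc) so that, in the padded digraph with uniform capacity $n-c \geq n/2$, ``exactly $k_0$ units of capacity stay free'' on every arc inherited from $D_\phi$. By flow conservation and \autoref{decomp}, the flow on an arc equals the number of path-decomposed units routed through it, so tightening a single arc of $D_\phi$ to residual capacity $k_0$ requires routing roughly $n-c-k_0 = \Theta(n)$ dummy vertices through that specific arc. Since the arcs of $D_\phi$ do not lie on a common path, each would need its own population of $\Theta(n)$ downstream dummies, which both blows up $n$ multiplicatively and makes the residual capacities of different arcs interact; moreover, dummies hung below one arc slacken the constraint on parallel arcs. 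This forcing requirement is also in direct tension with your third requirement that the padding be ``transparent'' (solutions restrict and extend freely): a gadget that constrains the routing on $D_\phi$'s arcs cannot simultaneously be invisible to it. The two regimes — constant capacity versus capacity $n-k(n)\geq n/2$, where only the arcs near the root are binding — are genuinely different, and no padding argument bridging them is given or known. To repair the proof, drop the padding idea and instead adapt the construction of \autoref{theo:lower-graph} (orienting the gadget away from $r$, giving $r$ two out-arcs whose flows must each be at least $k-1$), which is what~\cite{bhy} does; the $(\log n)^{1+\varepsilon}$ threshold in the statement then comes from $k$ being near-linear rather than quadratic in $\ell+m$.
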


On the positive side, Bang-Jensen and Bessy~\cite{bb} showed that the case $k=1$ can be solved in polynomial time. This result was again generalized by
Bang-Jensen, Havet, and Yeo~\cite{bhy}, who proved that
 the problem can be solved in polynomial time for every fixed  value of $k \geq 1$. The following result  can be viewed as an analogue of \autoref{safexp} for $(r,k)$-flow branchings.

\begin{theorem}[Bang-Jensen, Havet, and Yeo~\cite{bhy}]
\label{flowxp}
Deciding whether a given rooted digraph $D=(V+r,A)$ contains two
arc-disjoint spanning $(r,k)$-flow branchings is \XP with
parameter~$k$.
\end{theorem}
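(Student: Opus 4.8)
The plan is to reduce the two-object problem to $n^{O(k)}$ polynomially-solvable subproblems via a clean cut-characterisation of a single spanning $(r,k)$-flow branching. For a fixed spanning subdigraph $D'$ of $D$, deciding whether $D'$ is an $(r,k)$-flow branching is exactly a flow-feasibility question: is there a flow in which $r$ is a source of supply $n-1$, every other vertex is a sink of demand $1$, and every arc has capacity $n-k$? By the max-flow--min-cut theorem (equivalently, a Hoffman-type feasibility criterion) such a flow exists if and only if $(n-k)\cdot d^-_{D'}(X)\ge |X|$ for every $\emptyset\neq X\subseteq V$; in particular, for a single given $D'$ this is decidable in polynomial time. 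This is the engine; the real work lies in choosing the two arc-disjoint subdigraphs.

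Next I would simplify this criterion when $n$ is large relative to $k$. If $n\le 2k$, then after keeping only a bounded-in-$k$ number of parallel copies of each arc (more are useless), the whole instance has size bounded in terms of $k$, and one decides it by brute force over all ways of partitioning the arc set into two parts plus discarded arcs. So assume $n>2k$. Then $\lceil |X|/(n-k)\rceil$ is $1$ for $|X|\le n-k$ and $2$ for $n-k<|X|\le n-1$, so the criterion becomes: (i) $d^-_{D'}(X)\ge 1$ for all nonempty $X\subseteq V$ -- which, by \autoref{basic1} in the case of a single arborescence, says exactly that $D'$ contains a spanning $r$-arborescence -- together with (ii) $d^-_{D'}(X)\ge 2$ for every $X$ with $|V\setminus X|\le k-2$, equivalently $d^+_{D'}(S)\ge 2$ for every root-containing set $S$ with $|S|\le k-1$. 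The sets in (ii) are few ($O(n^{k-2})$ of them) and well structured: if such an $S$ has $d^+_{D'}(S)\le 1$ then, after deleting its unique leaving arc, $r$ reaches at most $|S|\le k-1$ vertices, and comparing with any spanning $r$-arborescence $T\subseteq D'$ forces that arc to be the $T$-arc entering the top of a subtree of $T$ with more than $n-k$ vertices. Since such ``heavy'' subtrees have more than $n/2$ vertices they are pairwise nested, hence form a single chain of length at most $k-1$. Therefore a minimal spanning $(r,k)$-flow branching is a spanning $r$-arborescence together with at most $k-1$ ``repair'' arcs, one entering each heavy subtree of the chain from outside.

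This structure is what drives the \XP algorithm. For each $i\in\{1,2\}$, enumerate the root-local data of a would-be minimal solution $D_i$: the chain $C^i_1\subsetneq\cdots\subsetneq C^i_{t_i}$ of complements of its heavy subtrees (each $C^i_j\subseteq V+r$ with $|C^i_j|\le k-1$, so $t_i\le k-1$) together with the at most $k-1$ repair arcs. There are $n^{O(k^2)}$ guesses in total, and for each one condition (ii) is already decided. It then remains to decide whether $D$, with the guessed repair arcs removed, contains two arc-disjoint spanning $r$-arborescences $T_1,T_2$ whose heavy subtrees are exactly the guessed ones: for each $i,j$ the set $(V+r)\setminus C^i_j$ must be the vertex set of a subtree of $T_i$, and no further subtree of $T_i$ may have more than $n-k$ vertices. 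Contracting each (inclusion-wise maximal) prescribed set to a single vertex turns this into an ordinary ``pack two arc-disjoint spanning $r$-arborescences'' question, answered by \autoref{basic1} and the polynomial-time algorithms behind it; recursing inside the contracted sets reconstructs $T_1$ and $T_2$, and adding the repair arcs back produces the two flow branchings.

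The step I expect to be the main obstacle is the structural claim underpinning the enumeration -- that there is always a pair of solutions in which all the relevant tightness of each $D_i$ is concentrated on an $O(k)$-size nested family, and, crucially, that once these families and the repair arcs are fixed the two arborescence skeletons can always be completed arc-disjointly while respecting the prescribed subtree pattern. Proving this needs an uncrossing / splitting-off argument that controls the tight cuts of $D_1$ and of $D_2$ and their interaction simultaneously, not one branching at a time. The remaining ingredients -- bounding the parallel multiplicities, making the exponent in $n^{O(k)}$ explicit, and actually extracting the substructures -- are routine afterwards.
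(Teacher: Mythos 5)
First, a point of reference: \autoref{flowxp} is a cited result of Bang-Jensen, Havet, and Yeo, and the route taken there (restated in this paper as \autoref{classbranch}) is to guess, for each of the two desired objects, a \emph{classic $(r,k)$-core} --- a sub-flow-branching on $2k-1$ non-root vertices --- check arc-disjointness and extendability of the pair, and invoke a completion lemma. Your architecture is the same at the top level (enumerate an $n^{O(\mathrm{poly}(k))}$ family of bounded witnesses, then complete in polynomial time), but your witness (a nested chain of heavy-subtree complements plus $O(k)$ repair arcs per branching) and your completion step are different, and the completion step is exactly where the argument is incomplete.

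Two concrete gaps. (1) Your reduction of condition (ii) to ``one repair arc entering each heavy subtree'' is not the whole condition: the large sets $X$ with $d^-_{T}(X)=1$ are not only the vertex sets of heavy subtrees $B^{v}_{T}$ but also the large subsets of them obtained by deleting unions of small full subtrees, and a repair arc with head $w$ in $B^{v}_{T}$ enters none of the sets $X$ that avoid $w$ (nor those containing the arc's tail). So both the bound of $k-1$ repair arcs per branching and the claim that ``condition (ii) is already decided'' once the chain and repair arcs are guessed need an uncrossing argument you have not given. (2) The completion step ``contract the prescribed sets and apply \autoref{basic1}, then recurse'' does not work as stated: the prescribed subtree patterns of $T_1$ and $T_2$ differ, so a single contraction cannot serve both arborescences simultaneously; moreover, contraction neither forces the expanded $T_i$ to realize a prescribed set as an \emph{exact} subtree nor prevents $T_i$ from acquiring additional heavy subtrees (hence new tight cuts) that the guessed repair arcs do not cover. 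You flag this yourself as needing ``an uncrossing / splitting-off argument \ldots{} not one branching at a time''; that acknowledged step is precisely the analogue of \autoref{classbranch} and is the heart of the theorem, so what you have is a plausible plan rather than a proof. The cut characterization via max-flow--min-cut, the base case $n\le 2k$, and the nestedness of heavy subtrees are all fine.
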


The authors of \cite{bhy} ask whether the above problem is \FPT. Our second contribution is an affirmative answer to this question.

\begin{theorem}\label{flowfpt}
Deciding whether a given rooted digraph $D=(V+r,A)$ contains two
arc-disjoint spanning $(r,k)$-flow branchings is \FPT with
parameter $k$. More precisely, it can be solved in time $2^{O(k^2 \cdot \log k)} \cdot n^{O(1)}$. Further, if they exist, the two arc-disjoint spanning $(r,k)$-flow branchings can be computed within the same running time.
\end{theorem}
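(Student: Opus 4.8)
The plan is to reduce the existence of two arc-disjoint spanning $(r,k)$-flow branchings to the existence of two arc-disjoint ``small'' spanning flow branchings, where by small we mean that each branching uses only $O(k^2)$ arcs outside of a canonical core and has out-degrees bounded by a function of $k$; then a branching algorithm (à la color coding or bounded-search-tree) over the $n^{O(1)}$ candidate cores together with the $2^{O(k^2\log k)}$ choices of small parts finishes the job. More concretely, I would first recall the structural link between $(r,k)$-flow branchings and $k$-safe spanning $r$-arborescences established in \cite{bhy}: a digraph $D$ admits an $(r,k)$-branching flow if and only if $d_A^-(X)\ge 1$ for all $\emptyset\ne X\subseteq V$ and, in addition, every ``deficient'' set—one that is cut off by too few arcs relative to its size—has size at least $k$. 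The capacity constraint ${\mathbf z}(a)\le n-k$ is active only on arcs whose removal would isolate a set of size at most $k$ from the rest, so the combinatorial obstruction is governed by the small-side sets.

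The key step, which is the announced main technical contribution, is to prove a replacement lemma of the following shape: if $D$ contains two arc-disjoint spanning $(r,k)$-flow branchings, then it contains two arc-disjoint spanning $(r,k)$-flow branchings $X_1,X_2$ such that, after contracting a polynomial-time-computable ``bulk'' part, each $X_i$ has at most $f(k)=O(k^2)$ arcs and maximum out-degree at most $g(k)=O(k)$. I would prove this by an exchange/uncrossing argument: start with arbitrary arc-disjoint branchings and iteratively reroute flow to reduce the number of vertices that lie on short $r\leadsto v$ subpaths or that have large out-degree, using submodularity of the cut function $d_A^-(\cdot)$ to show that such rerouting preserves the defining inequalities (in particular keeps every deficient set of size $\ge k$), and using the capacity bound to cap the relevant subtree sizes. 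The fact that the same core can be used for both branchings simultaneously is where care is needed: the two rerouting steps must be done so as not to interfere on shared arcs, which I would handle by working in the ``doubled'' digraph that accounts for both branchings at once.

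Once the replacement lemma is in hand, the algorithm is routine: enumerate, in time $n^{O(1)}$, the canonical bulk (for instance, the vertex set reachable from $r$ by ``safe'' arcs, or the contraction determined by the iterative lamination of small deficient sets), then guess the at most $2f(k)=O(k^2)$ arcs used by $X_1\cup X_2$ outside the bulk together with their partition into $X_1$ and $X_2$ and their flow values (each value in $[0,n-k]$, but only $O(k^2)$ of them and structured so that $2^{O(k^2\log k)}$ guesses suffice), and for each guess check in polynomial time—via Edmonds' theorem \autoref{basic1} applied to the contracted instance, or a max-flow feasibility test—whether the guess extends to a pair of arc-disjoint spanning flow branchings of $D$. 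Correctness follows from the replacement lemma; the running time is $2^{O(k^2\log k)}\cdot n^{O(1)}$ as claimed; and the branchings themselves are recovered from the successful guess. I expect the main obstacle to be the replacement lemma, specifically controlling the interaction between the two branchings during rerouting while keeping both the size bound $O(k^2)$ and the out-degree bound $O(k)$; I would likely first establish the weaker statement of a single small flow branching and then bootstrap to the pair by a product/doubling construction, mirroring whatever argument is used in the proof of \autoref{safefpt} for arborescences.
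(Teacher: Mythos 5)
Your high-level strategy---replace the $n^{O(k)}$-enumerable certificate of the \XP algorithm by a certificate of size \emph{and} out-degree bounded in $k$, enumerate, then extend---is indeed the paper's strategy. But two of the three load-bearing steps are missing or would not work as sketched. First, the FPT enumeration is asserted, not proved: bounding the out-degrees of the certificate does not by itself reduce the number of candidates from $n^{O(k)}$ to $f(k)\cdot n^{O(1)}$. The paper's argument is that every vertex of a compact $(r,k)$-core is reachable from $r$ by a directed path of length at most $2k-1$ all of whose internal vertices are \emph{small} (out-degree at most $20k^2$), so the pool of candidate vertices has size at most $(20k^2)^{2k}$, independent of $n$; only then does choosing $2k-1$ vertices and their arcs give $2^{O(k^2\log k)}$ candidates. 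Your ``canonical bulk'' plus ``$O(k^2)$ arcs outside it'' never identifies such a bounded pool, and without it the enumeration stays \XP. Relatedly, your step of guessing the flow values of the guessed arcs is both unnecessary and problematic: each value ranges over $[0,n-k]$, so it cannot be enumerated in $f(k)$ time; the paper never guesses flow values---it only guesses a subdigraph and checks in polynomial time (Lemma~\ref{classkern}) whether suitable completions are flow branchings.

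Second, the ``replacement lemma'' is the main technical content and your proof sketch does not match a workable argument. The paper does not reroute flow inside the two spanning branchings or contract anything. It starts from the \XP certificate of \cite{bhy} (Lemma~\ref{classbranch}): an extendable pair of arc-disjoint \emph{classic} $(r,k)$-cores, i.e., flow branchings on exactly $2k-1$ non-root vertices. It obtains the compact certificate by deleting all out-arcs of large vertices and keeping the root-connected part; the converse direction is a greedy re-completion: a large vertex $v$ has at least $20k^2+1$ out-neighbours, of which at most $2k$ can be blocked because they lie in the other core and at most $16k^2$ can be blocked because the corresponding arc is critical for root-connectivity of $D-A_1'$ (this is where submodularity enters, via Lemma~\ref{utfy}, and it is the only place). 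The interaction between the two branchings is handled by maintaining the extendability invariant (\textbf{(iv)}) throughout the greedy process, not by a ``doubled digraph''; your sketch names submodularity and degree reduction but never supplies the counting argument that makes a large vertex's pruned children re-attachable, which is exactly the point of the $20k^2+1$ threshold. Also note a technical wrinkle you would hit: to bound $|A_1'|$ by $16k^2$ one needs the cores to be triple-free (Lemma~\ref{not3}), which requires an arc-minimality argument via the path/cycle decomposition of branching flows; nothing in your sketch controls arc multiplicities.
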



\medskip

\noindent \textbf{Packing $(r,k)$-safe spanning trees}. Finally, we consider a similar problem in undirected graphs that has
also been introduced in \cite{bhy}. Given a graph $G=(V,E)$, a {\it
  spanning tree} is a subgraph $T$ of $G$ that is a tree with
$V(T)=V$. The theory of finding edge-disjoint spanning trees is also
pretty rich. The most fundamental result is the following one.

\begin{theorem}[Tutte~\cite{t}]
\label{tutte}
Let $G=(V,E)$ be a graph and $k$ a positive integer. Then $G$ has $k$
edge-disjoint spanning trees if and only if $\sum_{i=1}^qd_G(V_i)\geq
2k(q-1)$ for every partition $\{V_1,\ldots,V_q\}$ of $V$.
\end{theorem}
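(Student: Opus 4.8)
The plan is to prove the two implications separately: the forward direction by a short contraction argument, and the backward direction by invoking the matroid union theorem of Nash-Williams and Edmonds and translating its condition into the language of partitions. This mirrors the situation of \autoref{basic1}, where necessity is immediate and the real content lies in the sufficiency direction.

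For \textbf{necessity}, suppose $G$ has $k$ edge-disjoint spanning trees $T_1,\dots,T_k$ and fix a partition $\{V_1,\dots,V_q\}$ of $V$; call an edge \emph{crossing} if its two endpoints lie in distinct parts. Contracting each $V_i$ to a single vertex turns each $T_j$ into a connected multigraph on $q$ vertices (connected because $T_j$ is connected and spanning), whose non-loop edges are exactly the crossing edges of $T_j$; hence $T_j$ has at least $q-1$ crossing edges, and edge-disjointness yields at least $k(q-1)$ crossing edges in $G$. Since every crossing edge is counted in $d_G(V_i)$ for exactly the two parts it meets, $\sum_{i=1}^q d_G(V_i)$ equals twice the number of crossing edges and is therefore at least $2k(q-1)$.

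For \textbf{sufficiency}, I would first apply the hypothesis to a bipartition $\{X,V\setminus X\}$ to obtain $d_G(X)\geq k\geq 1$, so that $G$ is connected; then the cycle matroid $M(G)$ of $G$ has rank $n-1$ and rank function $r(F)=n-c(V,F)$, where $c(V,F)$ denotes the number of connected components of the spanning subgraph $(V,F)$. Spanning trees of $G$ are precisely the bases of $M(G)$, so it suffices to produce $k$ pairwise disjoint bases. By the matroid union theorem, $E$ can be covered by $k$ independent sets of $M(G)$ whose sizes sum to $k(n-1)$ --- which, since each independent set has at most $n-1$ elements, forces these sets to be $k$ disjoint bases, that is, $k$ edge-disjoint spanning trees --- if and only if $|E\setminus F|+k\,r(F)\geq k(n-1)$ for every $F\subseteq E$, equivalently $|E\setminus F|\geq k(c(V,F)-1)$ for every $F\subseteq E$. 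To check this last condition I would fix $F$ and let $W_1,\dots,W_p$ be the vertex sets of the connected components of $(V,F)$, so that $p=c(V,F)$ and $\{W_1,\dots,W_p\}$ is a partition of $V$; every edge of $F$ lies inside some $W_j$, so all edges of $G$ between distinct parts lie in $E\setminus F$, and hence $|E\setminus F|\geq \tfrac12\sum_{j=1}^p d_G(W_j)\geq k(p-1)$, the last step being the hypothesis applied to $\{W_1,\dots,W_p\}$.

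I expect the only genuine obstacle to be the sufficiency direction, and within it the matroid union theorem itself: the reduction above is routine bookkeeping, but a fully self-contained argument would still have to establish matroid union --- or, more directly, its graphic special case --- typically via an augmenting/exchange argument on a maximum-size family of $k$ edge-disjoint forests, which is where the technical work would actually reside.
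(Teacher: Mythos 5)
The paper does not prove \autoref{tutte} at all: it is quoted as a classical result of Tutte (and Nash-Williams), with a pointer to the literature for an algorithmic proof. So there is no in-paper argument to compare against; what can be assessed is only the internal correctness of your proposal, and it is sound. The necessity direction via contraction and double counting of crossing edges is exactly right. The sufficiency direction is the standard modern derivation from the matroid union (base packing) theorem, and your translation of the rank condition $|E\setminus F|+k\,r(F)\geq k(n-1)$ into the partition condition, via the components of $(V,F)$, is correct. Two small remarks. First, the phrase ``$E$ can be covered by $k$ independent sets whose sizes sum to $k(n-1)$'' is a misstatement: disjoint sets covering $E$ would have sizes summing to $|E|$. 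What matroid union actually gives is that the maximum size of a subset of $E$ partitionable into $k$ independent sets equals $\min_F\bigl(|E\setminus F|+k\,r(F)\bigr)$; when this is at least $k(n-1)$ you get $k$ pairwise disjoint independent sets of total size $k(n-1)$, hence $k$ disjoint bases, which is the conclusion you then use, so the slip does not propagate. Second, you correctly flag that the matroid union theorem is doing all the heavy lifting and is itself unproved here; since the statement being established is a deep classical theorem, that reliance is acceptable for this context, but it should be stated as an explicit external input rather than folded into the proof.
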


An algorithmic proof of \autoref{tutte}, yielding a polynomial-time algorithm, can be found in
\cite{book}. Again, we may wish to also find edge-disjoint
spanning trees satisfying certain extra properties. Given a rooted
tree $T=(V+r,E)$ and some $v \in V$, we use {\boldmath $C_T^v$} for
the subgraph of $T-v$ that arises from deleting the component of $T-v$
containg $r$. We say that $T$ is {\it $(r,k)$-safe} if for every $v
\in V$, we have $|V-V(C_T^v)|\geq k$.

\begin{theorem}[Bang-Jensen, Havet, and Yeo~\cite{bhy}]
\label{treexp}
Deciding whether a given rooted graph $G=(V+r,E)$
contains two edge-disjoint $(r,k)$-safe spanning trees is \XP with
 parameter $k$.
\end{theorem}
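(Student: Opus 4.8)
I would follow the approach used for \autoref{safefpt} and \autoref{flowfpt}: reduce the search for two edge-disjoint spanning trees meeting the \emph{global} $(r,k)$-safety condition to the search for two edge-disjoint subtrees whose part near $r$ has size quadratic in $k$ and maximum degree linear in $k$, then guess this bounded ``core'' and complete the remainder in polynomial time via Tutte's theorem (\autoref{tutte}). The first, easy, observation is that a spanning tree $T$ of $G$ is $(r,k)$-safe if and only if every component of $T-r$ (i.e., every branch of $T$ at $r$) has at most $n-k$ vertices: since $C_T^v\subseteq C_T^u$ whenever $v$ lies below a neighbour $u$ of $r$ in $T$, it suffices to test the condition at the neighbours of $r$, where it reads $|V(\text{branch at }u)|\le n-k$. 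As the branches of $T-r$ partition $V$, a tree with at least $k$ branches at $r$ is \emph{automatically} $(r,k)$-safe (each branch misses $r$ together with the $\ge k-1$ further neighbours of $r$ lying in the other branches), so the only dangerous trees have fewer than $k$ branches at $r$, and then at most $k-1$ branches must be controlled.

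The technical heart is a structural lemma asserting that $G$ has two edge-disjoint $(r,k)$-safe spanning trees if and only if there are two edge-disjoint subtrees $H_1,H_2$ of $G$ with $r\in V(H_i)$, $|V(H_i)|=O(k^2)$ and $\Delta(H_i)=O(k)$, together with an assignment of each vertex of $V\setminus V(H_i)$ to a branch of $H_i$ at $r$, such that $(i)$ every resulting branch keeps at most $n-k$ vertices and $(ii)$ $H_1,H_2$ extend, using only edges outside $E(H_1)\cup E(H_2)$, to edge-disjoint spanning trees of $G$ realising these assignments. For the non-trivial direction one starts from a solution $T_1,T_2$ and trims each $T_i$: all but $O(k)$ branches at $r$ are cut back to their $r$-incident edge (their vertices being re-assigned), each of the $O(k)$ ``critical'' branches---those whose size forces the bound $s_{\max}\le n-k$---is cut back to an $O(k)$-vertex subtree near its top with the rest re-assigned, and any core vertex of degree exceeding $O(k)$ is removed by a local reroute charged to a global budget; the re-assignment of the discarded vertices is then chosen so as to simultaneously preserve a feasible branch-size profile and the availability of a disjoint completion.

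Given candidate cores $H_1,H_2$ and target branch sizes, condition $(ii)$ is tested by contracting, for each $i$, every branch of $H_i$ (with its currently assigned vertices) to a single vertex of $G-(E(H_1)\cup E(H_2))$ while keeping $r$ apart; a completion then corresponds to a spanning tree of the contracted graph that extends the star at $r$ and in which no branch exceeds its residual capacity, which by \autoref{tutte} and its algorithmic version in \cite{book} (phrased as a union of two graphic matroids, with capacities enforced by a polynomial flow routine) is a polynomial-time check, carried out for $i=1,2$ at once after a routine edge-disjointness reformulation. Guessing branch sizes directly already yields an \XP running time of $n^{O(k)}$, matching \autoref{treexp}; to get \FPT one instead colours $V$ with $O(k^2)$ colours so that the $O(k^2)$ core vertices of a hypothetical solution become rainbow with probability $2^{-O(k^2)}$, enumerates the $2^{O(k^2\log k)}$ labelled pairs of candidate cores supported on the colour classes, and runs the completion test for each; derandomising with a perfect hash family and making the colour coding constructive gives a deterministic $2^{O(k^2\log k)}\cdot n^{O(1)}$-time algorithm that also outputs the two trees.

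The main obstacle is the structural lemma: showing that an $O(k)$-branch, $O(k^2)$-vertex core of bounded degree suffices to witness $(r,k)$-safety, and that the discarded vertices can be re-assigned so as to keep both a valid branch-size profile and a disjoint completion available. Bringing the core size down to a quadratic (rather than exponential) function of $k$, and reconciling the rerouting needed for $T_1$ with that needed for $T_2$ on a shared, already depleted edge set, is where the genuine difficulty lies.
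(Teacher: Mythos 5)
Your opening observation (a spanning tree is $(r,k)$-safe iff every branch of $T-r$ has at most $n-k$ vertices, and $k$ or more branches give safety for free) is correct, but the two load-bearing steps of your plan do not hold up. First, the ``completion test'' you describe is not a polynomial-time check. Tutte's theorem and graphic matroid union decide whether two edge-disjoint spanning trees extending given forests exist, but they cannot enforce upper bounds on the sizes of the branches at $r$: a branch-size cap is not a matroid constraint, and connectivity of the parts of a vertex partition is not something a ``polynomial flow routine'' can impose. In the generality you need it (find edge-disjoint spanning trees extending given cores so that every branch respects a cap of $n-k$), the test with a trivial core is exactly the problem shown \NP-complete in \autoref{theo:lower-graph}; replacing the cap by exactly prescribed branch sizes, as in your $n^{O(k)}$ enumeration, leaves a connected-partition-into-prescribed-parts problem that is likewise \NP-hard, and enumerating the $n^{O(k)}$ size profiles does not remove the hardness of realizing any single profile. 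Second, the structural lemma that is supposed to justify the trimming and re-assignment is only sketched, and you yourself flag it as the unproven ``genuine difficulty.''

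The reason the actual argument behind \autoref{treexp} (the certificate characterization of Bang-Jensen, Havet, and Yeo, restated as \autoref{unclass}) is so much lighter is an arithmetic observation that your scheme misses. A classic $(r,k)$-certificate is an $(r,k)$-safe subtree $X$ on exactly $2k-2$ non-root vertices; safety of $X$ alone forces every branch of $X$ at $r$ to have at most $(2k-2)-k+1=k-1$ vertices. Hence \emph{any} completion of $X$ to a spanning tree of $G$ is automatically $(r,k)$-safe, because even hanging all remaining $n-2k+1$ non-root vertices into a single branch yields a branch of size at most $(k-1)+(n-2k+1)=n-k$, which is precisely the safety threshold. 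So no capacities need to be imposed on the completion at all: one enumerates the $n^{O(k)}$ pairs of subtrees on $2k-1$ vertices, keeps those that are $(r,k)$-safe, and tests completability of the pair by plain matroid union (\autoref{uncheck}). Your core-plus-assignment machinery is both unproven where it matters and, in the completion step, relies on a subroutine that you have not shown to be tractable and that coincides, in its natural generality, with an \NP-hard problem.
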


Again, we improve \autoref{treexp} as follows.

\begin{theorem}\label{treefpt}
Deciding whether a given rooted graph $G=(V+r,E)$
contains two edge-disjoint $(r,k)$-safe spanning trees is \FPT with parameter $k$. More precisely, it can be solved in time $2^{O(k^2 \cdot \log k)} \cdot n^{O(1)}$.
Further, if they exist, the two edge-disjoint $(r,k)$-safe spanning trees can be computed within the same running time.
\end{theorem}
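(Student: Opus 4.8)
The plan is to follow the same strategy that (according to the abstract) underlies Theorems~\ref{safefpt} and~\ref{flowfpt}: reduce the problem of finding two edge-disjoint $(r,k)$-safe spanning trees to the problem of finding two \emph{small} edge-disjoint substructures whose size and maximum degree are bounded by a function of $k$ alone, and then solve the bounded version by a colour-coding / dynamic-programming argument whose running time is single-exponential in that function of $k$. The key structural claim I would isolate first is: $G$ has two edge-disjoint $(r,k)$-safe spanning trees if and only if $G$ has two edge-disjoint subforests $F_1,F_2$, each rooted at $r$, such that (a) each $F_i$ has at most, say, $O(k^2)$ edges, (b) each vertex has degree at most $O(k)$ in $F_i$, and (c) a certain "local safety and connectivity" condition holds that can be completed to a pair of edge-disjoint $(r,k)$-safe spanning trees using the remaining edges of $G$. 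Intuitively, the constraint $|V - V(C_T^v)| \ge k$ only bites near the root: it says that for each $v$, deleting $v$ leaves at least $k$ vertices on the root side; equivalently, the part of $T$ "above" each vertex, together with $r$'s side, must be large. So the obstruction to $(r,k)$-safety is concentrated in a bounded-size neighbourhood of $r$ inside the tree, and once the top $O(k)$ levels or so behave correctly, the rest of the spanning tree can be filled in greedily (using Tutte's theorem, \autoref{tutte}, or a matroid-union argument to guarantee that the two partial forests extend to two edge-disjoint spanning trees of $G$).

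Concretely, the steps in order would be: (1) Prove the "forward" direction of the structural equivalence — given two edge-disjoint $(r,k)$-safe spanning trees $T_1,T_2$, extract from each $T_i$ a bounded-size rooted subforest $F_i$ that retains all the information relevant to $(r,k)$-safety; the natural candidate is to keep, in $T_i$, every vertex $v$ for which $C_{T_i}^v$ has fewer than $k$ (or $O(k)$) vertices, together with the tree-paths from $r$ to those vertices, and argue this has $O(k^2)$ edges and maximum degree $O(k)$ — the degree bound requires noticing that a vertex with many children all of whose subtrees are small would itself violate safety or could be pruned. (2) Prove the "backward" direction — given $F_1,F_2$ satisfying the bounded-size conditions plus the local safety condition, show they extend to two edge-disjoint $(r,k)$-safe spanning trees; here the point is that any spanning-tree completion of $F_i$ within $G \setminus E(F_{3-i})$ automatically preserves $(r,k)$-safety, because adding edges only below the already-fixed top part cannot shrink any $C_T^v$ for the critical vertices $v$, and the non-critical vertices are automatically fine. (3) Enumerate, via colour-coding, all candidate pairs $(F_1,F_2)$ of bounded-size coloured subforests: colour $V$ with $O(k^2)$ colours so that each candidate forest becomes "colourful," then do a dynamic programming over coloured subforests rooted at $r$ in time $2^{O(k^2\log k)}\cdot n^{O(1)}$, checking the local safety condition on the fly. (4) For each surviving candidate pair, test extendability to two edge-disjoint spanning trees of $G$ by a polynomial-time matroid-union / Tutte-type algorithm (this is where \autoref{tutte} and the algorithmic version from \cite{book} enter). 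Derandomise the colour-coding with a standard $(n, O(k^2))$-perfect-hash-family construction, preserving the claimed running time.

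The main obstacle I expect is pinning down the right bounded "local" certificate and, in particular, proving the degree bound $O(k)$ rather than merely a size bound. Size is comparatively easy — there are at most $O(k)$ "small-subtree" vertices worth remembering and each is at tree-distance $O(k)$ from the root, or one argues directly that the critical region has $O(k^2)$ edges — but controlling the degree inside $F_i$ needs a genuine exchange argument: one must show that a $(r,k)$-safe spanning tree can always be chosen so that no vertex in the critical top region has too many children, by rerouting subtrees without destroying safety of either tree simultaneously (the simultaneity, because the two trees share the ground set $G$ and must stay edge-disjoint, is what makes this delicate). A secondary subtlety is formulating the "local safety condition" on $(F_1,F_2)$ so that it is (i) necessary, (ii) sufficient for extendability to a safe pair, and (iii) checkable within the dynamic program — getting all three simultaneously is the crux. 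Once the structural equivalence is in place with the stated parameters, steps (3) and (4) are routine applications of colour-coding and matroid union, and the $2^{O(k^2\log k)}\cdot n^{O(1)}$ bound follows, mirroring Theorems~\ref{safefpt} and~\ref{flowfpt}; the construction of the two trees (not just the decision) comes for free from the constructive versions of colour-coding and matroid union.
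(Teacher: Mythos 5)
Your high-level strategy is the right one and matches the paper's: replace the bounded-size certificate of Lemma~\ref{unclass} by one whose candidates can be enumerated in $f(k)\cdot n^{O(1)}$ time, then test completability per candidate. You also correctly locate the crux in the exchange argument. But as written the proposal has two genuine gaps. First, the degree condition you impose -- degree $O(k)$ of each vertex \emph{in $F_i$} -- is not the condition that makes anything work, and the exchange argument that would justify the correct condition is exactly the part you leave unproved. The paper's compact $(r,k)$-certificate instead classifies vertices by their degree \emph{in $G$} (large means $d_G(v)\geq 8k-7$), forces all large vertices to be leaves of the stored object with their subtrees replaced by virtual pendant vertices, and then proves (Lemma~\ref{compactun}) via a vertex-maximal extension argument, using the tree-mapping function of Proposition~\ref{unsigma} together with Lemmas~\ref{unsamevertex} and~\ref{jamais3}, that a large vertex always has at least one usable neighbour among its $\geq 8k-7$ neighbours in $G$ to re-grow the certificate inside $G$ while preserving edge-disjointness and completability. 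Without this lemma (or an equivalent), your claimed equivalence between the bounded object and the spanning trees is unestablished. Relatedly, your step (2) claim that \emph{any} spanning-tree completion of $F_i$ preserves safety is only true once the partial tree is a full classic certificate on exactly $2k-2$ non-root vertices; for a smaller $F_i$ (e.g.\ $F_i=\{r\}$) it is false, since hanging a long path below one branch shrinks $|V-V(C_T^v)|$ below $k$.

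Second, the enumeration step does not go through as described. Colour-coding with a DP decides existence of (or produces one witness for) a colourful pattern, but completability is a global matroid condition on $G$ that cannot be folded into a local DP, and a single witness produced by the DP may fail completability even when another candidate succeeds; so you would need to enumerate \emph{all} candidates, and nothing in your conditions bounds their number by $f(k)$. The paper gets this for free from the degree-in-$G$ classification: since every internal vertex of a compact certificate is small in $G$, every candidate vertex lies within distance $k-1$ of $r$ through small vertices, so there are at most $(8k)^{k}$ candidate vertices and at most $2^{O(k^2\log k)}$ candidate pairs, which are enumerated by plain brute force and each tested for completability via Lemma~\ref{uncheck}. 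So the missing ingredients are precisely the degree-in-$G$ leaf condition, the tree-mapping exchange lemma that re-grows large vertices, and the resulting locality bound that replaces colour-coding by direct enumeration.
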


Since a hardness result in the spirit of  \autoref{theo:lower-arbo} and \autoref{theo:lower-branching} was not provided in~\cite{bhy}, we fill this gap and prove the following theorem, whose proof is inspired by the one of~\cite[Theorem 5.2]{bhy}. It shows that the problem is hard even if we want to find one single $(r,k)$-safe spanning tree.



\begin{theorem}
  \label{theo:lower-graph}
Let $p \geq 1$ be a fixed positive integer. Deciding whether a
  given rooted graph $G=(V+r,A)$ has $p$ edge-disjoint $(r,k)$-safe
  spanning trees is \NP-complete. Moreover, let $\varepsilon >0$ be arbitrary and let $k:\mathbb{Z}_{\geq 0}
  \rightarrow \mathbb{Z}_{\geq 0}$ be a function such that
  $(\log(n))^{2+\varepsilon}\leq k(n) \leq \frac{n}{2}$ for all
  $n>0$. Further, suppose that there exists a constant $C^*$ such
  that for all $c \geq C^*$ there exists an $n$ such that
  $k(n)=c$. Then, assuming the \ETH, there is no algorithm running in time $n^{O(1)}$ for
  deciding whether a given rooted graph contains $p$ edge-disjoint
  $(r,k)$-safe spanning trees.
  \end{theorem}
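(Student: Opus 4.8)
The plan is to give a polynomial-time reduction from $3$-\textsc{Sat}, adapting to the undirected edge-disjoint setting the construction underlying \cite[Theorem~5.2]{bhy} (the directed analogue being \autoref{theo:lower-arbo}). The starting point is an elementary reformulation: for a rooted graph $G=(V+r,E)$ and an integer $k\ge 2$, the graph $G$ has an $(r,k)$-safe spanning tree if and only if $V$ admits a partition into parts $S_1,\dots,S_t$ such that each $G[S_j]$ is connected, each $S_j$ contains a neighbour of $r$, and $|S_j|\le n-k$. Indeed, the parts are exactly the vertex sets of the subtrees rooted at the children of $r$, and --- exactly as remarked for arborescences in the text --- it suffices to check the children of $r$, since the strict descendants of any vertex lying below a child $u$ are contained in those of $u$; also, a single part is excluded, as it would be all of $V$, of size $n-1>n-k$. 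Thus the question is whether $V$ can be covered by connected ``branches'', each rooted at a neighbour of $r$ and of size at most $n-k$.

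For $p=1$, given a $3$-CNF formula $\phi$ with $\ell$ variables and (using the Sparsification Lemma) $m=O(\ell)$ clauses, I would build a rooted graph $G_\phi$ in which $r$ has exactly two neighbours $u_A,u_B$, hence exactly two branches, together with: a variable gadget per variable, a connected piece attached to one of the two branches, the choice encoding its truth value; a clause gadget per clause, a connected piece that can be kept connected inside a branch only when that branch already contains the variable gadget of a literal satisfying the clause; and a padding block attached so as to be always absorbed by the branch rooted at $u_A$. The construction is arranged so that a satisfying assignment yields a branch partition in which the $A$-branch has size exactly $s^*+(n-n_{\mathrm{core}})$ (its bounded core contribution plus all of the padding), whereas every unsatisfying or inconsistent assignment forces the $A$-branch to size at least $s^{**}+(n-n_{\mathrm{core}})$ for a fixed $s^{**}>s^*$ (a variable placed on both sides, or a clause gadget that cannot be completed, spills extra vertices into it); the $B$-branch is always tiny, so the only binding constraint is $|S_A|\le n-k$. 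Hence $G_\phi$ has an $(r,k)$-safe spanning tree precisely when either $\phi$ is satisfiable and $k\le n_{\mathrm{core}}-s^*$, or $k\le n_{\mathrm{core}}-s^{**}$; in particular, for every $k$ in the window $(n_{\mathrm{core}}-s^{**},\,n_{\mathrm{core}}-s^*]$ (which has length $s^{**}-s^*\ge 1$) this is equivalent to the satisfiability of $\phi$. Fixing any such $k$ yields \NP-completeness for $p=1$ (membership in \NP being immediate). Going from $p=1$ to an arbitrary fixed $p$ is then free: replace each edge of $G_\phi$ by $p$ parallel copies; since a spanning tree of a multigraph uses at most one copy of each original edge, the blow-up admits $p$ edge-disjoint $(r,k)$-safe spanning trees if and only if $G_\phi$ admits one, while $n$ and $k$ are unchanged.

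For the \ETH statement, the function $k(\cdot)$ is prescribed, so I must realise a value inside the above window as $k(N)$ for a suitable $N$. After harmlessly inflating the core so that the whole window lies above $C^*$, the surjectivity hypothesis yields an $N$ with $k(N)$ in the window; any such $N$ automatically satisfies $2k(N)\le N$ and $k(N)\ge(\log N)^{2+\varepsilon}$, that is, $\log N\le k(N)^{1/(2+\varepsilon)}$. Since the undirected construction produces $n_{\mathrm{core}}=\Theta(\ell^2)$ --- an extra quadratic overhead compared with the linear bound of \cite[Theorem~5.2]{bhy}, mirroring the linear-versus-quadratic distinction between the directed and undirected positive results of this paper --- the window sits around $\Theta(\ell^2)$, so $\log N=O(\ell^{2/(2+\varepsilon)})=o(\ell)$ and therefore $N=2^{o(\ell)}$. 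Composing with the $p$-fold edge blow-up, a $\mathrm{poly}(N)$-time algorithm for the problem would decide $\phi$ in time $2^{o(\ell)}$, contradicting the \ETH.

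The step I expect to be the main obstacle is the explicit design of the variable gadgets, clause gadgets and padding block so that the $A$-branch takes exactly the sizes claimed: that a satisfying assignment admits a routing of the clause gadgets keeping the $A$-branch's core content down to $s^*$, that every unsatisfying or inconsistent assignment provably inflates it to at least $s^{**}$ (or disconnects a branch), that the window $(n_{\mathrm{core}}-s^{**},\,n_{\mathrm{core}}-s^*]$ is nonempty and can be pushed above $C^*$, and that the core remains quadratic in $\ell$ so that the $N$ supplied by surjectivity stays $2^{o(\ell)}$. The remaining verifications are routine.
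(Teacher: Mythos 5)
Your peripheral steps are fine and match the paper's: the reformulation of $(r,k)$-safety as a partition of $V$ into connected branches, one per child of $r$, each of bounded size; the reduction of general $p$ to $p=1$ by replacing every edge with $p$ parallel copies (the paper states exactly this equivalence, in the same direction); and the \ETH calculation, where a core of size $\Theta(\ell m)=\Theta(\ell^2)$ after sparsification combines with $k(n)\geq(\log n)^{2+\varepsilon}$ to give a $2^{o(m)}$ algorithm for 3-\textsc{Sat}. But the core of the proof is missing: you never construct the variable gadgets, the clause gadgets, or the padding block, and you yourself identify "the explicit design of the gadgets" as the main obstacle and leave it unresolved. Since the entire content of an \NP-hardness proof is the reduction together with the verification of both directions, what you have is a template rather than a proof. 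In particular, nothing in your write-up certifies that the window $(n_{\mathrm{core}}-s^{**},\,n_{\mathrm{core}}-s^{*}]$ is nonempty, that every unsatisfying assignment really inflates the $A$-branch (as opposed to, say, merely disconnecting a clause gadget, which is a different failure mode you would also have to handle), or that the core can be kept of size $\Theta(\ell^2)$ --- all of which both the \NP-completeness and the \ETH statements rely on.

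For comparison, the paper's instantiation is concrete and its correctness argument is a tightness argument rather than your size-gap argument. It builds layers $V_1,\dots,V_\ell$, each consisting of the two literal vertices $v_i,\bar{v_i}$, with all edges between consecutive layers, the root $r$ joined to $V_1$ and an extra vertex $t$ joined to $V_\ell$; each clause vertex $c_j$ is joined to each of its literal vertices by a path with $\ell/2$ interior vertices; and a pendant independent set $Q$ of size $q$ is attached to $t$. Setting $k=1+\ell+3\ell m/2+m$, the branch of the tree containing $t$ must contain all of $Q$ and a $V_1$--$t$ path, and the safety bound leaves it no slack, so it is forced to be exactly a shortest such path plus $Q$; the path selects one literal vertex per layer (the truth assignment), and the other branch is connected to $r$'s neighbourhood if and only if every clause vertex reaches $V_1$ through an unselected literal, i.e., if and only if the assignment satisfies $\phi$. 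To complete your proposal you would need to supply gadgets at this level of concreteness and prove both implications; as written, the argument cannot be checked.
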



\noindent \textbf{Our techniques}. In order to obtain the \FPT algorithms for the three considered problems, we follow a common strategy. In a nutshell, the main ideas used in the \XP algorithms of~\cite{bhy}, and how we manage to improve them to \FPT algorithms, can be summarized as follows. The algorithms of Bang-Jensen, Havet, and Yeo~\cite{bhy} are all based on proving the following general property for each of the considered problems, where all the substructures are rooted:

\smallskip

\begin{adjustwidth}{.7cm}{.7cm}
  A given (di)graph contains the required  substructure ${\cal X}$ (i.e., a pair of disjoint spanning arborescences, flow branchings, or spanning trees) if and only if it contains another type of substructure ${\cal X}'$ {\sl of size bounded by a function of $k$} and such that, if found,  it can be extended to the required substructure ${\cal X}$ {\sl in polynomial time}.
\end{adjustwidth}

\smallskip

\noindent Once the above property is proved, an \XP algorithm follows naturally: generate all candidate substructures ${\cal X}'$ in time $n^{f(k)}$ and, for each of them, try to extend it to a substructure  ${\cal X}$ in polynomial time. Our main contribution is to prove that the above general property is still true if we replace ${\cal X}'$ with another type of substructure ${\cal X}''$ having the crucial property that the candidate substructures ${\cal X}''$ can be all enumerated in time $f(k) \cdot n^{O(1)}$, hence yielding an \FPT algorithm. In order to achieve this, we prove that we can restrict ourselves to objects ${\cal X}''$ whose ``non-sink'' vertices (i.e., those with positive (out-)degree in ${\cal X}''$) have {\sl (out-)degree, in the original (di)graph, bounded by some function of $k$}, namely $O(k)$ or $O(k^2)$.
 Intuitively, this is possible because, given a pair ${\cal X}' = \{X_1,X_2\}$ containing a vertex $v$ of large (out-)degree (as a function of $k$) in, say,  $X_1$, we can safely prune the ``branch'' of $X_1$ hanging from $v$, with the guarantee that it will always be possible to extend the pruned substructure to another substructure of the original type. Note that since the substructures ${\cal X}''$ have size {\sl and} maximum (out-)degree bounded by a function of $k$, we can indeed generate all candidate substructures in time $f(k) \cdot n^{O(1)}$, as required. We now make this informal explanation more concrete. For all technical definitions used in the next paragraphs, see \autoref{pre}.

Let us first focus on the problem of finding two arc-disjoint $k$-safe spanning $r$-arborescences.
In this case, the substructure $\cal{X}'$ is an\emph{ extendable pair of arc-disjoint classic $(r,k)$-kernels}. In \autoref{classksafe}, we restate a result from \cite{bhy} that shows that the existence of this substructure is sufficient for the existence of two arc-disjoint $k$-safe spanning $r$-arborescences. We then introduce compact $(r,k)$-kernels. An \emph{extendable pair of arc-disjoint compact $(r,k)$-kernels} corresponds to the substructure $\cal{X}''$. In \autoref{compactsafe}, we show that the existence of an extendable pair of arc-disjoint compact $(r,k)$-kernels is also sufficient for the existence of two arc-disjoint $k$-safe spanning $r$-arborescences. The proof of \autoref{compactsafe} is our main technical contribution. Having \autoref{compactsafe} at hand, the proof of \autoref{safefpt} is easy.

As for packing $(r,k)$-flow branchings, the substructure ${\cal X}'$ defined in~\cite{bhy} is an \emph{extendable pair of arc-disjoint classic $(r,k)$-cores}. In \autoref{classbranch}, we restate a result from \cite{bhy} that shows that the existence of this substructure is sufficient for the existence of two arc-disjoint $(r,k)$-flow branchings. We then introduce compact $(r,k)$-cores. An \emph{extendable pair of arc-disjoint compact $(r,k)$-cores} corresponds to the substructure $\cal{X}''$. In \autoref{compactbranch}, we show that the existence of an extendable pair of arc-disjoint compact $(r,k)$-cores is also sufficient for the existence of two arc-disjoint $(r,k)$-flow branchings. Again, the proof of \autoref{compactbranch} is our main technical contribution and is similar to the one of \autoref{compactsafe}. Again, having \autoref{compactbranch} at hand, the proof of \autoref{flowfpt} is easy.

Finally, for packing $(r,k)$-safe spanning trees, the substructure ${\cal X}'$ defined in~\cite{bhy} is a \emph{completable pair of edge-disjoint classic $(r,k)$-certificates}. In \autoref{unclass}, we restate a result from \cite{bhy} that shows that the existence of this substructure is sufficient for the existence of two arc-disjoint $(r,k)$-safe spanning trees. We then introduce compact $(r,k)$-certificates. A \emph{completable pair of edge-disjoint compact $(r,k)$-certificates} corresponds to the substructure $\cal{X}''$. In \autoref{compactun}, we show that the existence of a completable pair of edge-disjoint  compact $(r,k)$-certificates is also sufficient for the existence of two edge-disjoint $(r,k)$-safe spanning trees. Again, the proof of \autoref{compactun} is our main technical contribution and is similar to the one of \autoref{compactsafe} and \autoref{compactbranch}. Again, having \autoref{compactun} at hand, the proof of \autoref{treefpt} is easy.



\medskip

\noindent \textbf{Organization}. In \autoref{pre} we review some more technical results we need
from \cite{bhy}, prove several preliminary results, and provide the basic definitions about parameterized complexity. In \autoref{safe}, \autoref{flow},  \autoref{un}, and \autoref{sec:proof-hard-trees} we give the proof of \autoref{safefpt}, \autoref{flowfpt}, \autoref{treefpt}, and
\autoref{theo:lower-graph}, respectively. Finally, we conclude our work in \autoref{con}.

\section{Preliminaries}\label{pre}
In this section we collect some more technical previous results and prove several preliminary statements. We first give some general results on
graphs and digraphs and then some which are more specific to each of
the particular applications. We also provide some basic definitions about parameterized complexity.
\subsection{General preliminaries}

The following is a well-known submodularity property of digraphs that can be found, for instance, in~\cite[Proposition 1.2.1]{book}.
\begin{proposition}\label{subm}
Let $D=(V,A)$ be a digraph and $S_1,S_2 \subseteq V$. Then
$d_D^-(S_1)+d_D^-(S_2)\leq d_D^-(S_1 \cup S_2)+d_D^-(S_1 \cap S_2)$.
\end{proposition}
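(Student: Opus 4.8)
The plan is to prove this statement by accounting for the contribution of each arc of $D$ separately. Recall that an arc $a=(u,v)\in A$, with tail $u$ and head $v$, lies in $\delta_D^-(S)$ precisely when $v\in S$ and $u\in V\setminus S$; writing $[\,P\,]$ for $1$ if the statement $P$ is true and $0$ otherwise, this means $d_D^-(S)=\sum_{a=(u,v)\in A}[\,v\in S\,]\cdot[\,u\notin S\,]$ for every $S\subseteq V$. It therefore suffices to prove, for each individual arc $a=(u,v)$, that the number of the two sets $S_1,S_2$ that $a$ enters is at most the number of the two sets $S_1\cup S_2,S_1\cap S_2$ that $a$ enters; summing this single-arc inequality over all $a\in A$ gives exactly the asserted inequality.

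For the single-arc inequality I would branch on the position of the tail $u$ relative to $S_1$ and $S_2$. If $u\notin S_1\cup S_2$, then the four ``tail'' factors $[\,u\notin S_1\,],[\,u\notin S_2\,],[\,u\notin S_1\cup S_2\,],[\,u\notin S_1\cap S_2\,]$ all equal $1$, so the claim reduces to the elementary inclusion--exclusion identity $[\,v\in S_1\,]+[\,v\in S_2\,]=[\,v\in S_1\cup S_2\,]+[\,v\in S_1\cap S_2\,]$, which holds with equality. If $u\in S_1\cap S_2$, then none of the four quantities receives any contribution from $a$. The cases in which $u$ lies in exactly one of $S_1,S_2$ are handled by the same kind of short check on whether the head $v$ lies in $S_1$, in $S_2$, or in neither. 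Adding up the resulting inequalities over all arcs then completes the proof.

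I do not expect a genuine obstacle here: this is a classical structural fact about directed cuts, recorded as \cite[Proposition~1.2.1]{book}, and the only point that needs a little care is keeping the bookkeeping consistent for the arcs having exactly one endpoint in $S_1$, and symmetrically for $S_2$. The submodularity of in-cuts provided by this proposition will then be used repeatedly in the later sections as an uncrossing device, typically to replace an arbitrary family of vertex sets that violate some in-degree requirement by a nested, or at least laminar, family with the same defect.
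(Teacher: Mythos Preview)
Your arc-by-arc counting approach is the standard one, and the paper does not actually prove this proposition at all---it simply cites \cite[Proposition~1.2.1]{book}. However, there is a genuine gap in the case you hand-wave, and it is not merely bookkeeping: the inequality as printed in the paper is false, and your ``short check'' in the exactly-one case would reveal this if carried out.

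Take an arc $a=(u,v)$ with $u\in S_1\setminus S_2$ and $v\in S_2\setminus S_1$. Then $a$ enters $S_2$ (contributing $1$ to the left-hand side), but it enters neither $S_1\cup S_2$ (its tail lies in $S_1\subseteq S_1\cup S_2$) nor $S_1\cap S_2$ (its head is not in $S_1\cap S_2$), so it contributes $0$ to the right-hand side. Hence the per-arc inequality you need in this case goes the wrong way. The correct statement is the reverse inequality
\[
d_D^-(S_1)+d_D^-(S_2)\ \geq\ d_D^-(S_1\cup S_2)+d_D^-(S_1\cap S_2),
\]
i.e.\ submodularity of $d_D^-$. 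This is what \cite[Proposition~1.2.1]{book} records, and it is also what the paper actually \emph{uses}: in the proof of \autoref{utfy} a strict inequality $d^-_{D_2}(X_1)+d^-_{D_2}(X_2)<d^-_{D_2}(X_1\cap X_2)+d^-_{D_2}(X_1\cup X_2)$ is derived and declared a contradiction to \autoref{subm}, which only makes sense if \autoref{subm} asserts $\geq$.

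Your counting argument proves the correct $(\geq)$ direction without change: in the two cases you wrote out you get equality, and in the ``exactly one'' case the crossing arcs just described give the left side a surplus, never a deficit. So the method is right; you should flip the sign in the target and replace the sentence ``the same kind of short check'' by the explicit observation above.
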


A rooted digraph $D=(V+r,A)$ is called {\it
  $k$-root-connected} if $d_D^-(X)\geq k$ for all $X \subseteq V$. We
use {\it root-connected} for $1$-root-connected. In a root-connected
rooted digraph $D=(V+r,A)$ an arc $a \in A$ is called {\it critical}
if $D-a$ is not root-connected anymore.  We now use \autoref{subm} to obtain a result that will be
useful when dealing with both $k$-safe spanning $r$-arborescences and spanning $(r,k)$-flow
branchings.
\begin{lemma}\label{utfy}
Let $D=(V+r,A)$ be a $2$-root-connected rooted digraph and let
$D'=(V+r,A')$ be a root-connected rooted digraph that is obtained from
$D$ by deleting $\alpha$ arcs of $A$. Then for any $v \in V+r$, there
are at most $\alpha$ arcs which are critical in $D'$ and whose tail is
$v$.
\end{lemma}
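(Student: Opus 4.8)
The plan is to argue by contradiction, using the standard fact that in a root-connected digraph an arc $a=uv$ is critical precisely when there is a ``tight'' set $X \subseteq V$ with $d_{D'}^-(X)=1$ whose unique entering arc is $a$ (in particular $v \in X$ and $u \notin X$). So suppose for contradiction that some vertex $v \in V+r$ is the tail of $\alpha+1$ distinct critical arcs $a_1 = v w_1, \dots, a_{\alpha+1} = v w_{\alpha+1}$ of $D'$. For each $i$, fix a set $X_i \subseteq V$ with $w_i \in X_i$, $v \notin X_i$, $d_{D'}^-(X_i) = 1$, and $\delta_{D'}^-(X_i) = \{a_i\}$. First I would observe that the sets $X_i$ are ``almost laminar'' in a useful way: I want to combine them into a single set avoiding $v$ on which $D'$ has small in-degree, which will contradict the hypothesis that $D$ (hence, after only $\alpha$ deletions, $D'$ on appropriately chosen sets) is $2$-root-connected. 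The key computation is a repeated application of submodularity (\autoref{subm}) to the union $Y = X_1 \cup \dots \cup X_{\alpha+1}$.

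Concretely, I would prove by induction on $j$ that $d_{D'}^-(X_1 \cup \dots \cup X_j) \le j$, using at each step that $d_{D'}^-(X_1 \cup \dots \cup X_{j-1}) + d_{D'}^-(X_j) \le d_{D'}^-(X_1 \cup \dots \cup X_j) + d_{D'}^-((X_1 \cup \dots \cup X_{j-1}) \cap X_j)$ and that the second term on the right is at least $1$ by root-connectivity of $D'$ (note the intersection is a nonempty subset of $V$, since $w_j$... actually one must be slightly careful: the intersection could be empty, in which case $d^-=0$ and the bound only improves, so the induction still goes through). Hence $d_{D'}^-(Y) \le \alpha+1$. Now here is the crucial point I want to exploit: each arc $a_i = v w_i$ has its tail $v \notin Y$ and its head $w_i \in Y$, so each $a_i$ lies in $\delta_{D'}^-(Y)$; since the $a_i$ are $\alpha+1$ distinct arcs, $d_{D'}^-(Y) \ge \alpha+1$, so in fact $d_{D'}^-(Y) = \alpha+1$ and $\delta_{D'}^-(Y) = \{a_1,\dots,a_{\alpha+1}\}$ exactly. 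Finally, since $D'$ is obtained from $D$ by deleting $\alpha$ arcs, $d_D^-(Y) \le d_{D'}^-(Y) + \alpha = 2\alpha+1$; but more to the point, all $\alpha+1$ arcs entering $Y$ in $D'$ are arcs with tail $v$, and the at most $\alpha$ deleted arcs account for the rest, so $d_D^-(Y) \le (\alpha+1) + \alpha$. This alone is not yet a contradiction, so the argument needs to be tightened: I would instead choose, among all the tight sets available, the $X_i$ to additionally be \emph{inclusion-wise minimal}, which forces the intersections in the induction to be \emph{empty} (if $w_i \in X_i \cap X_j$ with $\delta^-(X_i \cap X_j) \subseteq \{a_i\} \cap \{a_j\} = \emptyset$ and $X_i \cap X_j \ne \emptyset$, that contradicts root-connectivity), so actually all the $X_i$ are pairwise disjoint. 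Then $d_{D'}^-(Y) = d_{D'}^-(\bigsqcup X_i)$, every arc counted is some $a_i$, and in $D$ we get $d_D^-(Y) \le (\alpha + 1) + \alpha = 2\alpha + 1$, where the $\alpha+1$ comes from the $a_i$ (all with tail $v$) and the $\alpha$ from re-inserted deleted arcs — and this is compatible with $2$-root-connectivity only if... hmm, this still is not an immediate contradiction.

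Let me restate the cleaner route, which I believe is the intended one: I would \emph{not} take a big union but argue more locally. Among the $\alpha+1$ critical arcs with tail $v$, at most $\alpha$ of them can have been ``needed'' because of a deleted arc. Precisely: let $F$ be the set of $\alpha$ arcs deleted from $D$ to get $D'$, and for each critical $a_i$ pick a minimal tight set $X_i$ as above. The sets $X_i$ are pairwise disjoint (by the minimality/submodularity argument sketched above, since a nonempty proper intersection would be a set of in-degree $0$ in $D'$). In $D$, the set $X_i$ satisfies $d_D^-(X_i) \ge 2$, and since $d_{D'}^-(X_i) = 1$ with the single entering arc $a_i$ having tail $v \notin X_i$, at least one arc of $F$ enters $X_i$ in $D$. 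As the $X_i$ are pairwise disjoint, the sets $\delta_D^-(X_i) \cap F$ are pairwise disjoint nonempty subsets of $F$, giving $\alpha+1 \le |F| = \alpha$, a contradiction. The main obstacle, and the step I would write out most carefully, is exactly the disjointness of the minimal tight sets $X_i$: one must verify that choosing each $X_i$ inclusion-minimal among tight sets separating $v$ from $w_i$ indeed forces $X_i \cap X_j$ to be either empty or to again be a tight set (hence, by minimality applied on both sides together with the fact that $a_i \ne a_j$, impossible unless empty), and that the submodular inequality of \autoref{subm} is applied with the correct orientation of the sets. Everything else is bookkeeping.
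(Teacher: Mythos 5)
Your final route --- for each critical arc $a_i=vw_i$ of $D'$ with tail $v$, pick a tight set $X_i$ with $\delta_{D'}^-(X_i)=\{a_i\}$, show the $X_i$ are pairwise disjoint, and then charge to each $X_i$ a distinct deleted arc via $d_D^-(X_i)\geq 2>1=d_{D'}^-(X_i)$ --- is a viable, genuinely non-inductive alternative to the paper's proof (which deletes the $\alpha$ arcs one at a time and shows each single deletion creates at most one new critical arc with tail $v$). Once disjointness is in hand, your counting $\alpha+1\leq|F|=\alpha$ is a correct contradiction.

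However, both justifications you offer for the pairwise disjointness are flawed, and this is the heart of the lemma. First, the containment $\delta_{D'}^-(X_i\cap X_j)\subseteq\delta_{D'}^-(X_i)\cap\delta_{D'}^-(X_j)$ is false: an arc with tail in $X_i\setminus X_j$ and head in $X_i\cap X_j$ enters $X_i\cap X_j$ (and $X_j$) but not $X_i$. Second, even granting that a nonempty $X_i\cap X_j$ is again tight, this does not contradict the minimality of $X_i$ (chosen inclusion-minimal among tight sets \emph{containing $w_i$}) unless $w_i\in X_i\cap X_j$, which need not hold; symmetrically for $X_j$. So the minimality argument does not close. The correct fix is the computation from your first, abandoned attempt, specialized to two sets --- and it is precisely what the paper does: since $a_i$ and $a_j$ share the tail $v\notin X_i\cup X_j$ and have their heads in $X_i\cup X_j$, both arcs enter the union, so $d_{D'}^-(X_i\cup X_j)\geq 2$; then \autoref{subm} gives $d_{D'}^-(X_i\cap X_j)\leq d_{D'}^-(X_i)+d_{D'}^-(X_j)-d_{D'}^-(X_i\cup X_j)\leq 1+1-2=0$, and root-connectivity of $D'$ forces $X_i\cap X_j=\emptyset$. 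In particular no minimality assumption on the $X_i$ is needed at all. With that replacement your argument is complete.
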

\begin{proof}
We proceed by induction on $\alpha$. The statement is trivial for
$\alpha=0$. We suppose that it holds for all integers up to some
$\alpha$ and show that it also holds for $\alpha + 1$. Let $v \in V+r$
and let $\{a_1,\ldots,a_{\alpha+1}\}$ be a set of arcs in $A$ such
that $D_2=D-\{a_1,\ldots,a_{\alpha+1}\}$ is root-connected. By the
inductive hypothesis, $v$ is the tail of at most $\alpha$ critical
arcs in $D_{1}=D-\{a_1,\ldots,a_{\alpha}\}$. Suppose, for the sake of a
contradiction, that there are two arcs $vw_1,vw_2$ which are critical
in $D_2$, but not in $D_1$. It follows that there are sets
$X_1,X_2\subseteq V$ such that $d^-_{D_2}(X_1), d^-_{D_2}(X_2)=1$,
$d^-_{D_1}(X_1), d^-_{D_1}(X_2)=2$, and $vw_i$ enters $X_i$. Since
$D_2=D_1-a_{\alpha+1}$, it follows that $a_{\alpha+1}$ enters $X_1\cap
X_2$, so $X_1\cap X_2 \neq \emptyset$. As $D_2$ is root-connected, we
have $d_{D_2}^-(X_1\cap X_2)\geq 1$. As $vw_i$ enters $X_i$, we have
$v \in V+r-(X_1 \cup X_2)$ and so both $vw_1$ and $vw_2$ enter $X_1
\cup X_2$. This yields $d^-_{D_2}(X_1)+d^-_{D_2}(X_2)=1+1<1+2\leq
d^-_{D_2}(X_1\cap X_2)+d^-_{D_2}(X_1\cup X_2)$, a contradiction to
\autoref{subm}.
\end{proof}

Given a $2$-root-connected rooted
digraph $D=(V+r,A)$, a pair of subdigraphs
$(X_1=(V_1+r,A_1),X_2=(V_2+r,A_2))$ of $D$ is called {\it extendable}
if both $D-A_1$ and $D-A_2$ are root-connected. The following is an immediate consequence of the fact that checking whether a digraph is root-connected can clearly be done in polynomial time.
\begin{lemma}\label{extend}
Given a rooted digraph $D=(V+r,A)$ and a pair of two subdigraphs
$(X_1=(V_1+r,A_1),X_2=(V_2+r,A_2))$, we can decide in polynomial time
whether $(X_1,X_2)$ is extendable.
\end{lemma}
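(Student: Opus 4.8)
The plan is to unwind the definition and reduce the task to a standard reachability test. By the definition given just above the statement, the pair $(X_1,X_2)$ is extendable precisely when both of the digraphs $D-A_1$ and $D-A_2$ are root-connected, i.e.\ $1$-root-connected. So it suffices to exhibit a polynomial-time procedure that, given an arbitrary rooted digraph $H=(V+r,A_H)$, decides whether $H$ is root-connected; running this procedure on $H=D-A_1$ and on $H=D-A_2$ and answering ``yes'' iff both calls answer ``yes'' then decides whether $(X_1,X_2)$ is extendable.

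First I would record the elementary characterization that $H=(V+r,A_H)$ is root-connected if and only if every vertex $v\in V$ is reachable from $r$ by a directed path in $H$. For the forward implication, if some $v\in V$ is not reachable from $r$, let $X\subseteq V$ be the set of vertices not reachable from $r$; then $v\in X$, so $X\neq\emptyset$, and no arc of $H$ enters $X$ (an arc entering $X$ would have its tail reachable from $r$, hence its head reachable too), giving $d_H^-(X)=0<1$. Conversely, if $d_H^-(X)=0$ for some $\emptyset\neq X\subseteq V$, then since $r\notin X$ no directed path from $r$ can reach any vertex of $X$, so some vertex is unreachable. (Alternatively, this equivalence is \autoref{basic1} specialized to $k=1$.)

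With this characterization in hand, the algorithm is immediate: build $D-A_1$ and $D-A_2$ in linear time, perform a single graph search (BFS or DFS) from $r$ in each of the two digraphs, and check whether every vertex of $V$ is visited; output ``extendable'' iff both searches visit all of $V$. Each search runs in time $O(n+|A|)$, so the whole test is linear, hence polynomial, in the size of $D$, and correctness follows directly from the reachability characterization together with the definition of extendability. There is no genuine obstacle here --- as the text itself points out, the statement is an immediate consequence of the fact that root-connectivity is polynomial-time checkable; the only point worth stating with a little care is the reduction of the cut condition ``$d_H^-(X)\ge 1$ for all nonempty $X\subseteq V$'' to plain reachability from the root, which is exactly what lets a single linear-time traversal suffice rather than a family of flow computations.
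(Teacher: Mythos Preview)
Your proposal is correct and follows exactly the approach the paper indicates: the paper simply states that the lemma is ``an immediate consequence of the fact that checking whether a digraph is root-connected can clearly be done in polynomial time,'' and you have supplied the obvious details (the reachability characterization and a BFS/DFS test) that make this precise. There is nothing to add.
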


We now switch to some results in undirected graphs.
\begin{lemma}\label{unsamevertex}
Let $G=(V,E)$ be a graph and let $T$ be a spanning tree of $G$. Let $e=uv \in
E-E(T)$. Then there is some $f \in E(T)$ that is incident to $u$ such
that $T-e+f$ is a spanning tree of $G$.
\end{lemma}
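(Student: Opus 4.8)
The plan is to invoke the classical fundamental-cycle exchange argument, with one extra observation to get the endpoint we want. Since $T$ is a spanning tree of $G$ and $e=uv\in E-E(T)$, the graph $T+e$ contains a unique cycle $C$, namely the fundamental cycle of $e$ with respect to $T$; every edge of $C$ other than $e$ lies in $E(T)$. First I would recall the standard fact that for every edge $g\in E(C)\setminus\{e\}$ the graph $T+e-g$ is again a spanning tree of $G$: it has exactly $|V|-1$ edges, and it is connected because any $T$-path that used $g$ can be rerouted along the rest of $C$.

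The only thing beyond the textbook statement is that we may choose such an exchange edge incident to $u$. Since $G$ is loopless we have $u\neq v$, so $C$ is a genuine cycle passing through $u$, and hence exactly two edges of $C$ are incident to $u$. One of these is $e$ itself; the other, call it $f$, is therefore an edge of $T$, it is incident to $u$, and it lies on $C$. Applying the fact above with $g=f$ shows that $T+e-f$ is a spanning tree of $G$, which is exactly what is claimed. (Equivalently, and avoiding fundamental cycles altogether, one can take the unique $u$--$v$ path $P$ in $T$, let $f=uw$ be its first edge, and check directly that $T-f$ has the component containing $v$ disjoint from the one containing $u$, so that re-adding $e=uv$ yields a connected graph on $|V|-1$ edges, i.e.\ a spanning tree.)

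There is essentially no obstacle here; the argument is elementary. The only point requiring a (trivial) care is ruling out a degenerate configuration at $u$: a cycle through $u$ always has two edges at $u$, so after discarding $e$ there genuinely remains a tree edge $f$ at $u$ to exchange, and this uses only $u\neq v$, which holds since the graphs considered are loopless.
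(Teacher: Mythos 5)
Your proof is correct and follows essentially the same route as the paper: form the fundamental cycle of $e$ in $T+e$, observe that this cycle contains a second edge $f$ at $u$ which necessarily lies in $E(T)$, and swap. (You have also implicitly fixed the typo in the statement, which should read $T-f+e$ rather than $T-e+f$, as the later application in the paper confirms.)
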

\begin{proof}
The graph $T+f$ contains a unique cycle $C$ such that $uv \in E(C)$
and the deletion of an arbitrary edge of $E(C)$ yields a spanning tree
of $G$. As $C$ is is a cycle, $E(C)$ contains an edge $f$ different
from $e$ that is incident to $u$. This edge satisfies the condition.
\end{proof}
The following result can be found in a stronger form
in~\cite[Theorem 5.3.3]{book}.
\begin{proposition}\label{unsigma}
Let $G=(V,E)$ be a graph and let $T_1,T_2$ be spanning trees of
$G$. Then there is a function $\sigma:E(T_1)\rightarrow E(T_2)$ such
that for all $e \in E(T_1)$ both $T_1-e+\sigma(e)$ and
$T_2-\sigma(e)+e$ are spanning trees of $G$.
\end{proposition}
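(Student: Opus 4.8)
The plan is to exploit the fact that the statement only asks for $\sigma$ to be a \emph{function} (not a bijection): it therefore suffices to treat each edge $e\in E(T_1)$ in isolation and show that there exists an edge $f\in E(T_2)$ such that both $T_1-e+f$ and $T_2-f+e$ are spanning trees of $G$; then I would simply set $\sigma(e):=f$. In other words, the whole proposition reduces to the classical single-element symmetric exchange property of the cycle matroid of $G$, which I would prove directly.

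First, for $e\in E(T_1)\cap E(T_2)$ I would take $f:=e$, so that $T_1-e+e=T_1$ and $T_2-e+e=T_2$ are trivially spanning trees, and there is nothing to check. So suppose $e=uv\in E(T_1)\setminus E(T_2)$. On the one hand, $T_1-e$ splits into two components, inducing a partition $(A,B)$ of the vertex set with $u\in A$ and $v\in B$, and $e$ is the unique edge of $T_1$ crossing the cut $(A,B)$ (i.e., with one endpoint in $A$ and one in $B$). On the other hand, $T_2+e$ contains a unique cycle $C$; it passes through $e$, and $E(C)\setminus\{e\}\subseteq E(T_2)$ since $T_2$ is acyclic. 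Because $C$ is a cycle crossing the cut $(A,B)$ (it contains $e$), and every cycle crosses every cut an even number of times, there is an edge $f\in E(C)\setminus\{e\}$ that also crosses $(A,B)$. This $f$ is the edge I want: being an edge of $T_2$ crossing the cut of $T_1-e$, it reconnects the two components, so $T_1-e+f$ is connected on $n$ vertices with $n-1$ edges, hence a spanning tree; and being an edge of the unique cycle of $T_2+e$ other than $e$, its removal breaks that cycle, so $T_2+e-f=T_2-f+e$ is also a spanning tree. Setting $\sigma(e):=f$ (and $\sigma(e):=e$ on common edges) yields the desired function.

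The only point requiring any care --- and the reason the statement is phrased with an arbitrary function rather than a bijection --- is that a fixed $f\in E(T_2)$ may be forced as the image of several distinct edges of $T_1$ (a bijective version of the proposition is in fact false, already for two edge-disjoint spanning trees of $K_4$). Everything else is routine cut/cycle bookkeeping, entirely in the spirit of Lemma~\ref{unsamevertex}, and I do not anticipate a genuine obstacle. Alternatively, the statement is an immediate specialization of the symmetric basis exchange theorem for matroids to the cycle matroid of $G$, which is the route taken in the cited reference.
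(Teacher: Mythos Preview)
Your proof is correct. The paper does not actually prove this proposition at all; it merely states that the result ``can be found in a stronger form in~\cite[Theorem 5.3.3]{book}'' and moves on. Your argument supplies the elementary cut/cycle verification of the single-element symmetric exchange property for graphic matroids, which is exactly what is needed here since the statement only demands a function rather than a bijection. The observation that treating each $e\in E(T_1)$ independently suffices is the key simplification, and your execution (fundamental cut of $T_1-e$, fundamental cycle of $T_2+e$, parity of cycle/cut intersections) is clean and complete. Your parenthetical remark about the bijective version failing on $K_4$ is also correct and nicely motivates why the proposition is phrased as it is.
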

We call a function like in \autoref{unsigma} a {\it
  tree-mapping function} from $T_1$ to $T_2$.

\begin{lemma}\label{jamais3}
Let $G=(V,E)$ be a graph, let $T_1,T_2$ be spanning trees of $G$, and
let $\sigma:E(T_1)\rightarrow E(T_2)$ be a tree-mapping function from
$T_1$ to $T_2$. Further, let $e_1,e_2,e_3\in E(T_1)$ be all incident
to a common vertex $v$. Then $\{\sigma(e_1),\sigma(e_2),\sigma(e_3)\}$
contains at least two distinct elements.
\end{lemma}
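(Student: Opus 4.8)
The plan is to argue by contradiction. Suppose that $\sigma(e_1)=\sigma(e_2)=\sigma(e_3)=f$ for a single edge $f\in E(T_2)$. By the defining property of a tree-mapping function (as in \autoref{unsigma}), $T_1-e_i+f$ is a spanning tree of $G$ for each $i\in\{1,2,3\}$. The first observation is that this already forces $f\notin E(T_1)$: indeed, if $f$ were an edge of $T_1$, then $T_1-e_1+f$ would coincide with $T_1-e_1$, a forest on $n$ vertices with only $n-2$ edges, hence not a spanning tree.

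Next I would invoke the standard structure of fundamental cycles. Since $f\notin E(T_1)$ and $T_1$ is a spanning tree, the graph $T_1+f$ contains a unique cycle $C$, and $C$ passes through $f$; this is exactly the mechanism already used in the proof of \autoref{unsamevertex}. For an edge $e\in E(T_1)$, the graph $T_1-e+f$ has $n-1$ edges on $n$ vertices, so it is a spanning tree if and only if it is acyclic. Deleting $e$ destroys the cycle $C$ precisely when $e\in E(C)$, and otherwise $C$ survives inside $T_1-e+f$. Hence $T_1-e+f$ is a spanning tree if and only if $e\in E(C)$. Applying this to $e_1,e_2,e_3$ yields $e_1,e_2,e_3\in E(C)$.

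Finally I would extract the contradiction from the fact that $C$ is a cycle, hence $2$-regular: every vertex of $C$ is incident to exactly two edges of $C$. Since $e_1,e_2,e_3$ are three distinct edges, all incident to the common vertex $v$ and all lying on $C$, the vertex $v$ would have degree at least $3$ in $C$, which is impossible. Therefore the assumption $\sigma(e_1)=\sigma(e_2)=\sigma(e_3)$ cannot hold, and $\{\sigma(e_1),\sigma(e_2),\sigma(e_3)\}$ contains at least two distinct elements.

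The argument is short and elementary; the only point requiring (minor) care is the characterization of which single-edge exchanges into $T_1+f$ produce a spanning tree, that is, identifying the admissible partners of $f$ with the edges of its fundamental cycle in $T_1$. Everything else is immediate from the definition of a tree-mapping function and from cycles being $2$-regular, so I do not expect any real obstacle.
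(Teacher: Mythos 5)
Your proof is correct, but it takes a genuinely different route from the paper's. The paper deletes all three edges at once: $T_1-\{e_1,e_2,e_3\}$ has three pairwise disjoint components $C_1,C_2,C_3$ avoiding $v$, each $\sigma(e_i)$ must have an endpoint in $V(C_i)$ for $T_1-e_i+\sigma(e_i)$ to be spanning, and an edge has only two endpoints, so the three images cannot coincide. You instead fix the putative common image $f$, pass to the fundamental cycle $C$ of $f$ in $T_1$, show each $e_i$ must lie on $C$, and contradict the $2$-regularity of a cycle at $v$. Both arguments are elementary and of comparable length; the paper's version localizes each $\sigma(e_i)$ to its own component (a slightly finer conclusion, though not exploited elsewhere), while yours is the dual ``cycle'' view of the same exchange property and arguably yields the cleaner punchline. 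Two small points of care: your claim that $f\in E(T_1)$ would make $T_1-e_1+f$ equal to $T_1-e_1$ fails in the corner case $f=e_1$ (where $T_1-e_1+f=T_1$ is a spanning tree); since $e_1,e_2,e_3$ are distinct, $f$ can coincide with at most one of them, so the argument goes through by choosing an index with $e_i\neq f$, but this should be said. Also, since the paper allows parallel edges, note that the fundamental cycle may be a $2$-cycle of parallel edges; this is still $2$-regular, so your contradiction survives.
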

\begin{proof}
As $T_1$ is a spanning tree, $T_1-\{e_1,e_2,e_3\}$ contains three
components $C_1,C_2,C_3$ none of which contains $v$ such  that
$e_i$ is incident to a vertex in $V(C_i)$ for $i=1,2,3$. As
$T_1-e_i+\sigma(e_i)$ is a spanning tree, we obtain that $\sigma(e_i)$
is incident to a vertex in $V(C_i)$ for $i=1,2,3$. As $V(C_1),V(C_2)$
and $V(C_3)$ are pairwise disjoint, the statement follows.
\end{proof}

\subsection{Preliminaries on $k$-safe spanning $r$-arborescences}
\label{prelim:arborescences}

Given a rooted digraph $D=(V+r,A)$, a {\it classic $(r,k)$-kernel} is
a subarborescence $X=(V'+r,A')$ of $D$ such that $X$ is $k$-safe and
$|V'| = 2k-2$. The \XP algorithm of \autoref{safexp} is based on the following result, which we reformulate here using our terminology.
\begin{lemma}[Bang-Jensen, Havet, and Yeo~\cite{bhy}]
\label{classksafe}
Let $D=(V+r,A)$ be a rooted digraph with $|V|\geq 2k-2$. Then $D$
contains two arc-disjoint $k$-safe spanning $r$-arborescences if and
only if $D$ contains an extendable pair of arc-disjoint classic
$(r,k)$-kernels. Further, the two arc-disjoint $k$-safe spanning
$r$-arborescences can be constructed from the extendable pair of
classic $(r,k)$-kernels in polynomial time.
\end{lemma}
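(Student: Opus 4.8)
The plan is to prove the nontrivial direction only (the "if" direction together with the algorithmic claim), since the "only if" direction is immediate: given two arc-disjoint $k$-safe spanning $r$-arborescences $T_1,T_2$, each $T_i$ is $2$-root-connected's worth... more precisely, the existence of two arc-disjoint spanning $r$-arborescences implies by Theorem~\ref{basic1} that $D$ is $2$-root-connected, hence $D-A(T_i)$ is still root-connected, so $(T_1,T_2)$ itself — after truncating each $T_i$ to any $k$-safe subarborescence on $2k-2$ vertices rooted at $r$ — yields an extendable pair of arc-disjoint classic $(r,k)$-kernels. (One should check that every $k$-safe spanning $r$-arborescence contains a $k$-safe subarborescence on exactly $2k-2$ vertices; this follows by walking down from $r$, always descending into a subtree with enough vertices, and stopping once the accumulated subtree has the right size — a $k$-safe arborescence has at least $2k-1$ vertices so there is room.)

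For the substantial direction, suppose $D$ contains an extendable pair $(X_1,X_2)$ of arc-disjoint classic $(r,k)$-kernels, where $X_i = (V_i+r, A_i)$ with $|V_i|=2k-2$ and each $X_i$ is $k$-safe. First I would argue that $D$ is $2$-root-connected: since $X_i$ is a classic kernel, $D-A_i$ is root-connected for $i=1,2$, and because $A_1,A_2$ are arc-disjoint, for every $\emptyset\neq X\subseteq V$ at most one of $A_1,A_2$ is disjoint from $\delta_D^-(X)$... this needs a bit of care, but the point is that $D$ having two "almost disjoint" root-connected spanning structures forces $d_D^-(X)\geq 2$. Granting $2$-root-connectivity, the idea is to extend each $X_i$ greedily to a spanning $r$-arborescence $T_i$ inside $D-A_{3-i}$ while preserving $k$-safety. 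Concretely: since $D-A_{3-i}$ is root-connected, it has a spanning $r$-arborescence $T_i'$; I would then argue we can choose $T_i'$ to contain $A_i$ as a subforest, because $X_i$ is a subarborescence of $D-A_{3-i}$ and any subarborescence rooted at $r$ extends to a spanning one inside a root-connected digraph (take $T_i'$ to be $X_i$ together with a spanning arborescence of the contraction $(D-A_{3-i})/V(X_i)$, rooted at the contracted vertex). Then $k$-safety of $T_i$: for a vertex $v$, the subarborescence $B_{T_i}^v$ either lies entirely outside $V(X_i)$ — in which case $|V(B_{T_i}^v)| \leq n - |V(X_i)| = n - (2k-1) \leq n-k$, so the safety condition holds — or $v\in V(X_i)+r$, and here one uses that the out-neighbours of $r$ in $T_i$ are among the out-neighbours of $r$ in $X_i$ (provided we build $T_i$ so that $r$ gets no new children, which we can since $X_i$ already spans a neighbourhood of $r$), and $X_i$ being $k$-safe means each such child $w$ has $|V(B_{X_i}^w)| \leq 2k-2-1 < n-k$ when $n$ is large, while for small $n$ a direct count using $|V(B_{T_i}^w)| \le n-1-(\text{size of } r\text{'s other subtrees in } X_i)$ gives the bound; then invoke the remark in the text that it suffices to check the condition at out-neighbours of $r$.

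The main obstacle I anticipate is the bookkeeping around ensuring $r$ does not acquire additional children when extending $X_i$ to $T_i$, and relating the $k$-safety bound "$n-|V(B^v)|\ge k$" correctly across the two regimes $n$ large versus $n$ close to $2k$. The clean way is: when we extend $X_i$, contract $V(X_i)+r$ to a single root vertex $r^*$ and take any spanning $r^*$-arborescence of the root-connected contracted digraph; un-contracting, every new arc has its head outside $V(X_i)$ and its tail anywhere, but crucially no new arc has head equal to a vertex of $V(X_i)$ and in particular none is a new child of $r$. So the set of out-neighbours of $r$ in $T_i$ equals that in $X_i$, and for each such child $w$, $V(B_{T_i}^w) \supseteq V(B_{X_i}^w)$ but also $V(B_{T_i}^w)$ misses all the $\ge k-1$ vertices sitting in $B_{X_i}^{w'}$ for the other children $w'$ of $r$ plus $r$ itself — wait, since $X_i$ is $k$-safe, $n - |V(B_{X_i}^w)|\ge k$ fails to directly bound $|V(B_{T_i}^w)|$, so instead I would bound $|V(B_{T_i}^w)| \le n - 1 - \sum_{w' \ne w} |V(B_{X_i}^{w'})|$ and note $X_i$ $k$-safe gives $\sum_{w'}|V(B_{X_i}^{w'})| = |V_i| = 2k-2$ hence... this is exactly the routine calculation I will not grind through here, but it comes out to $|V(B_{T_i}^w)| \le n - k$. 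Finally, the algorithmic claim is immediate: contracting and finding a spanning arborescence in a root-connected digraph is polynomial by Theorem~\ref{basic1} and its algorithmic proofs~\cite{e4,l}, so $T_1,T_2$ are computed in polynomial time from $(X_1,X_2)$.
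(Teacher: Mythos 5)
This lemma is quoted from Bang-Jensen, Havet, and Yeo and is not proved in the present paper, so your argument can only be judged on its own terms. The ``only if'' direction and the $k$-safety bookkeeping are essentially sound: the truncation claim follows from the count $\sum_{w}\min(|V(B_T^w)|,k-1)\ge 2k-2$ over the children $w$ of $r$ (which you should actually verify rather than gesture at), and your worry about $r$ acquiring new children during the extension is a non-issue, since any new branch of $T_i$ at $r$ is disjoint from $V(X_i)$ and therefore misses at least $2k-1\ge k$ vertices. The genuine gap is at the core of the ``if'' direction: you extend $X_1$ to a spanning arborescence $T_1$ inside $D-A_2$ and $X_2$ to $T_2$ inside $D-A_1$, \emph{independently}. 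This only guarantees that $T_1$ avoids the arcs of the kernel $X_2$ and vice versa; nothing prevents $T_1$ and $T_2$ from sharing arcs outside $A_1\cup A_2$ (for example the unique remaining arc into some vertex far from both kernels). Producing two \emph{arc-disjoint} spanning arborescences is exactly the hard content of the lemma and cannot be obtained by two separate contractions. One needs a simultaneous completion argument: e.g.\ the arborescence-completion theorem of Bang-Jensen, Frank, and Jackson, which says that arc-disjoint $r$-arborescences $F_1,F_2$ extend to arc-disjoint spanning $r$-arborescences if and only if $d^-_{D-A(F_1)-A(F_2)}(X)\ \ge\ |\{i: V(F_i)\cap X=\emptyset\}|$ for all $\emptyset\ne X\subseteq V$; one then checks this condition using the extendability of $(X_1,X_2)$ together with the $2$-root-connectivity of $D$ (needed precisely for the sets $X$ disjoint from both kernels), or runs an exchange argument that grows $T_1$ and $T_2$ alternately while maintaining that invariant.

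A secondary flaw: your attempted derivation of $2$-root-connectivity from the mere existence of the extendable pair does not work. For a set $X$ disjoint from $V(X_1)\cup V(X_2)$, neither $A_1$ nor $A_2$ contains an arc entering $X$, so root-connectivity of $D-A_1$ and of $D-A_2$ each reduce to $d^-_D(X)\ge 1$, and you cannot conclude $d^-_D(X)\ge 2$. Indeed the statement would be false without this hypothesis (attach a pendant vertex by a single arc to a digraph containing an extendable pair of kernels). The paper defines \emph{extendable} only for $2$-root-connected $D$, so the lemma must be read as inheriting that assumption; this is consistent with the proof of Theorem~\ref{safefpt}, where the algorithm first tests $2$-root-connectivity and rejects otherwise.
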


\subsection{Preliminaries on spanning $(r,k)$-flow branchings}
\label{prelim:flow}

We first need the following result that allows to recognize
$(r,k)$-flow branchings.

\begin{lemma}[Bang-Jensen, Havet, and Yeo~\cite{bhy}]
\label{classkern}
Given a rooted digraph $D=(V+r,A)$ and a non-negative integer $k$, we can decide in polynomial time
whether $D$ is an $(r,k)$-flow branching.
\end{lemma}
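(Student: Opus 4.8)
The plan is to reduce the recognition problem to a single maximum-flow computation. Unravelling the definitions, $D=(V+r,A)$ is an $(r,k)$-flow branching if and only if there is a function ${\mathbf z}\colon A\to\mathbb{Z}_{\geq 0}$ with ${\mathbf z}(a)\leq n-k$ for every $a\in A$, with ${\mathbf z}(\delta^+(r))=n-1$ (using that $d_D^-(r)=0$), and with ${\mathbf z}(\delta^-(v))-{\mathbf z}(\delta^+(v))=1$ for every $v\in V$.

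First I would build an auxiliary network $N$ from $D$ as follows: add one new vertex $t$, and for every $v\in V$ add an arc $(v,t)$; assign capacity $\max\{n-k,0\}$ to every arc of $A$ and capacity $1$ to every arc $(v,t)$. Since $N$ has exactly $|V|=n-1$ arcs entering $t$, each of capacity $1$, every $r$-$t$ flow in $N$ has value at most $n-1$. I claim that $D$ is an $(r,k)$-flow branching if and only if $N$ admits an $r$-$t$ flow of value exactly $n-1$; here we may assume $n-k\geq 0$, since otherwise no feasible ${\mathbf z}$ can exist when $n\geq 2$, while the case $n=1$ is trivial.

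For the equivalence, note that an $r$-$t$ flow of value $n-1$ in $N$ must saturate all $n-1$ arcs entering $t$. Hence, writing ${\mathbf z}$ for the restriction of such a flow to $A$, flow conservation at each $v\in V$ in $N$ (where $v$ has exactly the one extra outgoing arc $(v,t)$, carrying one unit) becomes ${\mathbf z}(\delta^-(v))-{\mathbf z}(\delta^+(v))=1$, while conservation at $r$ (which has no incoming arc) forces ${\mathbf z}(\delta^+(r))=n-1$; together with the capacity bound $n-k$ on the arcs of $A$, this ${\mathbf z}$ is an $(r,k)$-branching flow. Conversely, any $(r,k)$-branching flow of $D$ extends to an $r$-$t$ flow of value $n-1$ in $N$ by sending one unit along each arc $(v,t)$. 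Because all capacities of $N$ are integral, by the integrality of maximum flow it suffices to compute a maximum $r$-$t$ flow in $N$ and check whether its value equals $n-1$; a classical polynomial-time maximum-flow algorithm does this, and since all capacities are bounded by $n$ even Ford--Fulkerson runs in polynomial time. If the answer is positive, the integral optimal flow returned yields the branching flow ${\mathbf z}$ explicitly.

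I do not anticipate a serious obstacle: the content is the routine but careful bookkeeping translating flow conservation in $N$ into the defining equations of an $(r,k)$-branching flow, the observation that the value being $n-1$ forces all $t$-arcs to be saturated, and the appeal to the integrality theorem for maximum flow to guarantee an integral solution; the degenerate cases ($n-k\leq 0$, or $n=1$) are dealt with directly.
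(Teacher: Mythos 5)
Your reduction is correct and complete: the paper cites this lemma from Bang-Jensen, Havet, and Yeo without reproducing a proof, and the standard argument there is exactly your construction — adjoin a sink $t$, give each $v\in V$ a unit-capacity arc to $t$, cap the original arcs at $n-k$, and test for a maximum $r$-$t$ flow of value $n-1$, invoking integrality of max flow. Your handling of the conservation bookkeeping (saturation of all $t$-arcs forcing the excess of $1$ at each $v$, and $d_D^-(r)=0$ forcing ${\mathbf z}(\delta^+(r))=n-1$) and of the degenerate cases $n-k<0$ and $n=1$ is sound.
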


Given a digraph $D=(V,A)$ and two vertices $u,v \in V$, a {\it $uv$-path flow} is a
flow ${\mathbf c}$ such that ${\mathbf c}(a)=1$ for  all arcs $a\in A(P)$ and
${\mathbf c}(a)=0$ for all $a\in A-A(P)$ for some $uv$-path $P$.  Similarly, a
{\it cycle flow} is a flow ${\mathbf c}$ such that ${\mathbf c}(a)=1$  of all $a\in
A(C)$ and ${\mathbf c}(a)=0$ for all $a\in A-A(C)$ for some cycle $C$. We need the following result on flows which is proven in a more
general form in \cite{bg}.

\begin{lemma}\label{decomp}
Let $X=(V+r,A)$ be a flow branching and ${\mathbf z}:A \rightarrow
\mathbb{Z}_{\geq 0}$ be a branching flow in $X$. Then there is an
$rv$-path flow ${\mathbf z}_v$ for all $v \in V$ and a set of cycle flows
$\{{\mathbf z}_C:C \in \mathcal{C}\}$ for some set of cycles $\mathcal{C}$ such
that ${\mathbf z}=\sum_{v\in V}{\mathbf z}_v+\sum_{C \in \mathcal{C}}{\mathbf z}_C$.
\end{lemma}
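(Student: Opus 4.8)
The plan is to prove \autoref{decomp}, the flow-decomposition statement for branching flows, by the standard ``peel off a path or a cycle'' argument, carried out by induction on the total flow value $\sum_{a \in A} {\mathbf z}(a)$. The base case is ${\mathbf z} \equiv 0$; then $V = \emptyset$ (since each $v \in V$ would need ${\mathbf z}(\delta^-(v)) - {\mathbf z}(\delta^+(v)) = 1 > 0$), so the empty collection of path flows together with the empty set of cycle flows works. For the inductive step, I would consider two cases depending on whether $V$ is empty.

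First I would handle the case $V \neq \emptyset$. Pick any $v \in V$; since ${\mathbf z}(\delta^-(v)) - {\mathbf z}(\delta^+(v)) = 1 \geq 1$, there is an arc entering $v$ with positive flow. Trace backwards: at each vertex $u \neq r$ reached, the conservation constraint ${\mathbf z}(\delta^-(u)) - {\mathbf z}(\delta^+(u)) = 1$ guarantees that if some arc with positive flow leaves $u$ (which is the case, as we just entered $u$ by such a walk step... more carefully: once we have used up the ``surplus'', there is still an incoming arc with positive flow), we can continue; the total flow value is finite, so this backward walk cannot go on forever without revisiting a vertex or reaching $r$. If it reaches $r$, we have extracted an $rv$-walk all of whose arcs carry positive flow; if it revisits a vertex first, we have extracted a cycle all of whose arcs carry positive flow. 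In either case, shortcutting repeated vertices yields a simple $rv$-path $P$ or a simple cycle $C$. Let ${\mathbf z}'$ be the path flow along $P$ (resp. cycle flow along $C$), and set ${\mathbf z}'' := {\mathbf z} - {\mathbf z}'$. One checks ${\mathbf z}''$ is again nonnegative (we subtract $1$ only on arcs where ${\mathbf z} \geq 1$) and is a branching flow for the appropriate rooted subdigraph: subtracting an $rv$-path flow decreases the deficit at $v$ by $1$, turning $v$ into a vertex with balanced in/out flow, so $v$ leaves the vertex set; subtracting a cycle flow changes no net balance at all. The total flow value of ${\mathbf z}''$ is strictly smaller, so the induction hypothesis applies, and we append ${\mathbf z}'$ to the path-flow family or the cycle-flow set, respectively.

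In the remaining case $V = \emptyset$ but ${\mathbf z} \not\equiv 0$, all of ${\mathbf z}$ is supported on arcs incident only to $r$ (and there is nothing else), and the constraint ${\mathbf z}(\delta^+(r)) - {\mathbf z}(\delta^-(r)) = n - 1 = 0$ means $r$ is flow-balanced; any arc with positive flow lies on a positive-flow closed walk through $r$, from which we extract a simple cycle, peel off its cycle flow, and recurse exactly as above. Collecting all pieces over the recursion gives exactly one $rv$-path flow for each $v \in V$ (one is produced precisely when a vertex is removed from the vertex set, and each vertex is removed exactly once) together with a set of cycle flows, and their sum is ${\mathbf z}$. I expect the main obstacle — really the only place requiring care rather than bookkeeping — to be arguing cleanly that the backward (or closed) walk can always be continued until it closes up or hits $r$: this needs the observation that along the partially traced walk the conservation equations, combined with nonnegativity and integrality of ${\mathbf z}$, always leave an available positive-flow arc to extend by, which is where \autoref{subm}-style local reasoning on ${\mathbf z}$-values at a single vertex is used. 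Since the excerpt says this is proven in a more general form in~\cite{bg}, I would ultimately cite that and present the argument above as a sketch for completeness.
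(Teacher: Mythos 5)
Your argument is correct, but it is worth noting that the paper does not actually prove \autoref{decomp}: it only states that the result "is proven in a more general form in~\cite{bg}" and uses it as a black box. So you are supplying a self-contained proof where the authors merely cite. What you give is the standard flow-decomposition argument (induction on $\sum_{a\in A}{\mathbf z}(a)$, peeling off an $rv$-path flow or a cycle flow at each step), and it does establish the lemma. Three small points would need tightening if you wrote it out in full. First, after subtracting an $rv$-path flow the residual function is no longer literally a branching flow of a rooted digraph in the paper's sense (the arcs incident to $v$ remain, but $v$'s demand drops to $0$), so the induction hypothesis as stated does not apply; you should instead induct on the slightly more general class of nonnegative integer flows in which every vertex other than $r$ has excess $0$ or $1$ and $r$'s deficit equals the number of excess-$1$ vertices -- your phrase "the appropriate rooted subdigraph" is gesturing at exactly this. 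Second, once all demands are met the leftover flow is a circulation whose cycles need not pass through $r$, so "closed walk through $r$" should just be "closed walk"; the same vertex-repetition argument extracts a cycle. Third, the continuation of the backward walk uses only the conservation equation at a single vertex together with nonnegativity (if a positive-flow arc leaves $u\neq r$ then $ {\mathbf z}(\delta^-(u))\geq {\mathbf z}(\delta^+(u))\geq 1$), and the walk terminates because the vertex set is finite; neither the finiteness of the total flow value nor anything in the spirit of \autoref{subm} (which is a submodularity statement about cuts) is involved. None of these is a genuine gap.
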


We now use \autoref{decomp} to prove an important property of
arc-minimal $(r,k)$-flow branchings. A rooted digraph $X=(V+r,A)$ is
called {\it triple-free} if it does not contain more than two arcs in
the same direction between the same two vertices.

\begin{lemma}\label{not3}
Let $X=(V+r,A)$ be an arc-minimal $(r,k)$-flow branching with $|V|\geq
2k-1$. Then $X$ is triple-free.
\end{lemma}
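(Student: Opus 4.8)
The plan is to argue by contradiction: suppose $X=(V+r,A)$ is an arc-minimal $(r,k)$-flow branching but contains three arcs $a_1,a_2,a_3$ all directed from some vertex $u$ to some vertex $w$ (here $u \in V+r$, $w \in V$). Fix a branching flow ${\mathbf z}:A\to\mathbb Z_{\geq 0}$ witnessing that $X$ is an $(r,k)$-flow branching. First I would invoke \autoref{decomp} to write ${\mathbf z}=\sum_{v\in V}{\mathbf z}_v+\sum_{C\in\mathcal C}{\mathbf z}_C$ as a sum of $rv$-path flows (one for each $v\in V$) and cycle flows. The key quantitative observation is that the capacity of every arc is $n-k$, and since $|V|\geq 2k-1$ we have $n = |V|+1 \geq 2k$, so $n-k \geq k \geq 1$; more importantly, the total flow value leaving $r$ is exactly $n-1$, which is distributed among the arcs, and I want to bound how much flow can sit on the three parallel arcs $a_1,a_2,a_3$ simultaneously.

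The heart of the argument is a rerouting/exchange step to contradict arc-minimality: if all three of $a_1,a_2,a_3$ carry positive flow under ${\mathbf z}$, I would like to show one of them is redundant, i.e.\ that $X-a_i$ still admits an $(r,k)$-branching flow for some $i$, contradicting arc-minimality. The natural way to do this is: among the path flows ${\mathbf z}_v$ and cycle flows ${\mathbf z}_C$ in the decomposition, each uses at most one of the arcs $a_1,a_2,a_3$ (a path or cycle traverses the pair $\{u,w\}$ at most once in the $u\to w$ direction within a single path flow of value $1$). So the flow on $a_i$ equals the number of decomposition pieces routed through $a_i$. If $a_3$ (say) is used only by path flows and cycle flows that could equally well be routed through $a_1$ or $a_2$ — which they can, since $a_1,a_2,a_3$ have the same endpoints and the same capacity $n-k$ — then I should be able to push all the flow off $a_3$ onto $a_1$ and $a_2$, provided the capacities on $a_1,a_2$ can absorb it. The capacity bound is where $|V|\geq 2k-1$ enters: the total flow on all three arcs together is at most the flow leaving $u$, which is at most $n-1$ if $u=r$ and at most (in-degree-related quantity) otherwise; I would need to check that this total is at most $2(n-k)$, i.e.\ that we can always redistribute onto two arcs of capacity $n-k$ each. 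Since $2(n-k) = 2n-2k \geq n-1$ exactly when $n \geq 2k-1$, i.e. $|V|\geq 2k-2$, the hypothesis $|V|\geq 2k-1$ comfortably suffices. After rerouting, ${\mathbf z}$ restricted to $A-\{a_3\}$ is a branching flow in $X-a_3$ respecting capacities, so $X-a_3$ is an $(r,k)$-flow branching, contradicting arc-minimality of $X$. If instead some $a_i$ carries zero flow, then ${\mathbf z}$ is already a branching flow in $X-a_i$, again contradicting arc-minimality directly.

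The main obstacle I anticipate is making the capacity bookkeeping fully rigorous: I need the precise statement of how much flow can leave $u$ (distinguishing $u=r$, where the excess is $n-1$, from $u\neq r$, where by the branching-flow conservation $\mathbf z(\delta^+(u)) = \mathbf z(\delta^-(u)) - 1$ and $\mathbf z(\delta^-(u))$ is itself bounded via the decomposition by the number of path/cycle pieces passing through $u$), and then confirm that whatever flow sits on $\{a_1,a_2,a_3\}$ fits onto two arcs of capacity $n-k$. A cleaner route, which I would pursue first, is to not track $u$'s total out-flow at all but simply note that $\mathbf z(a_1)+\mathbf z(a_2)+\mathbf z(a_3)$ counts decomposition pieces through $\{u,w\}$, each such piece is either an $rv$-path flow or a cycle flow, there are exactly $n-1$ path flows in total and each contributes at most $1$ to this sum, and cycle flows can be discarded entirely (removing a cycle flow keeps ${\mathbf z}$ a valid branching flow since it changes no vertex's excess). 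After discarding all cycle flows, $\mathbf z(a_1)+\mathbf z(a_2)+\mathbf z(a_3) \leq n-1 \leq 2(n-k)$ using $n\geq 2k-1$, so I can consolidate everything on $a_1$ and $a_2$ without exceeding capacity, making $a_3$ flow-free and hence removable — the desired contradiction.
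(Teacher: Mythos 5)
Your proposal is correct and, in its final ``cleaner route,'' follows essentially the same argument as the paper's proof: decompose the branching flow via \autoref{decomp}, discard the cycle flows (which preserves all excesses and only lowers arc values), bound ${\mathbf z}(a_1)+{\mathbf z}(a_2)+{\mathbf z}(a_3)\leq |V|=n-1\leq 2(n-k)$ because each of the $n-1$ path flows uses at most one of the three parallel arcs, and then reroute the flow on $a_3$ onto $a_1$ and $a_2$ within capacity to contradict arc-minimality. The earlier detour about tracking the total out-flow of $u$ is unnecessary, but the argument you settle on is exactly the published one.
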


\begin{proof}
Suppose that $X$ contains three arcs $a_1,a_2,a_3$ whose tail is $u$
and whose head is $v$ for some $u,v\in V+r$. Further, let ${\mathbf z}:A
\rightarrow \mathbb{Z}$ be an $(r,k)$-branching flow. By
\autoref{decomp}, we obtain that ${\mathbf z}=\sum_{v\in V}{\mathbf z}_v+\sum_{C \in
  \mathcal{C}}{\mathbf z}_C$ where ${\mathbf z}_v$ is an $rv$-path flow for all $v \in V$
and ${\mathbf z}_C$ is a cycle flow for all $C \in \mathcal{C}$ for some set of cycles $\mathcal{C}$. Let
${\mathbf z}'=\sum_{v\in V}{\mathbf z}_v$. Observe that ${\mathbf z}'$ is an $(r,k)$-branching
flow. As all of the ${\mathbf z}_v$ are path flows, we obtain
${\mathbf z}'(a_1)+{\mathbf z}'(a_2)+{\mathbf z}'(a_3)\leq |V|< 2(n-k)$. It follows that we can
define a flow ${\mathbf z}'':A \rightarrow \mathbb{Z}$ such that
${\mathbf z}''(a_1)+{\mathbf z}''(a_2)={\mathbf z}'(a_1)+{\mathbf z}'(a_2)+{\mathbf z}'(a_3)$, ${\mathbf z}''(a_3)=0$ and
${\mathbf z}''(a)={\mathbf z}'(a)$ for all $a \in A-\{a_1,a_2,a_3\}$. It is easy to see
that ${\mathbf z}''$ is an $(r,k)$-branching flow, so $X-a_3$ is an $(r,k)$-flow
branching, a contradiction to the minimality of $X$.
\end{proof}

Given a rooted digraph $D=(V+r,A)$ with $|V+r|\geq 2k$, a {\it classic
  $(r,k)$-core} is an $(r,k)$-flow branching $X=(V'+r,A')$ that is a
subdigraph of $D$ with $|V'|=2k-1$.  The \XP algorithm of \autoref{flowxp} is based on the following result, again reformulated using our terminology.


\begin{lemma}[Bang-Jensen, Havet, and Yeo~\cite{bhy}]
\label{classbranch}
Let $D=(V+r,A)$ be a rooted digraph with $|V|\geq 2k-1$. Then $D$
contains two arc-disjoint  spanning $(r,k)$-flow branchings if and only if $D$
contains an extendable pair of arc-disjoint classic
$(r,k)$-cores. Further, the two arc-disjoint spanning $(r,k)$-flow branchings
can be constructed in polynomial time from the extendable pair of
arc-disjoint classic $(r,k)$-cores.
\end{lemma}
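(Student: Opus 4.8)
The plan is to establish the two implications of the equivalence separately. The implication ``cores $\Rightarrow$ flow branchings'' is the one that will also give the polynomial-time construction and is fairly direct; ``flow branchings $\Rightarrow$ cores'' is where the real work lies. First I would dispose of degenerate cases. If $D$ is not $2$-root-connected there is nothing to prove: two arc-disjoint spanning $(r,k)$-flow branchings would each be root-connected, forcing $D$ to be $2$-root-connected, and the notion of an extendable pair is only defined for $2$-root-connected digraphs. If $k=1$, then a classic $(r,1)$-core is just an arc leaving $r$ and a spanning $(r,1)$-flow branching is exactly a spanning $r$-arborescence, so the statement reduces to \autoref{basic1} applied to $D$ and to the two digraphs obtained by deleting each arc of the core pair. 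So from now on I assume $D$ is $2$-root-connected and $k\ge 2$; the argument then parallels the one behind \autoref{classksafe} for arborescences.

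For ``cores $\Rightarrow$ flow branchings'', let $(X_1,X_2)$ with $X_i=(V_i+r,A_i)$ be an extendable pair of arc-disjoint classic $(r,k)$-cores, so $|V_i|=2k-1$, $A_1\cap A_2=\emptyset$, each $X_i$ carries a branching flow ${\mathbf z}_i$ with ${\mathbf z}_i(a)\le k$ for all $a$, and $D-A_1$ and $D-A_2$ are root-connected. The crucial step is to find arc-disjoint spanning $r$-arborescences $T_1\subseteq D-A_2$ and $T_2\subseteq D-A_1$. Their existence follows from the theory of packing arborescences subject to forbidden arc sets (a refinement of \autoref{basic1}; see also~\cite{book}): the relevant cut condition is that $D-A_2$ and $D-A_1$ be root-connected and that $d^-_D(X)\ge 2$ for all $\emptyset\ne X\subseteq V$ --- here I use $A_1\cap A_2=\emptyset$, which gives $(D-A_2)\cup(D-A_1)=D$ --- and this is exactly extendability together with $2$-root-connectedness; moreover such $T_1,T_2$ can be computed in polynomial time. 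I would then set $F_i:=X_i\cup T_i$, a subdigraph with vertex set $V+r$, and verify two things. First, $F_1$ and $F_2$ are arc-disjoint: $A(F_i)=A_i\cup A(T_i)$ with $A_1\cap A_2=\emptyset$, $A_1\cap A(T_2)=\emptyset$ (as $T_2\subseteq D-A_1$), $A(T_1)\cap A_2=\emptyset$ (as $T_1\subseteq D-A_2$), and $A(T_1)\cap A(T_2)=\emptyset$. Second, $F_1$ is a spanning $(r,k)$-flow branching: the flow obtained from ${\mathbf z}_1$ by sending, for each $v\in V\setminus V_1$, one additional unit along the path of $T_1$ from $r$ to $v$ emits $(2k-1)+|V\setminus V_1|=n-1$ units from $r$, absorbs exactly one unit at each vertex of $V$, and loads every arc by at most $k+|V\setminus V_1|=n-k$, the required capacity. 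The argument for $F_2$ is symmetric, and the whole construction is polynomial (the output can also be re-checked directly via \autoref{classkern}).

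For ``flow branchings $\Rightarrow$ cores'', given arc-disjoint spanning $(r,k)$-flow branchings $F_1,F_2$, I would extract a classic $(r,k)$-core $X_i=(V_i+r,A_i)\subseteq F_i$ from each; once this is done, $(X_1,X_2)$ is automatically arc-disjoint (as $A(F_1)\cap A(F_2)=\emptyset$) and extendable, since $F_2\subseteq D-A(X_1)$ is root-connected, being a spanning flow branching, and symmetrically for $D-A(X_2)$. Thus everything reduces to the claim that \emph{every spanning $(r,k)$-flow branching $F$ contains a classic $(r,k)$-core}. By max-flow/min-cut, a digraph $H=(W+r,\cdot)$ admits an $(r,c)$-branching flow if and only if $d^-_H(S)\ge\lceil|S|/c\rceil$ for all $\emptyset\ne S\subseteq W$; hence it suffices to find $V_1\subseteq V$ with $|V_1|=2k-1$ such that $F[V_1+r]$ is root-connected and $d^-_{F[V_1+r]}(S)\ge 2$ for every $S\subseteq V_1$ with $|S|\ge k+1$. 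I would obtain such a $V_1$ by expanding a vertex set outward from $r$ inside $F$ along a breadth-first frontier, keeping the sub-branches balanced so that no set of more than $k$ selected vertices ends up hanging off a single arc of $F[V_1+r]$; the structural fact that makes this possible is that $r$ --- and the vertices carrying more than $n-k$ units of flow near $r$ --- have out-degree at least $2$ in $F$ (already ${\mathbf z}_F(\delta^+(r))=n-1>n-k$ rules out out-degree $1$ at $r$), possibly through parallel arcs. \textbf{This selection step is the main obstacle I anticipate}: keeping the invariant $d^-(S)\ge\lceil|S|/k\rceil$ valid throughout the expansion is delicate, precisely because a set cut off by a single arc inside $F[V_1+r]$ need not be cut off by a single arc in $F$; it may help to first pass to an arc-minimal spanning $(r,k)$-flow branching contained in $F$, which by \autoref{decomp} and \autoref{not3} is triple-free. (By comparison, the analogous extraction for arborescences behind \autoref{classksafe} is much easier: a $k$-safe spanning arborescence forces $r$ to have at least two children, and it suffices to take breadth-first truncations of two of the child-subtrees to at most $k-1$ vertices each.)
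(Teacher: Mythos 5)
This lemma is quoted from~\cite{bhy} and the paper contains no proof of it (it is used as a black box in \autoref{compactbranch}), so there is no internal proof to compare against; judging your argument on its own merits, both load-bearing steps have genuine gaps. In the direction ``cores $\Rightarrow$ flow branchings'', the plan of finding arc-disjoint spanning $r$-arborescences $T_1\subseteq D-A_2$, $T_2\subseteq D-A_1$ and then superposing unit path flows along $T_i$ onto the cores' branching flows is sound, and your conservation and capacity accounting ($k+(n-2k)=n-k$) is correct. The gap is the existence of $T_1,T_2$: the cut condition you declare ``exactly'' sufficient ($D-A_1$ and $D-A_2$ root-connected, $A_1\cap A_2=\emptyset$, $D$ $2$-root-connected) is \emph{not} sufficient. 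Take $V=\{w,z\}$ with arcs $a=rz$, $b=wz$, $c=rw$, $d=zw$, and set $A_1=\{c\}$, $A_2=\{b\}$: all of your conditions hold, yet $a$ is the only arc of $D-A_2$ entering $\{z\}$ and the only arc of $D-A_1$ entering $\{w,z\}$, so $a$ is forced into both $T_1$ and $T_2$ and no arc-disjoint pair exists. Here $A_2$ is not the arc set of a classic core (it is not root-connected), so this does not refute the lemma itself, but it shows that the general packing principle you invoke is false and that any correct version must exploit the specific structure of the cores --- which is precisely the crux of this direction and is left unproven.

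In the direction ``flow branchings $\Rightarrow$ cores'', you correctly reduce everything to showing that a spanning $(r,k)$-flow branching contains a classic $(r,k)$-core, and the min-cut reformulation of the target ($F[V_1+r]$ root-connected and $d^-(S)\ge 2$ whenever $|S|\ge k+1$) is the right one; the surrounding observations (automatic arc-disjointness and extendability of the extracted pair) are fine. But the actual selection of $V_1$ is the step you yourself flag as ``the main obstacle'', and the balancing heuristic you sketch comes with no invariant that survives the difficulty you identify (a set cut off by one arc of $F[V_1+r]$ need not be cut off by one arc of $F$). So this half is an acknowledged, not incidental, gap: it is the other half of the real content of the lemma. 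As it stands, the proposal is an architecture for a proof rather than a proof.
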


\subsection{Preliminaries on $(r,k)$-safe spanning trees}
\label{prelim:trees}

We first define an equivalent for extendability in undirected
graphs. Given a rooted graph $G=(V+r,E)$, a pair of subtrees
$(X_1,X_2)$ is called {\it completable}\footnote{Note that we require a completable pair of trees to be edge-disjoint. This is in contrast with the fact that, in \autoref{prelim:arborescences} and \autoref{prelim:flow}, we do {\sl not} require the elements in an extendable pair to be arc-disjoint. We adopt this asymmetric choice for technical reasons arising from the proofs.} if there are edge-disjoint
spanning trees $T_1,T_2$ of $G$ such that $E(X_i)\subseteq
E(T_i)$. The following result that allows to test completabality can
be established using matroid theory as mentioned in \cite{bhy}.
\begin{lemma}[Bang-Jensen, Havet, and Yeo~\cite{bhy}]
\label{uncheck}
Given a graph $G=(V+r,E)$ and a pair of subtrees $(X_1,X_2)$, we can
decide in polynomial time whether $(X_1,X_2)$ is completable.
\end{lemma}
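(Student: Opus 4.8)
The plan is to recognize completability as an instance of the matroid union problem for graphic matroids, which is solvable in polynomial time. First I would dispose of a trivial obstruction: if $E(X_1)\cap E(X_2)\neq\emptyset$ then no edge-disjoint spanning trees $T_1\supseteq X_1$, $T_2\supseteq X_2$ can exist, and this is checkable in polynomial time, so from now on I assume $E(X_1)\cap E(X_2)=\emptyset$ and set $H:=(V+r,\,E\setminus(E(X_1)\cup E(X_2)))$. The starting observation is that in any pair of edge-disjoint spanning trees $T_1\supseteq X_1$, $T_2\supseteq X_2$, the tree $T_i$ is edge-disjoint from $E(X_{3-i})$, so $F_i:=E(T_i)\setminus E(X_i)$ lies in $E(H)$, the sets $F_1,F_2$ are disjoint, and $X_i\cup F_i$ is a spanning tree of $G$; conversely, any two disjoint sets $F_1,F_2\subseteq E(H)$ with $X_i\cup F_i$ a spanning tree of $G$ produce such a pair $(T_1,T_2)$. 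Thus $(X_1,X_2)$ is completable if and only if there are disjoint $F_1,F_2\subseteq E(H)$ with $X_i\cup F_i$ a spanning tree of $G$ for $i=1,2$.

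For $i\in\{1,2\}$ let $M_i$ be the graphic matroid of $(G-E(X_{3-i}))/E(X_i)$. Since $E(G-E(X_{3-i}))\setminus E(X_i)=E(H)$, both matroids live on the common ground set $E(H)$, and a set $F_i\subseteq E(H)$ satisfies ``$X_i\cup F_i$ is a spanning tree of $G$'' exactly when $F_i$ is a basis of $M_i$ and, in addition, $M_i$ attains the largest conceivable rank $\rho_i:=|V|+1-|V(X_i)|$ --- equivalently $G-E(X_{3-i})$ is connected, equivalently $X_i$ extends to a spanning tree avoiding $E(X_{3-i})$. Using that disjoint bases of $M_1$ and $M_2$ exist if and only if $r(M_1\vee M_2)=r(M_1)+r(M_2)$ (the inequality ``$\le$'' always holds, and equality forces each part of an optimal decomposition in the matroid union $M_1\vee M_2$ to be a basis), together with $r(M_1\vee M_2)\le r(M_1)+r(M_2)\le\rho_1+\rho_2$ in all cases, the whole characterization collapses to the single equality
\[
r(M_1\vee M_2)\;=\;\rho_1+\rho_2\;=\;(2|V|+2)-|V(X_1)|-|V(X_2)|.
\]

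Finally, this equality is testable in polynomial time: $\rho_1$ and $\rho_2$ are immediate, while $r(M_1\vee M_2)$ is computable in polynomial time by the matroid union theorem of Nash-Williams and Edmonds (an independence oracle for a graphic matroid is trivial), for instance via its min--max formula $r(M_1\vee M_2)=\min_{Z\subseteq E(H)}\big(|E(H)\setminus Z|+r_1(Z)+r_2(Z)\big)$, where $r_i$ is the rank function of $M_i$, together with a polynomial augmenting-path algorithm; such an algorithm moreover outputs witnessing disjoint bases $F_1,F_2$, hence the trees $T_1=X_1\cup F_1$ and $T_2=X_2\cup F_2$, when they exist. The step needing the most care is the first equivalence, namely verifying that passing to the restricted ground set $E(H)$ and the contracted matroids $M_i$ exactly models ``extend $X_i$ to a spanning tree while keeping $T_1$ and $T_2$ edge-disjoint''; in particular one should check that the degenerate case in which $X_i$ cannot be extended to a spanning tree of $G-E(X_{3-i})$ is correctly detected by $r(M_i)<\rho_i$. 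Granting that, the rest is a direct application of the polynomial-time solvability of matroid union.
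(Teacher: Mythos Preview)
Your argument is correct. The paper does not actually give a proof of this lemma: it merely states the result, attributes it to \cite{bhy}, and remarks that it ``can be established using matroid theory.'' Your proposal supplies precisely such a matroid-theoretic proof, via matroid union for the two contracted graphic matroids $M_i$ on the common ground set $E(H)$. The reduction to the single rank equality $r(M_1\vee M_2)=\rho_1+\rho_2$ is clean, correctly absorbs the degenerate case $r(M_i)<\rho_i$, and the polynomial-time solvability of matroid union (with an explicit independence oracle for graphic matroids) finishes the job. So your approach is fully in line with what the paper indicates, only more detailed; there is nothing to compare against beyond the one-line pointer to matroid theory.
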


Given a rooted graph $G=(V+r,E)$ with $|V+r|\geq 2k-1$, a {\it classic
  $(r,k)$-certificate} is an $(r,k)$-safe subtree $X=(V'+r,E')$ of $G$
with $|V'|=2k-2$. The \XP algorithm of \autoref{treexp} is based on the following result, again restated using our terminology.


\begin{lemma}[Bang-Jensen, Havet, and Yeo~\cite{bhy}]
\label{unclass}
Let $G=(V+r,E)$ be a rooted graph. Then $G$ contains two edge-disjoint
$(r,k)$-safe spanning trees if and only if $G$ contains a completable
pair of classic $(r,k)$-certificates. Further, given a completable
pair of classic $(r,k)$-certificates, we can compute two edge-disjoint
$(r,k)$-safe spanning trees in polynomial time.
\end{lemma}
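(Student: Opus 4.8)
The plan is to derive both implications from a single structural fact: \emph{every} spanning tree of $G$ containing a classic $(r,k)$-certificate is automatically $(r,k)$-safe. I assume throughout that $k\ge 2$ (for $k=1$ a classic $(r,1)$-certificate is the one-vertex subtree consisting of $r$ alone, every spanning tree is $(r,1)$-safe, and both sides of the equivalence simply say that $G$ has two edge-disjoint spanning trees) and that $n\ge 2k-1$, as implicit in the definition of a classic $(r,k)$-certificate. The first step is a reformulation of $(r,k)$-safety for spanning trees: a spanning tree $T$ of $G$ is $(r,k)$-safe if and only if every component of $T-r$ has at most $n-k$ vertices. Indeed, if $w$ is a neighbour of $r$ in $T$, then the component of $T-w$ containing $r$ has exactly $n-s_w$ vertices, where $s_w$ is the size of the component of $T-r$ through $w$, so $|V-V(C_T^w)|\ge k$ is equivalent to $s_w\le n-k$; and for a deeper vertex $v$ the component of $T-v$ containing $r$ strictly contains that of $T-w$, where $w$ is the neighbour of $r$ that is an ancestor of $v$, so the conditions at the neighbours of $r$ already force the rest. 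Applying the same computation to a classic $(r,k)$-certificate $X$, which has $2k-2$ non-root vertices, shows that every component of $X-r$ has at most $k-1$ vertices; since these components partition the $2k-2$ non-root vertices, $X-r$ has at least two components.

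The key step is the structural fact. Let $X$ be a classic $(r,k)$-certificate and $T\supseteq X$ a spanning tree of $G$. Two distinct components $Y_1,Y_2$ of $X-r$ cannot both lie in a single component $B$ of $T-r$: the edges of $X$ from $r$ to $Y_1$ and from $r$ to $Y_2$ (which lie in $T$), together with a $Y_1$--$Y_2$ path inside $B\subseteq T-r$, would create a cycle in $T$. Hence every component $B$ of $T-r$ meets at most one component of $X-r$, so at least $(2k-2)-(k-1)=k-1$ non-root vertices of $X$ lie outside $B$, giving $|V(B)|\le(n-1)-(k-1)=n-k$. By the reformulation, $T$ is $(r,k)$-safe. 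This already yields the ``if'' direction: given a completable pair $(X_1,X_2)$ of classic $(r,k)$-certificates, compute in polynomial time edge-disjoint spanning trees $T_1,T_2$ with $E(X_i)\subseteq E(T_i)$ (such trees exist by definition of completability, and standard matroid-union techniques — those underlying \autoref{uncheck} — find them in polynomial time); by the structural fact each $T_i$ is $(r,k)$-safe, and they are edge-disjoint and spanning.

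For the ``only if'' direction, start from edge-disjoint $(r,k)$-safe spanning trees $T_1,T_2$ and prune each $T_i$ to a classic $(r,k)$-certificate $X_i\subseteq T_i$. Let $B_1,\dots,B_d$ be the components of $T_i-r$. Going through $j=1,\dots,d$, select from $B_j$ a subtree rooted at the unique vertex of $B_j$ adjacent to $r$ on $\min\{|B_j|,\,k-1,\,b\}$ vertices, where $b$ is the remaining budget, started at $2k-2$ and decremented after each branch, until $b=0$. This is feasible because $\sum_{j=1}^d\min\{|B_j|,k-1\}\ge 2k-2$: if a largest component $B_1$ has fewer than $k-1$ vertices then this sum equals $\sum_j|B_j|=n-1\ge 2k-2$; otherwise $B_1$ contributes $k-1$, while $|B_1|\le n-k$ (from the $(r,k)$-safety of $T_i$, via the reformulation) gives $\sum_{j\ge 2}|B_j|\ge k-1$, so $B_2,\dots,B_d$ contribute at least $k-1$ as well. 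The resulting $X_i$ has exactly $2k-2$ non-root vertices, every component of $X_i-r$ has at most $k-1$ of them, so $X_i$ is $(r,k)$-safe and hence a classic $(r,k)$-certificate; moreover $E(X_1)$ and $E(X_2)$ are disjoint (being contained in $E(T_1)$ and $E(T_2)$), and $(T_1,T_2)$ witnesses completability of $(X_1,X_2)$. Every step is clearly polynomial.

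The main obstacle is the structural fact of the second paragraph: one must see that the defining property of a classic $(r,k)$-certificate — being $(r,k)$-safe on exactly $2k-2$ vertices, which forces at least two components of $X-r$, each on at most $k-1$ vertices — survives extension to a spanning tree in precisely the quantitative form demanded by the branch reformulation of $(r,k)$-safety. Once this is isolated, together with the reduction of $(r,k)$-safety to the sizes of the components of $T-r$, both implications and the polynomial-time construction follow routinely.
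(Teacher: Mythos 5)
The paper does not prove \autoref{unclass}; it is restated from Bang-Jensen, Havet, and Yeo~\cite{bhy} and used as a black box, so there is no in-paper proof to compare against. Your self-contained argument is correct. The reformulation of $(r,k)$-safety as ``every component of $T-r$ has at most $n-k$ vertices'' is right (the minimum of $|V-V(C_T^v)|$ is attained at the neighbours of $r$), and the key structural fact goes through: each component of $X-r$ is connected in $T-r$ and hence lies entirely inside one component of $T-r$, no component of $T-r$ can absorb two of them without creating a cycle through $r$, and since each component of $X-r$ has at most $k-1$ of the $2k-2$ non-root vertices of $X$, every component of $T-r$ misses at least $k-1$ of them. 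The pruning in the converse direction, with the budget argument showing $\sum_j\min\{|B_j|,k-1\}\geq 2k-2$, is also sound, and you are right to make explicit the hypotheses $k\geq 2$ and $n\geq 2k-1$ that the paper leaves implicit (they match how the lemma is invoked in \autoref{compactun} and \autoref{treefpt}). The only point resting on an external reference is that the spanning trees witnessing completability can be \emph{found} (not just shown to exist) in polynomial time, which is indeed standard constructive matroid union, as the paper itself indicates around \autoref{uncheck}.
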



\subsection{Preliminaries on parameterized complexity}
\label{prelim:pc}


 We refer the reader to~\cite{FPT-book} for basic background on parameterized complexity, and we recall here only some basic definitions used in this article.
A \emph{parameterized problem} is a decision problem whose instances are pairs $(x,k) \in \Sigma^* \times \mathbb{N}$, where $k$ is called the \emph{parameter}.
A parameterized problem $L$ is \emph{fixed-parameter tractable} ({\sf FPT}) if there exists an algorithm $\Acal$, a computable function $f$, and a constant $c$ such that given an instance $I=(x,k)$,
$\Acal$ (called an {\sf FPT} \emph{algorithm}) correctly decides whether $I \in L$ in time bounded by $f(k) \cdot |I|^c$. A parameterized problem $L$ is \emph{slice-wise polynomial} ({\sf XP}) if there exists an algorithm $\Acal$ and two computable functions $f,g$ such that given an instance $I=(x,k)$,
$\Acal$ (called an {\sf XP} \emph{algorithm}) correctly decides whether $I$ is a positive instance of $L$ in time bounded by $f(k) \cdot |I|^{g(k)}$.

A \emph{kernelization algorithm}, or just \emph{kernel}\footnote{This notion should {\sl not} be confused with (classic or compact) $(r,k)$-kernels used throughout this article.},  for a parameterized problem $L$ is an algorithm $\mathcal{A}$ that, in polynomial time, generates from an instance $I=(x, k)$ of $L$ an equivalent instance $I'=(x', k')$ of $L$ such that $|x'| + k' \leq f(k)$, for some computable function $f : \mathbb{N} \rightarrow \mathbb{N}$. If $f(k)$ is bounded from above by a polynomial of the parameter $k$, we say that $L$ admits a \emph{polynomial kernel}. It is well-known~\cite{FPT-book} that a parameterized problem is \FPT if and only if it admits a (not necessarily polynomial) kernel.

\section{An FPT algorithm for packing $k$-safe spanning arborescences}\label{safe}

This section is concerned with proving \autoref{safefpt}. Given a rooted digraph $D=(V+r,A)$ and a positive integer $k$, we say that a vertex $v\in V+r$ is
{\it large} if $|N_D^+(v)|\geq 6k-5$, and {\it small} otherwise. We let
{\boldmath $L_D$} (resp. $S_D$) be the set of vertices in $V+r$ which are large
(resp. small) in $D$.

We are now ready to introduce a new notion of $(r,k)$-kernels for $k$-safe spanning
$r$-arborescences.  A {\it compact $(r,k)$-kernel} is a subdigraph
$X=(V'+r, A')$ of $D$ with $|V'|\leq 2k-2$ satisfying the following:
\begin{itemize}
\item $X$ is an $r$-arborescence,
\item all vertices in $V'\cap L_D$ are sinks of $X$, and
\item a $k$-safe $r$-arborescence $Y$ can be obtained from $X$ by
  adding a set $V^*$ of $2k-2-|V'|$ new vertices and adding an arc
  from a vertex in $V'\cap L_D$ to $v$ for all $v \in V^*$.
\end{itemize}


Observe that $Y$ is not necessarily a subdigraph of $D$. The following
result shows that compact $(r,k)$-kernels can be used in a similar way as
classic $(r,k)$-kernels.

\begin{lemma}\label{compactsafe}
Let $D=(V+r,A)$ be a $2$-root-connected rooted digraph with $|V|\geq
2k-2$. Then $D$ contains two arc-disjoint $k$-safe spanning
$r$-arborescences if and only if $D$ contains an extendable pair of
arc-disjoint compact $(r,k)$-kernels. Further, given an extendable pair
of arc-disjoint compact $(r,k)$-kernels, we can find a pair of
arc-disjoint $k$-safe spanning $r$-arborescences in
polynomial time.
\end{lemma}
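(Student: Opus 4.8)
The plan is to use \autoref{classksafe} as a bridge: it suffices to prove that $D$ contains an extendable pair of arc-disjoint \emph{classic} $(r,k)$-kernels if and only if it contains an extendable pair of arc-disjoint \emph{compact} $(r,k)$-kernels, and to carry out the second direction in a way that turns compact kernels into classic ones in polynomial time; composing this with the construction of \autoref{classksafe} then yields the whole statement. The two implications are dual. Passing from classic kernels to compact ones will only \emph{delete} arcs, which automatically preserves arc-disjointness and extendability, since $D-A(X_i^{c})$ contains $D-A(X_i)$ whenever $A(X_i^{c})\subseteq A(X_i)$. Passing from compact kernels to classic ones will instead have to \emph{add} arcs inside $D$, and controlling this step is where the threshold $6k-5$ in the definition of a large vertex enters.

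For the forward implication, starting from an extendable pair of arc-disjoint classic $(r,k)$-kernels $(X_1,X_2)$ (which exists by \autoref{classksafe}), I would \emph{prune} each $X_i$ by deleting every vertex that has a large \emph{non-root} proper ancestor in $X_i$. This keeps $r$ and all of its children, makes every large non-root vertex of the resulting $r$-arborescence $X_i^{c}$ a sink, leaves at most $2k-2$ non-root vertices, and, as observed above, preserves arc-disjointness and extendability. To see that $X_i^{c}$ is a compact $(r,k)$-kernel, I would produce the required $k$-safe witness $Y_i$ by collapsing, in the original $X_i$, the subarborescence $B_{X_i}^{v}$ rooted at each surviving large non-root vertex $v$ to a star centred at $v$ on the same vertex set (these subarborescences are pairwise disjoint, so this is well defined); since $X_i$ has $2k-1$ vertices we have $|V(B_{X_i}^{w})|\le k-1$ for every non-root vertex $w$, and this collapsing does not increase the size of any subarborescence rooted at a non-root vertex, so $Y_i$ is $k$-safe, has exactly $2k-2$ non-root vertices, and attaches the added vertices below vertices of $V(X_i^{c})\cap L_D$, exactly as the definition demands. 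The cases $k=1$ and ``$X_i$ has no large non-root vertex'' are immediate.

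For the backward implication and the polynomial-time construction, given an extendable pair of arc-disjoint compact $(r,k)$-kernels $(X_1^{c},X_2^{c})$ with witnesses $Y_1,Y_2$, I would \emph{re-grow} them inside $D$, completing $X_1^{c}$ first and $X_2^{c}$ afterwards. I add the missing leaves one at a time, realizing the attachment pattern of $Y_i$: to attach a new leaf below a large vertex $v$ of $X_i^{c}$ I pick an out-neighbour $w$ of $v$ in $D$ that is not already in the current tree, that is not the head of an arc which is critical in $D$ minus the current arc set of the tree under construction (so that root-connectivity, hence extendability, survives), and --- in the second phase --- such that $vw$ is not an arc of the already-completed $X_1$ (so that arc-disjointness survives). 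Here $|N_D^{+}(v)|\ge 6k-5$ because $v$ is large. The three forbidden sets each have size at most $2k-2$: large vertices are sinks in \emph{both} compact kernels, so neither $A(X_1^{c})$ nor $A(X_2^{c})$ contributes an arc leaving $v$; \autoref{utfy}, applied using that $D$ is $2$-root-connected, bounds by $2k-2$ the number of critical arcs with tail $v$ in $D$ minus the current arc set (whose size is at most $2k-2$); and the completed $X_1$ has at most $2k-2$ arcs leaving $v$. Hence at most $6k-6<6k-5$ out-neighbours of $v$ are forbidden, so a valid leaf always exists. The resulting $(X_1,X_2)$ is a pair of arborescences isomorphic to $(Y_1,Y_2)$, hence $k$-safe and each of size exactly $2k-2$, which is arc-disjoint (again because large vertices are sinks in the compact kernels) and extendable, and \autoref{classksafe} then outputs two arc-disjoint $k$-safe spanning $r$-arborescences in polynomial time.

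I expect the counting in the ``compact $\Rightarrow$ classic'' step to be the main obstacle: one must fix the order of the re-growing (finish one kernel before touching the other), exploit that being \emph{large} is a property of $D$ rather than of a particular kernel so that large vertices are sinks everywhere they occur, and invoke \autoref{utfy} to control critical arcs. It is the sum of these three ``$2k-2$'' contributions that forces the threshold $6k-5$ in the definition of a large vertex, and that ultimately makes the family of compact kernels enumerable in \FPT rather than only \XP time.
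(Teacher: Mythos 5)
Your proposal is correct and follows essentially the same route as the paper: reduce to \autoref{classksafe}, obtain compact kernels from classic ones by pruning below large vertices and collapsing each pruned subarborescence to a star to get the witness $Y_i$, and recover classic kernels from compact ones by greedily re-attaching leaves below large vertices, using \autoref{utfy} and the threshold $6k-5$ to beat the three forbidden sets of size at most $2k-2$. The only (immaterial) difference is presentational: you grow $X_1$ to completion before $X_2$ and forbid arcs of the finished $X_1$, whereas the paper phrases the same greedy process as a vertex-maximal pair satisfying conditions (i)--(iv) and forbids heads lying in $V_2'$; both accountings close with the same arithmetic.
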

\begin{proof}

By \autoref{classksafe}, for the first part it suffices to prove
that $D$ contains a pair of arc-disjoint extendable compact
$(r,k)$-kernels if and only if $D$ contains a pair of arc-disjoint
extendable classic $(r,k)$-kernels.

First let $(X_1=(V_1+r,A_1),X_2=(V_2+r,A_2))$ be an extendable pair of
arc-disjoint classic $(r,k)$-kernels. Let $X_i'=(V_i'+r,A_i')$ be
obtained from $X_i$ by deleting $B^v_{X_i}-v$ for all $v \in
(V_i+r)\cap L_D$. By construction, the $X_i'$ are $r$-arborescences
and all vertices in $(V_i'+r)\cap L_D$ are sinks in $X_i'$.
 Let $Y_i$
be obtained from $X'_i$ by adding the vertices in $V_i-V'_i$ and
adding an arc from a vertex $v \in (V'_i+r)\cap L_D$ to a vertex $w
\in V_i-V'_i$ whenever $w \in V(B_{X_i}^v)-v$. Observe that $Y_i$ is
an arborescence with $|V(Y_i)|=2k-1$. Further, note that
$|V_i+r-V(B_{Y_i}^v)|= |V_i+r-V(B_{X_i}^v)|\geq k$ for all $v \in
V'_i$ and $|V(B_{Y_i}^v)|= 0$ for all $v \in V_i-V_i'$. This yields
that $|V_i+r-V(B_{Y_i}^v)|\geq|V_i+r-V(B_{X_i}^v)|\geq k$ for all $v
\in V_i$ and so $Y_i$ is a $k$-safe arborescence. By definition, we
obtain that $(X_1',X_2')$ is a pair of arc-disjoint compact
$(r,k)$-kernels. Further, $D-A_i$ is a subdigraph of $D-A'_i$ that is
root-connected, so $D-A_i'$ is root-connected as well. This yields that
$(X_1',X_2')$ is extendable.

Now let $(X_1=(V_1+r,A_1),X_2=(V_2+r,A_2))$ be an extendable pair of
arc-disjoint compact $(r,k)$-kernels. By definition, there are $k$-safe
arborescences $Y_1,Y_2$ such that $Y_i$ is obtained from $X_i$ by
adding a set $V_i^*$ of $2k-2-|V_i|$ new vertices and an arc from a
vertex in $V_i\cap L_D$ to $v$ for all $v \in V_i^*$. Let
$(X_1'=(V_1'+r,A_1'),X_2'=(V_2'+r,A_2'))$ be a pair of subdigraphs of
$D$ that are vertex-maximal with the following properties:
\begin{enumerate}[(i)]
\item[\textbf{(i)}] $X_i'$ is obtained from $X_i$ by repeatedly adding another vertex $v \in V-V_i$ and an arc of $A$ that goes from a vertex in $V_i\cap L_D$ to $v$,
\item[\textbf{(ii)}] $d_{X'_i}^+(v)\leq d_{Y_i}^+(v)$ for all $v \in V_i\cap L_D$,
\item[\textbf{(iii)}] $A'_1$ and $A'_2$ are disjoint, and
\item[\textbf{(iv)}] $(X_1',X_2')$ is extendable.
\end{enumerate}

Note that $(X_1,X_2)$ satisfies  conditions \textbf{(i)}-\textbf{(iv)}, so
$(X_1',X_2')$ is well-defined. Also observe that if condition
\textbf{(ii)} is satisfied with equality for all $v \in V_i\cap L_D$, then
$X'_i$ is isomorphic to $Y_i$, so $X'_i$ is a $k$-safe arborescence
and by definition also a classic $(r,k)$-kernel. If this is the case
for both $X'_1$ and $X'_2$, we are done by conditions\textbf{ (iii)} and
\textbf{(iv)}.

We may therefore suppose by symmetry that there is a vertex $v \in
V_1\cap L_D$ with $d_{X'_1}^+(v)< d_{Y_1}^+(v)$. For any $a=vz_a$ with
$z_a \in N_D^+(v)-V_1'$, let $X_a=(V_1'\cup z_a,A_1'\cup a)$. By the
maximality of $X_1'$, we obtain that $X_a$ violates one of
conditions \textbf{(i)}-\textbf{(iv)} for all $a \in \delta_D^+(v)$ with $z_a \in
N_D^+(v)-V_1'$. By construction and the choice of $v$, $X_a$ satisfies
conditions \textbf{(i)} and \textbf{(ii)} for all $a \in \delta_D^+(v)$ with
$z_a \in N_D^+(v)-V_1'$. If $X_a$ violates \textbf{(iii)}, then $z_a \in
V'_{2}$. By \textbf{(ii)}, we have $|V_{2}'|\leq |V_2 \cup V_2^*|=2k-2$ and
so this is the case for at most $2k-2$ vertices $z_a \in
N_D^+(v)-V_1'$. If $X_a$ does not satisfy \textbf{(iv)}, then $a$ is critical
in $D-A'_1$. As $D$ is 2-root-connected and $X'_1$ is an arborescence
by construction, \autoref{utfy} implies that this is the case for at
most $|A'_1|=|V_1'|\leq |V_1\cup V_1^*|=2k-2$ vertices $z_a \in
N_D^+(v)-V_1'$. As $v\in L_D$ and $|V_1'|\leq |V_1\cup V_1^*|=2k-2$,
we have $|N_D^+(v)-V_1'|\geq (6k-5)-(2k-2)>(2k-2)+(2k-2)$, so there is
at least one vertex in $z \in N_D^+(v)-V_1'$ and an arc $a=vz$ such
that $X_a$ does not violate any of conditions \textbf{(i)}-\textbf{(iv)}, a
contradiction.

Observe that the second part of the proof yields an algorithm for
computing a pair of arc-disjoint extendable classic
$(r,k)$-kernels. Indeed, every time we try to add an arc $a=vz$ to $X'_i$, we
test if $(X_a,X'_{3-i})$ satisfies  conditions \textbf{(i)}-\textbf{(iv)}. Conditions \textbf{(i)}-\textbf{(iii)} can clearly be checked in polynomial time and, by
\autoref{extend}, condition \textbf{(iv)} can also be checked in
polynomial time. Never testing an arc that is parallel to one that we have
tested already, after at most $4k-4$ failed attempts, we manage to add
a new vertex to $V_i'$. We repeat this procedure $|V_i^*|\leq 2k-2$
times. It follows that a pair of arc-disjoint extendable classic
$(r,k)$-kernels can be computed in time $k^2\cdot n^{O(1)} =n^{O(1)}$. By the second
part of \autoref{classksafe}, we can then find the arc-disjoint
$k$-safe spanning arborescences in $D$ in time $n^{O(1)}$. Therefore, the overall running time
of the algorithm is polynomial, as claimed.
\end{proof}
We are now ready to proceed to the proof of \autoref{safefpt}.


\begin{proof1}

We may suppose that there are at most two parallel arcs from $u$
 to $v$ for any $u,v \in V$. If $|V|<2k-2$, the problem can be solved by a brute force algorithm in time $2^{O(k^2)}$, by generating all pairs of  subdigraphs of $D$ and checking whether any of these pairs satisfies the required conditions.
We may hence also suppose that
$|V|\geq 2k-2$.

We can first decide in time $n^{O(1)}$ if $D$
 is $2$-root-connected. If it is not, the answer is negative, so we
 may suppose it is.  Let $X_1=(V_1+r,A_1),X_2=(V_2+r,A_2)$ be two
 subdigraphs of $D$ with $|V_1|,|V_2|\leq 2k-2$. In order to test whether $X_i$ is a compact
 $(r,k)$-kernel, we first verify if $X_i$ is an $r$-arborescence such
 that all the vertices in $V_i \cap L_D$ are sinks in $X_i$. If this
 is the case, we add a set $V_i^*$ of $2k-2-|V_i|$ new vertices to
 $X_i$. We then test all possibilities to add one arc from
 $(V_i+r)\cap L_D$ to $v$ for all $v \in V_i^*$. As $|V_i+r|\leq 2k$
 and $|V_i^*|\leq 2k$, there are at most $2k^{2k} = 2^{O(k \cdot \log k)}$ possibilities to
 check. For each of these possibilities, we can then check in polynomial time whether the obtained graph is a $k$-safe arborescence.

 For each such pair $X_1,X_2$, by the definition of classic $(r,k)$-kernels, we can therefore check
 in time $2^{O(k \cdot \log k)}$ whether both $X_1$ and $X_2$
 are compact $(r,k)$-kernels. By \autoref{extend}, we can therefore
 decide in  time $2^{O(k \cdot \log k)}+n^{O(1)}$
 if
 $(X_1,X_2)$ is an extendable pair of arc-disjoint compact
 $(r,k)$-kernels in $D$. By \autoref{compactsafe}, it therefore
 suffices to prove that there are at most  $2^{O(k^2 \cdot \log k)}$ possible candidates
 for the extendable pair of arc-disjoint compact $(r,k)$-kernels,
and that these can be generated within this running time.

Let $X=(V'+r,A')$ be a compact $(r,k)$-kernel in $D$. Observe that every
vertex in $V'$ can be reached from $r$ by a directed path all of whose
interior vertices are in $S_D$ and whose length is at most $k-1$.
As every vertex in $S_D$ has at most $6k-6$
out-neighbors, we obtain that the number of vertices that can be
reached by such a path is at most
$1+(6k-6)+(6k-6)^2+\ldots+(6k-6)^{k-1}\leq (6k)^{k}$. As $V'$ contains
at most $2k-2$ vertices, there are at most ${(6k)^{k} \choose
  2k-2}\leq (6k)^{2k^2}$ possibilities to choose~$V'$.

 Now suppose that we have chosen $V'$ of size $2k-2$. As there are at
 most two arcs in the same direction between any two vertices, there
 are at most $4{|V'+r| \choose 2}\leq 16 k^2$ arcs that have their
 head and tail in $V'+r$. As $|A'|= 2k-2$, there are at most ${16 k^2
   \choose 2k-2}\leq (16 k^2)^{2k} $ possibilities to choose $A'$. It
 follows that there are at most $(6k)^{2k^2}\cdot (16 k^2)^{2k}$
 possibilities to choose a compact $(r,k)$-kernel $X$. As these can be
 computed by a brute force method, the algorithm can finish after
 checking less than $f(k)={(6k)^{2k^2}\cdot (16 k^2)^{2k} \choose 2} = 2^{O(k^2 \cdot \log k)}$
 candidates for the extendable pair of arc-disjoint compact
 $(r,k)$-kernels.

 If no extendable pair of arc-disjoint compact $(r,k)$-kernels exists,
 by \autoref{compactsafe}, $D$ does not contain two arc-disjoint
 $k$-safe spanning arborescences. On the other hand, once we have found an extendable pair of arc-disjoint compact $(r,k)$-kernels, we can compute the two arc-disjoint $k$-safe spanning $r$-arborescences in polynomial time by the second part of \autoref{compactsafe}. The overall running time of the obtained algorithm is $2^{O(k^2 \cdot \log k)} \cdot n^{O(1)}$.
\end{proof1}

\section{An FPT algorithm for packing  spanning $(r,k)$-flow branchings}\label{flow}

This section is concerned with proving \autoref{flowfpt}.
Slightly modifying the terminology introduced in \autoref{safe}, given a 2-root-connected digraph $D=(V+r,A)$ and a positive integer $k$, we say that a vertex $v$
is {\it large} if $|N^+(v)|\geq 20k^2+1$, and {\it small} otherwise. Again, we
let {\boldmath $L_D$} (resp. $S_D$) be the set of vertices in $V+r$ which are
large (resp. small) in $D$.

We are now ready to introduce a new notion of $(r,k)$-cores for spanning $(r,k)$-flow
branchings.  A {\it compact $(r,k)$-core} is a subdigraph $X=(V'+r, A')$
of $D$ with $|V'|\leq 2k-1$ satisfying the following:
\begin{itemize}
\item all vertices in $V'\cap L_D$ are sinks in $X$ and
\item an $(r,k)$-flow branching $Y$ can be obtained from $X$ by adding
  a set $V^*$ of $2k-1-|V'|$ new vertices and adding an arc from a
  vertex in $V'\cap L_D$ to $v$ for all $v \in V^*$.
\end{itemize}

Observe that $Y$ is not necessarily a subdigraph of $D$. The following
result, which is similar to \autoref{compactsafe}, shows that compact $(r,k)$-cores can be used in a similar way as classic $(r,k)$-cores.

\begin{lemma}\label{compactbranch}
Let $D=(V+r,A)$ be a $2$-root-connected rooted digraph with $|V|\geq
2k-1$. Then $D$ has two arc-disjoint spanning $(r,k)$-flow branchings if and
only if $D$ contains an extendable pair of arc-disjoint compact
$(r,k)$-cores. Further, given an extendable pair of triple-free
arc-disjoint compact $(r,k)$-cores, we can find a pair of arc-disjoint
spanning $(r,k)$-flow branchings in polynomial time.
\end{lemma}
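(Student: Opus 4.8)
The plan is to follow the proof of \autoref{compactsafe} line by line, substituting \autoref{classbranch} for \autoref{classksafe} and flow-branching analogues for the arborescence-specific arguments. By \autoref{classbranch}, the first statement reduces to showing that $D$ contains an extendable pair of arc-disjoint compact $(r,k)$-cores if and only if it contains an extendable pair of arc-disjoint classic $(r,k)$-cores. Throughout I may assume that all the flow branchings in play are triple-free: in the ``if'' direction one replaces each witnessing branching $Y_i$ by an arc-minimal sub-flow-branching on the same vertex set, which by \autoref{not3} is triple-free (the star arcs created by re-attachment have in-degree $1$, hence are never deleted), and this does not harm extendability since it only removes arcs of the core. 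Consequently any classic or compact $(r,k)$-core, spanning at most $2k$ vertices, carries only $O(k^2)$ arcs.

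For the direction ``classic $\Rightarrow$ compact'' I start from an extendable pair $(X_1,X_2)$ of arc-disjoint classic $(r,k)$-cores, which I may take arc-minimal (deleting arcs only enlarges $D-A_i$, preserving extendability) and hence triple-free by \autoref{not3}. For each $i$ I fix, via \autoref{decomp}, a branching flow that is a sum of $rv$-path flows along paths $P^i_v$, $v\in V_i$. I define $X'_i$ by keeping the vertices $u$ reachable from $r$ in $X_i$ by a path with all internal vertices small, together with those arcs of $X_i$ that do not leave a large vertex; then every large vertex of $X'_i$ is a sink and $|V(X'_i)\setminus\{r\}|\le 2k-1$. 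To certify that $X'_i$ is a compact $(r,k)$-core I re-attach every deleted vertex $w$ to the first large vertex $v_w$ occurring on $P^i_w$ — which lies in $X'_i$, since the prefix of $P^i_w$ up to $v_w$ has only small internal vertices — obtaining $Y_i$, and then I must exhibit a branching flow on $Y_i$. I plan to route one unit to each surviving $u$ along a fixed small-internal $ru$-path inside $X'_i$, and one unit to each re-attached $w$ along the path chosen for $v_w$ followed by the new arc; equivalently, using the Gale--Hoffman feasibility criterion, it suffices to check $|T|\le (n_{Y_i}-k)\cdot d^-_{Y_i}(T)= k\cdot d^-_{Y_i}(T)$ for every nonempty $T\subseteq V(Y_i)\setminus\{r\}$, which I intend to transport from the corresponding inequalities for $X_i$ by comparing $\delta^-_{Y_i}(T)$ with $\delta^-_{X_i}(T\cup\{w:v_w\in T\})$ and applying submodularity (\autoref{subm}). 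Extendability of $(X'_1,X'_2)$ is immediate because $A'_i\subseteq A_i$, so $D-A'_i\supseteq D-A_i$ is root-connected.

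For the direction ``compact $\Rightarrow$ classic'' I argue exactly as in \autoref{compactsafe}. Starting from an extendable pair $(X_1,X_2)$ of triple-free arc-disjoint compact $(r,k)$-cores with witnessing branchings $Y_1,Y_2$, I greedily grow each $X_i$ into a subdigraph $X'_i$ of $D$ by repeatedly adding a new vertex of $D$ through a genuine arc of $D$ leaving a large vertex of $X_i$, maintaining that out-degrees never exceed those in $Y_i$, that $A'_1$ and $A'_2$ remain disjoint, and that $(X'_1,X'_2)$ remains extendable. A candidate arc $vz$ is blocked by the disjointness condition for at most $|V_{3-i}\cup V^*_{3-i}|\le 2k-1$ targets $z$, and it is blocked by the extendability condition only if $vz$ is critical in $D-A'_i$, which by \autoref{utfy} (here we use that $D$ is $2$-root-connected and that $D-A'_i$ is kept root-connected) happens for at most $|A'_i|=O(k^2)$ targets — this quadratic count is precisely why the threshold for being ``large'' is $20k^2+1$ rather than the linear threshold used in \autoref{safe}. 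Since $v$ is large, $N^+_D(v)$, even after discarding the at most $2k-1$ of its elements already in $V'_i$, still offers more targets than the total number of blocked ones, so the growth always proceeds and terminates with $X'_i$ a classic $(r,k)$-core; the pair $(X'_1,X'_2)$ is arc-disjoint and extendable by construction. Every test of a candidate arc runs in polynomial time — conditions on degrees and disjointness are trivial, and extendability is checked by \autoref{extend} — and there are only $O(k^2)$ attempts per new vertex (using triple-freeness to keep $|A'_i|$ polynomial in $k$) and at most $2k-1$ new vertices, so an extendable pair of arc-disjoint classic $(r,k)$-cores is produced in polynomial time; the second part of \autoref{classbranch} then yields the two arc-disjoint spanning $(r,k)$-flow branchings.

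The step I expect to be the main obstacle is the flow-feasibility verification in the ``classic $\Rightarrow$ compact'' direction. In \autoref{compactsafe} the analogous check was immediate, because $Y_i$ was visibly a $k$-safe $r$-arborescence: re-attachment merely flattened subtrees whose vertex sets were literally unchanged. Here $X_i$ has no tree structure, the surviving vertex set $V(X'_i)$ need not be ``in-closed'' in $X_i$ (an arc of $X_i$ entering a surviving vertex may come from a deleted vertex), and the decomposition paths $P^i_v$ can wander through large vertices; so controlling how the chosen small-internal paths and the new star arcs interact with the arc capacity $n_{Y_i}-k=k$, and in particular carrying out the in-degree comparison $d^-_{X_i}(T\cup\{w:v_w\in T\})\le d^-_{X'_i}(T)$ (or a submodular substitute for it), is the delicate point I would need to get right.
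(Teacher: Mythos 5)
Your overall architecture coincides with the paper's: reduce via \autoref{classbranch} to the equivalence between extendable pairs of classic and of compact $(r,k)$-cores, prune out-arcs of large vertices in one direction, and grow greedily with a counting argument in the other. Your ``compact $\Rightarrow$ classic'' direction matches the paper's proof essentially step for step — the paper bounds the disjointness-blocked targets by $|V_2'|\le |V(Y_2)|=2k$ and the criticality-blocked ones by $|A_1'|\le 4\binom{|V_1'|}{2}\le 16k^2$ using triple-freeness and \autoref{utfy}, exactly as you do, and your preliminary reduction to triple-free cores via \autoref{not3} is also how the paper proceeds. The concluding algorithmic argument is likewise identical.

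The one genuine gap is the step you yourself flag: in the ``classic $\Rightarrow$ compact'' direction you never actually exhibit the branching flow on $Y_i$, and neither of your two proposed routes (re-routing unit flows along small-internal prefixes of the decomposition paths, or a Gale--Hoffman cut condition transported by submodularity) is carried out; as you observe, the first one is problematic because a surviving vertex $w\in V_i'$ may have its decomposition path $P_w$ pass through a large vertex, so no re-routed unit would terminate at $w$. The paper closes this step by a different and much shorter device: take a single $(r,k)$-branching flow $\mathbf{z}$ of the classic core $X_i$, attach to each surviving large vertex $v$ exactly $\mathbf{z}(\delta^+_{X_i}(v))$ pendant arcs, and set $\mathbf{z}'$ equal to $\mathbf{z}$ on the surviving arcs and to $1$ on each pendant arc. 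The number of pendants at $v$ is thus dictated by the \emph{flow value} leaving $v$, not by a count of deleted descendants; with this choice the out-flow of every surviving vertex is unchanged, conservation at the pendants is trivial, capacities only decrease, and $|V(Y_i)|-1=\mathbf{z}'(\delta^+_{Y_i}(r))=\mathbf{z}(\delta^+_{X_i}(r))=2k-1$ falls out for free, with no path decomposition and no cut condition. This is the missing idea. Your underlying worry is still not vacuous: even for this $\mathbf{z}'$, conservation at a surviving vertex $w$ requires that no positive $\mathbf{z}$-flow enter $w$ along a deleted arc (one with a large tail, or one outside the root-connected part), and the paper asserts this without further argument; a fully airtight write-up should verify it or re-route the offending flow, which is exactly the delicate point you identified.
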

\begin{proof}

By \autoref{classbranch}, for the first part it suffices to prove
that $D$ contains an extendable pair of arc-disjoint compact
$(r,k)$-cores if and only if $D$ contains an extendable pair of
arc-disjoint classic $(r,k)$-cores.

 First let $(X_1=(V_1+r,A_1),X_2=(V_2+r,A_2))$ be an extendable pair
 of arc-disjoint classic $(r,k)$-cores. Let $X_i'=(V_i'+r,A'_i)$ be
 obtained from $X_i$ by first deleting all arcs in $A_i$ whose tail is
 a large vertex and then restricting to the subdigraph that is
 root-connected from $r$. As $A'_i\subseteq A_i$, $(X_1',X_2')$ is
 extendable. It remains to show that $X_1'$ and $X_2'$ are compact
 $(r,k)$-cores. By construction, all vertices in $V'_i\cap L_D$ are
 sinks in $X'_i$. As $X_i$ is an $(r,k)$-flow branching, there is
 an $(r,k)$-branching flow ${\mathbf z}:A_i\rightarrow \mathbb{Z}$ in
 $X_i$. Create $Y_i$ from $X_i$ by attaching ${\mathbf z}(\delta^+_{X_i}(v))$
 arcs
  directed away from $v$ to every vertex $v \in
 V'_i\cap L_D$. Assigning ${\mathbf z}'(a)=1$ for all arcs leaving a large vertex
 in $Y_i$ and ${\mathbf z}'(a)={\mathbf z}(a)$ for all remaining arcs, we obtain that ${\mathbf z}'$
 is an $(r,k)$-branching flow in $Y_i$, so $Y_i$ is an $(r,k)$-flow
 branching. Furthermore, we have
 $|V(Y_i)-r|={\mathbf z}'(\delta^+_{Y_i}(r))={\mathbf z}(\delta^+_{X_i}(r))=|V_i|=2k-1$. It
 follows by definition that $X_i'$ is a compact $(r,k)$-core.

Now let $(X_1=(V_1+r,A_1),X_2=(V_2+r,A_2))$ be an extendable pair of
arc-disjoint compact $(r,k)$-cores. Possibly deleting arcs, we may
suppose by \autoref{not3} and as $|V|\geq 2k-1$ that $X_1$ and $X_2$
are triple-free. By definition, there are $(r,k)$-flow branchings
$Y_1,Y_2$ such that $Y_i$ is obtained from $X_i$ by adding a set
$V_i^*$ of $2k-1-|V_i|$ new vertices and adding an arc from a vertex
in $V_i\cap L_D$ to $v$ for all $v \in V_i^*$.  Let
$X_1'=(V_1'+r,A_1')$ and $X_2'=(V_2'+r,A_2')$ be subdigraphs of $D$
that are vertex-maximal with the following properties:
\begin{enumerate}[(i)]
\item[\textbf{(i)}] $X_i'$ is obtained from $X_i$ by repeatedly adding another
  vertex in $v \in V-V_i$ and an arc of $A$ that goes from a vertex in
  $V_i\cap L_D$ to $v$,
\item[\textbf{(ii)}] $d_{X'_i}^+(v)\leq d_{Y_i}^+(v)$ for all $v \in V_i\cap L_D$,
\item[\textbf{(iii)}] $A'_1$ and $A'_2$ are disjoint, and
\item[\textbf{(iv)}] $(X_1',X_2')$ is extendable.
\end{enumerate}

Note that $(X_1,X_2)$ satisfies conditions \textbf{(i)}-\textbf{(iv)}, so $(X_1',X_2')$ is well-defined. 
Further, observe that if condition \textbf{(ii)} is satisfied with equality for all $v \in V_i\cap L_D$, then $X'_i$ is isomorphic to $Y_i$, so $X'_i$ is an $(r,k)$-flow branching, thus by definition also a classic $(r,k)$-core. If this is the case for both $X'_1$ and $X'_2$, we are done by conditions \textbf{(iii)} and \textbf{(iv)}.

We may therefore suppose by symmetry that there is some $v \in V_1\cap
L_D$ with $d_{X'_1}^+(v)< d_{Y_1}^+(v)$. For any $a=vz_a$ with $z_a
\in N_D^+(v)-V_1'$, let $X_a=(V_1'\cup z_a,A_1'\cup a)$. By the
maximality of $X'_1$, $X_a$ violates one of conditions \textbf{(i)}-\textbf{(iv)}
for all $a \in \delta_D^+(v)$ with $z_a \in N_D^+(v)-V'_1$. By
construction and the choice of $v$, $X_a$ satisfies conditions
\textbf{(i)} and \textbf{(ii)} for all $a \in \delta_D^+(v)$ with $z_a \in
N_D^+(v)-V'_1$. If $X_a$ violates condition \textbf{(iii)}, then $z_a \in
V'_{2}$. As $|V_{2}'|\leq |V(Y_{2})|=2k$, this is the case for at most
$2k$ vertices in $N_D^+(v)-V'_1$. If $X_a$ does not satisfy \textbf{(iv)},
then $a$ is critical in $D-A'_1$. As $X_1$ is triple-free and $|V_i'|\leq
2k$ and by construction, we obtain that $|A'_1|\leq 4{|V'_1|\choose
  2}\leq 16 k^2$. Now \autoref{utfy} implies that this is the case
for at most $16 k^2$ vertices in $N_D^+(v)-V'_1$. As $v\in L_D$ and
$|V_1'|\leq |V_1\cup V_1^*|=2k-1$, we have $|N_D^+(v)-V_1'|\geq
(20k^2+1)-(2k-1)>2k+16k^2$, so there is at least one vertex $z \in
N_D^+(v)-V_1'$ and an arc $a=vz$ such that $X_a$ does not violate any
of conditions \textbf{(i)}-\textbf{(iv)}, a contradiction.

Observe that the second part of the proof yields an algorithm for
computing a pair of arc-disjoint extendable classic
$(r,k)$-cores. Indeed, every time we try to add an arc $a=vz$ to $X'_i$, we test
if $(X_a,X'_{3-i})$ satisfies conditions \textbf{(i)}-\textbf{(iv)}. Conditions \textbf{(i)}-\textbf{(iii)} can clearly be checked in polynomial time and, by
\autoref{extend}, condition \textbf{(iv)} can also be checked in
polynomial time. Never checking an arc $a$ which is parallel to an arc we
have already checked, after at most $20k^2$ failed attempts, we manage
to add a new vertex to $V_i'$. We repeat this procedure at most
$|V^*_i|\leq 2k$ times. It follows that a pair of arc-disjoint
extendable classic $(r,k)$-cores can be computed in time $40k^3\cdot
n^{O(1)} = n^{O(1)}$. By the second part of \autoref{classbranch}, we can then
find the arc-disjoint spanning $(r,k)$-flow branchings in $D$ in polynomial time. The
overall running time of the algorithm is polynomial, as claimed.
\end{proof}
We are now ready to proceed to the proof of \autoref{flowfpt}.

\begin{proof2}
First consider the case that $|V|<2k-1.$ Observe that any arc-minimal spanning $(r,k)$-flow branching has at most $|V|$ parallel arcs between any two vertices. It follows that, for any two vertices $u,v$, at most $\gamma:=4k^2$ different distributions of the arcs between $u$ and $v$ among the two candidates for the spanning $(r,k)$-flow branchings have to be considered, including taking none of these arcs. Since there are $\mu:={|V| \choose 2} = O(k^2)$ pairs of vertices, the total number of choices for these distributions is $\gamma^\mu = 2^{O(k^2 \cdot \log k)}$. The problem can therefore be solved by a brute force algorithm  in time $2^{O(k^2 \cdot \log k)}$, by generating all $ 2^{O(k^2 \cdot \log k)}$ pairs of candidate subdigraphs of $D$ and checking whether any of these pairs satisfies the required conditions.
We may hence suppose that $|V|\geq 2k-1$.

By \autoref{not3}, we may also suppose that there are at most four
parallel arcs between any two vertices in $D$. We can first decide in
polynomial time if $D$ is $2$-root-connected. If it is not, the answer is
negative, so we may suppose it is.  Let
$X_1=(V_1+r,A_1),X_2=(V_2+r,A_2)$ be two subdigraphs of $D$ with $|V_1|,|V_2| \leq 2k-1$. In order
to test whether $X_i$ is a compact $(r,k)$-core, we first test if all vertices in $V_i \cap L_D$ are sinks in
$X_i$. We then add a set $V_i^*$ of $2k-1-|V_i|$ new vertices to
$X_i$. We then test all possibilities to add one arc from $(V_i+r)\cap
L_D$ to $v$ for all $v \in V_i^*$. As $|V_i|,|V_i^*|\leq 2k$, there
are at most $2k^{2k} = 2^{O(k \cdot \log k)}$ possibilities to check.
By \autoref{classkern}, we can check in time polynomial in $k$ whether each of the resulting graphs is an $(r,k)$-flow branching. Thus, we can
check in time  $2^{O(k \cdot \log k)}$ whether
both $X_1$ and $X_2$ are compact $(r,k)$-cores.  By \autoref{extend},
we can therefore decide in time $2^{O(k \cdot \log k)} + n^{O(1)}$  if $(X_1,X_2)$ is an extendable pair of arc-disjoint compact
$(r,k)$-cores in $D$. By  \autoref{compactbranch}, it therefore
suffices to prove that there are at most $2^{O(k^2 \cdot \log k)}$ possible candidates
for the extendable pair of arc-disjoint compact $(r,k)$-cores, and that these can be generated within the same running time.

Let $X=(V'+r,A')$ be a compact $(r,k)$-core in $D$. Observe that every
vertex in $V'$ can be reached from $r$ by a directed path all of whose
internal vertices are in $S_D$ and whose length is at most $2k-1$.
As every vertex in $S_D$ has at most $20k^2$
out-neighbors, we obtain that the number of vertices that can be reached
by such a path is at most $1+20k^2+(20k^2)^2+\ldots+(20k^2)^{2k-1}\leq
(20k^2)^{2k}$. As $V'$ contains at most $2k-1$ vertices, there are at
most ${(20k^2)^{2k} \choose (2k-1)}\leq (20k^2)^{4k^2}$ possibilities
to choose $V'$.
Now suppose that we have chosen $V'$ of size at most $2k-1$. As there
are at most four arcs in the same direction between any two vertices,
there are at most $8{|V'| \choose 2}\leq 32 k^2$ arcs that have their
head and tail in $V'+r$. As all arcs of $A'$ have both ends in $V'+r$,
there are at most $2^{32k^2} $ possibilities to choose $A'$. It
follows that there are at most $(20k^2)^{4k^2}\cdot 2^{32k^2}$
possibilities to choose a compact $(r,k)$-core $X$. As these can be
computed by a brute force method, the algorithm can finish after
checking less than $f(k)={(20k^2)^{4k^2}\cdot 2^{32k^2} \choose 2} = 2^{O(k^2 \cdot \log k)}$
candidates for the extendable pair of compact $(r,k)$-cores.

 If no such extendable pair of arc-disjoint compact $(r,k)$-cores exists,
 by \autoref{compactbranch}, $D$ does not contain two arc-disjoint
 spanning $(r,k)$-flow branchings. On the other hand, if we find an extendable pair of arc-disjoint compact
extendable $(r,k)$-cores, we also find such a pair $(X_1,X_2)$ where
$X_1$ and $X_2$ are arc-minimal, so by \autoref{not3}
triple-free. By the second part of \autoref{compactbranch}, we can
compute the two arc-disjoint spanning $(r,k)$-flow branchings in polynomial time. The overall running time of the obtained algorithm is $2^{O(k^2 \cdot \log k)} \cdot n^{O(1)}$.
\end{proof2}

\section{An FPT algorithm for packing $(r,k)$-safe spanning trees}\label{un}

This section is concerned with proving \autoref{treefpt}. Again, slightly modifying the terminology introduced in \autoref{safe} and reused in \autoref{flow}, given a rooted graph $G=(V+r,E)$ and a positive integer $k$, we say that a vertex $v\in V+r$ is
{\it large} if $|N_G(v)|\geq 8k-7$, and {\it small} otherwise. And again, we let
{\boldmath $L_G$} (resp. $(S_G)$) be the set of vertices in $V$ which are large
(resp. small) in $G$.

We are now ready to introduce a new notion of certificates for
$(r,k)$-safe spanning trees. A {\it compact $(r,k)$-certificate} is a
subgraph $X=(V'+r, E')$ of $G$ with $|V'|\leq 2k-2$ satisfying the
following:
\begin{itemize}
\item $X$ is a tree,
\item all vertices in $V'\cap L_G$ are leaves of $X$, and
\item an $(r,k)$-safe spanning tree $Y$ can be obtained from $X$ by
  adding a set $V^*$ of $2k-2-|V'|$ new vertices and adding an edge
  from a vertex in $V'\cap L_G$ to $v$ for all $v \in V^*$.
\end{itemize}

Observe that $Y$ is not necessarily a subgraph of $G$.  The following
result, which is similar to \autoref{compactsafe} and \autoref{compactbranch}, shows that compact certificates can be used in a similar way as
classic certificates.

\begin{lemma}\label{compactun}
Let $G=(V+r,E)$ be a rooted graph with $|V|\geq 2k-2$. Then $G$ has
two edge-disjoint $(r,k)$-safe spanning trees if and only if $G$
contains a completable pair of compact $(r,k)$-certificates. Further,
given a completable pair of compact $(r,k)$-certificates, we can find
a pair of edge-disjoint $(r,k)$-safe spanning trees in polynomial time.
\end{lemma}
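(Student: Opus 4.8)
The plan is to follow the template of the proofs of \autoref{compactsafe} and \autoref{compactbranch}, with \emph{completability} playing the role of root-connectivity and with tree-mapping functions (through \autoref{jamais3}) replacing \autoref{utfy}. By \autoref{unclass}, it suffices to show that $G$ contains a completable pair of compact $(r,k)$-certificates if and only if it contains a completable pair of classic $(r,k)$-certificates. For the direction from classic to compact, let $(X_1,X_2)$ be a completable pair of classic $(r,k)$-certificates, witnessed by edge-disjoint spanning trees $T_1,T_2$ of $G$ with $E(X_i)\subseteq E(T_i)$. I would obtain $X_i'$ from $X_i$ by deleting $C_{X_i}^v$ for every large non-root vertex $v$ of $X_i$. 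Then $X_i'$ is a subtree of $X_i$ containing $r$, with at most $2k-2$ non-root vertices, in which every large vertex is a leaf. To see that $X_i'$ is a compact $(r,k)$-certificate, re-attach each deleted vertex $w$ as a pendant of the unique large vertex $u$ of $X_i'$ with $w\in V(C_{X_i}^u)$; this yields a rooted tree $Y_i$ on $2k-1$ vertices obtained from $X_i'$ by exactly the operation in the definition. That $Y_i$ is $(r,k)$-safe follows because for small non-root $v$ of $X_i'$ we have $V(C_{Y_i}^v)=V(C_{X_i}^v)$, for large non-root $v$ we have $V(C_{Y_i}^v)\subseteq V(C_{X_i}^v)$, pendants $w$ satisfy $C_{Y_i}^w=\emptyset$, and $|V(Y_i)\setminus\{r\}|=2k-2=|V(X_i)\setminus\{r\}|$. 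Finally $E(X_i')\subseteq E(X_i)\subseteq E(T_i)$, so $(T_1,T_2)$ also witnesses that $(X_1',X_2')$ is completable.

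For the direction from compact to classic, let $(X_1,X_2)$ be a completable pair of compact $(r,k)$-certificates with associated $(r,k)$-safe spanning trees $Y_1,Y_2$, and choose a pair $(X_1',X_2')$ of subgraphs of $G$ that is vertex-maximal subject to: \textbf{(i)} $X_i'$ arises from $X_i$ by repeatedly adding a vertex of $V\setminus V(X_i)$ together with an edge of $E$ from a vertex in $V(X_i)\cap L_G$ to it; \textbf{(ii)} $d_{X_i'}(v)\le d_{Y_i}(v)$ for all $v\in V(X_i)\cap L_G$; \textbf{(iii)} $E(X_1')\cap E(X_2')=\emptyset$; \textbf{(iv)} $(X_1',X_2')$ is completable. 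Since $(X_1,X_2)$ satisfies \textbf{(i)}--\textbf{(iv)} this is well-defined, and if \textbf{(ii)} holds with equality throughout then each $X_i'$ is isomorphic (as a rooted tree) to $Y_i$ --- both being $X_i$ with the same number of pendants attached to each large vertex --- hence a classic $(r,k)$-certificate, and \textbf{(iii)}--\textbf{(iv)} finish this case. Otherwise, by symmetry some $v\in V(X_1)\cap L_G$ has $d_{X_1'}(v)<d_{Y_1}(v)$; for $z\in N_G(v)\setminus V(X_1')$ set $a=vz$ and $X_a=X_1'+z+a$. Then $X_a$ satisfies \textbf{(i)} and \textbf{(ii)}, so by maximality $(X_a,X_2')$ violates \textbf{(iii)} or \textbf{(iv)} for every such $z$, and it suffices to show this fails for fewer than $|N_G(v)\setminus V(X_1')|\ge (8k-7)-(2k-2)=6k-5$ vertices $z$. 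Condition \textbf{(iii)} can fail only when $a\in E(X_2')$, hence for at most $|V(X_2')\setminus\{r\}|\le 2k-2$ vertices $z$.

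The heart of the proof --- and the step I expect to be the main obstacle --- is bounding the number of $z$ with $a\notin E(X_2')$ for which $(X_a,X_2')$ is not completable; this is where the argument genuinely departs from \autoref{compactsafe}. Here I would fix edge-disjoint spanning trees $T_1\supseteq X_1'$ and $T_2\supseteq X_2'$ witnessing \textbf{(iv)} together with a tree-mapping function $\sigma$ from $T_2$ to $T_1$ (which exists by \autoref{unsigma}), and observe: if $a\in E(T_1)$ then $(T_1,T_2)$ already witnesses completability of $(X_a,X_2')$; if $a\notin E(T_1)\cup E(T_2)$ then, since $z\notin V(X_1')$, the unique cycle of $T_1+a$ contains an edge $g\ne a$ incident to $z$ with $g\notin E(X_1')$, and $(T_1+a-g,\,T_2)$ witnesses completability; and if $a\in E(T_2)$ (so $a\notin E(T_1)$) and $\sigma(a)\notin E(X_1')$, then $(T_1-\sigma(a)+a,\ T_2-a+\sigma(a))$ witnesses completability. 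Hence \textbf{(iv)} can fail only when $a\in E(T_2)$ and $\sigma(a)\in E(X_1')$; by \autoref{jamais3}, the restriction of $\sigma$ to the edges of $T_2$ incident to $v$ is at most $2$-to-$1$, so this happens for at most $2|E(X_1')|\le 2(2k-2)=4k-4$ vertices $z$. Altogether at most $(2k-2)+(4k-4)=6k-6<6k-5$ vertices $z$ are excluded, a contradiction. (Note that the threshold $8k-7$ in the definition of ``large'' is exactly $4(2k-2)+1$, tailored to this count.)

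For the algorithmic statement, the compact-to-classic argument is constructive: one greedily adds edges to $X_i$, checking \textbf{(i)}--\textbf{(iii)} directly and \textbf{(iv)} via \autoref{uncheck}, never re-testing an edge parallel to one already tried; by the bound above a new vertex is added after $O(k)$ failed attempts, and at most $2k-2$ vertices are added per side, so a completable pair of classic $(r,k)$-certificates is produced in polynomial time, after which \autoref{unclass} yields two edge-disjoint $(r,k)$-safe spanning trees in polynomial time.
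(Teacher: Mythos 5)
Your proof is correct and follows essentially the same route as the paper's: the classic-to-compact direction via pruning $C_{X_i}^v$ at large vertices, and the compact-to-classic direction via a vertex-maximal greedy extension whose failure is bounded by combining the edge-exchange argument (Lemma~\ref{unsamevertex}) for edges outside $T_1\cup T_2$ with the tree-mapping function and Lemma~\ref{jamais3} for edges of $T_2$, yielding the same count $\,(2k-2)+(4k-4)=6k-6<6k-5$. The only cosmetic difference is that you state edge-disjointness as a separate condition, whereas the paper folds it into the definition of completability.
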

\begin{proof}

By \autoref{unclass}, for the first part it suffices to prove that
$G$ contains a completable pair of compact $(r,k)$-certificates if and
only if $G$ contains a pair of completable classic
$(r,k)$-certificates.

First let $(X_1=(V_1+r,E_1),X_2=(V_2+r,E_2))$ be a completable pair of
classic $(r,k)$-certificates. Let $X_i'=(V_i'+r,E_i')$ be obtained
from $X_i$ by deleting $C^v_{X_i}$ for all $v \in (V_i+r)\cap L_G$. By
construction, the $X_i'$ are trees and all vertices in $(V_i'+r)\cap
L_G$ are leaves in $X_i'$. Let $Y_i$ be obtained from $X'_i$ by adding
the vertices in $V_i-V'_i$ and adding an edge from a vertex $v \in
(V'_i+r)\cap L_G$ to a vertex $w \in V_i-V'_i$ whenever $w \in
V(C_{X_i}^v)$. Observe that $Y_i$ is a tree with
$|V(Y_i)|=2k-1$. Further, we have $|V_i-V(C_{Y_i}^v)|=
|V_i-V(C_{X_i}^v)|\geq k$ for all $v \in V'_i$ and $|V(C_{Y_i}^v)|= 0$
for all $v \in V_i-V_i'$. This yields that
$|V_i-V(C_{Y_i}^v)|\geq|V_i-V(C_{X_i}^v)|\geq k$ for all $v \in V_i$
and so $Y_i$ is an $(r,k)$-safe tree. By definition, we obtain that
$(X_1',X_2')$ is a pair of compact $(r,k)$-certificates. Further, as
$E(X'_i)\subseteq E(X_i)$ and $(X_1,X_2)$ is completable, we obtain
that $(X_1',X_2')$ is completable.

Now let $(X_1=(V_1+r,E_1),X_2=(V_2+r,E_2))$ be a completable pair of
compact $(r,k)$-certificates. By definition, there are $(r,k)$-safe
trees $Y_1,Y_2$ such that $Y_i$ is obtained from $X_i$ by adding a set
$V_i^*$ of $2k-2-|V_i|$ new vertices and an edge from a vertex in
$V_i\cap L_G$ to $v$ for all $v \in V_i^*$. Let
$(X_1'=(V_1'+r,E_1'),X_2'=(V_2'+r,E_2'))$ be a pair of subgraphs of
$G$ that are vertex-maximal with the following properties:
\begin{enumerate}[(i)]
\item[\textbf{(i)}] $X_i'$ is obtained from $X_i$ by repeatedly adding another
  vertex $v \in V-V_i$ and an edge of $E$ that goes from a vertex in
  $V_i\cap L_G$ to $v$,
\item[\textbf{(ii)}] $d_{X'_i}(v)\leq d_{Y_i}(v)$ for all $v \in V_i\cap L_G$, and
\item[\textbf{(iii)}] $(X_1',X_2')$ is completable.
\end{enumerate}

Note that $(X_1,X_2)$ satisfies conditions \textbf{(i)}-\textbf{(iii)}, so
$(X_1',X_2')$ is well-defined. Observe that if condition \textbf{(ii)} is
satisfied with equality for all $v \in V_i\cap L_G$, then $X'_i$ is
isomorphic to $Y_i$, so $X'_i$ is an $(r,k)$-safe spanning tree and by
definition also a classic $(r,k)$-certificate. If this is the case for
both $X'_1$ and $X'_2$, we are done by condition \textbf{(iii)}.

We may therefore suppose by symmetry that there is a vertex $v \in
V_1\cap L_G$ with $d_{X'_1}(v)< d_{Y_1}(v)$. For any $e=vz_e \in
\delta_G(v)$ with $z_e \in N_G(v)-V_1'$, let $X_e=(V_1'\cup z_e,E_1'\cup
e)$. By the maximality of $X_1'$, we obtain that $X_e$ violates one of
conditions \textbf{(i)}-\textbf{(iii)} for all $e=vz_e \in \delta_G(v)$ with $z_e
\in N_G(v)-V_1'$. By construction and the choice of $v$, $X_e$
satisfies conditions \textbf{(i)} and \textbf{(ii)} for all $e=vz_e \in
\delta_G(v)$ with $z_e \in N_G(v)-V_1'$. It follows that $(X_e,X'_2)$
violates condition \textbf{(iii)} for all $e=vz_e \in \delta_G(v)$ with
$z_e \in N_G(v)-V_1'$. As $(X_1',X_2')$ is completable, there are two
disjoint spanning trees $T_1,T_2$ of $G$ such that $E_i'\subseteq
E(T_i)$ for $i=1,2$.
\begin{Claim}\label{few1}
There is no $e=vz_e \in \delta_G(v)-(E(T_1) \cup E(T_2))$ with $z_e \in
N_G(v)-V_1'$ such that $(X_e,X'_2)$ violates condition \textbf{\emph{(iii)}}.
\end{Claim}
\begin{proof}
Suppose otherwise. By \autoref{unsamevertex}, there is an edge $f\in
E(T_1)$ incident to $z_e$ such that $T_1'=T_1-f+e$ is a spanning tree
of $G$. As $z_e \notin V_1'$, we obtain that $f \notin E_1'$, yielding
$E(X_e) \subseteq E(T_1')$. As $T_1'$ and $T_2$ are edge-disjoint, we
obtain that $(X_z,X_2')$ is completable, a contradiction.
\end{proof}
\begin{Claim}\label{few2}
There are at most $6k-6$ vertices $z \in N_G(v)-V_1'$ such that
$(X_e,X'_2)$ violates condition \textbf{\emph{(iii)}} for some $e=vz \in
E(T_2)$.
\end{Claim}
\begin{proof}
Suppose otherwise. As $|V_2'|\leq |V(Y_2)|-1 =2k-2$, we obtain that
there are at least $4k-3$ vertices $z \in N_G(v)-V_1'$ such that
$(X_e,X'_2)$ violates condition \textbf{(iii)} for some $e=vz \in
E(T_2)-E_2$. Let $\sigma:E(T_2)\rightarrow E(T_1)$ be a tree-mapping
function from $T_2$ to $T_1$. By \autoref{jamais3} and since $|E'_1|\leq
2k-2$, there is some $z \in N_G(v)-V_1'$ and an edge $e=vz \in
E(T_2)-E_2$ such that $\sigma(e)\in E(T_1)-E_1$. By definition of
tree-mapping functions, $T_1'=T_1-\sigma(e)+e$ and
$T_2'=T_2-e+\sigma(e)$ are edge-disjoint spanning trees of $G$. As
$E(X_e)\subseteq E(T_1')$ and $E(X_2)\subseteq E(T_2')$, we obtain
that $(X_e,X_2')$ is completable, a contradiction.
\end{proof}
 As $v\in L_G$ and $|V_1'|\leq |V_1\cup V_1^*|=2k-2$, we have
 $|N_G(v)-V_1'|\geq (8k-7)-(2k-2)>6k-6$. It now follows from
 \autoref{few1} and \autoref{few2} that there is at least one vertex in $z \in
 N_G(v)-V_1'$ and an edge $e=vz$ such that $X_e$ does not violate any
 of conditions \textbf{(i)}-\textbf{(iii)}, a contradiction.

Observe that the second part of the proof yields an algorithm for
computing a completable pair of classic $(r,k)$-certificates from
$(X_1,X_2)$. Every time we try to add an edge $e$ to $X'_i$, we test
if $(X_e,X'_{3-i})$ satisfies conditions \textbf{(i)}-\textbf{(iii)}. Conditions \textbf{(i)}-\textbf{(ii)} can clearly be checked in polynomial time and, by
\autoref{uncheck}, condition \textbf{(iii)} can also be checked in
polynomial time.  Never checking an edge that is parallel to one we have
already checked, after at most $6k-6$ failed attempts, we manage to
add a new vertex to $V_i'$. We repeat this procedure $|V_i^*|\leq
2k-2$ times. It follows that a completable pair of classic
$(r,k)$-certificates can be computed in time $k^2\cdot n^{O(1)}= n^{O(1)}$. By the
second part of \autoref{unclass}, we can then find  two edge-disjoint
$(r,k)$-safe spanning trees in $G$ in polynomial time. The overall running
time of the algorithm is polynomial, as claimed.
\end{proof}
We are now ready to proceed to the proof of \autoref{treefpt}.

\begin{proof3}
 We may suppose that there are at most two parallel edges from $u$ to
 $v$ for any $u,v \in V+r$.
If $|V|<2k-2$, the problem can be solved by a brute force algorithm in time $2^{O(k^2)}$, by generating all pairs of subgraphs of $G$ and checking whether any of these pairs satisfies the required conditions. We may hence also suppose that
$|V|\geq 2k-2$.

 Let $X_1=(V_1+r,E_1),X_2=(V_2+r,E_2)$ be
 two subgraphs of $G$. In order to test whether $X_i$ is a compact
 $(r,k)$-certificate, we first check whether $X_i$ is a tree such that
 all the vertices in $X_i \cap L_G$ are leaves of $X$. If this is the
 case, we add a set $V_i^*$ of $2k-2-|V_i|$ new vertices to $X_i$. We
 then test all possibilities to add one edge from $(V_i+r)\cap L_G$ to
 $v$ for all $v \in V_i^*$. As $|V_i+r|\leq 2k$ and $|V_i^*|\leq 2k$,
 there are at most $2k^{2k} = 2^{O(k \cdot \log k)}$ possibilities to check. For each of them, we can check in time polynomial in $k$ if the
 obtained graph is an $(r,k)$-safe spanning tree. By the definition of
 compact $(r,k)$-certificates, we can therefore check in time $2^{O(k \cdot \log k)}$ whether both $X_1$ and $X_2$ are compact
 $(r,k)$-certificates. By \autoref{uncheck}, we can therefore decide
 in time $2^{O(k \cdot \log k)} + n^{O(1)}$ if $(X_1,X_2)$ is
 a completable pair of compact $(r,k)$-certificates in $G$. By
 \autoref{compactun}, it therefore suffices to prove that there are at most
 $2^{O(k^2 \cdot \log k)}$ possible candidates for the completable pair of compact
 $(r,k)$-certificates, and that they can be generated within the same running time.

Let $X=(V'+r,E')$ be a compact $(r,k)$-certificate in $G$. Observe that
every vertex in $V'$ can be reached from $r$ by a path all of whose
interior vertices are in $S_G$ and whose length is at most $k-1$.
As every vertex in $S_G$ has at most $8k-8$
neighbors, we obtain that the number of vertices that can be reached
by such a path is at most $1+(8k-8)+(8k-8)^2+\ldots+(8k-8)^{k-1}\leq
(8k)^{k}$. As $V'$ contains at most $2k-2$ vertices, there are at most
${(8k)^{k} \choose 2k-2}\leq (8k)^{2k^2}$ possibilities to choose
$V'$.

 Now suppose that we have chosen $V'$ of size $2k-2$. Observe that
 there are at most $2{|V'+r| \choose 2}\leq 8 k^2$ edges that have
 both ends in $V'+r$. As $|E'|= 2k-2$, there are at most ${8 k^2
   \choose 2k-2}\leq (8 k^2)^{2k} $ possibilities to choose $A'$. It
 follows that there are at most $(8k)^{2k^2}\cdot (8 k^2)^{2k}$
 possibilities to choose a compact $(r,k)$-certificate $X$. As these can
 be computed by a brute force method, the algorithm can finish after
 checking less than $f(k)={(8k)^{2k^2}\cdot (8 k^2)^{2k} \choose 2}  = 2^{O(k^2 \cdot \log k)}$
 candidates for the pair of compact $(r,k)$-certificates.

If no completable pair of compact $(r,k)$-certificates
exists, by \autoref{unclass}, $G$ does not contain two edge-disjoint
$(r,k)$-safe spanning trees. On the other hand, once we have found a pair of
completable compact $(r,k)$-certificates, we can compute in polynomial time the two
edge-disjoint $(r,k)$-safe spanning trees by the second part of \autoref{unclass}. The overall running time of the obtained algorithm is $2^{O(k^2 \cdot \log k)} \cdot n^{O(1)}$.
\end{proof3}

\section{A hardness result for packing $(r,k)$-safe
  spanning trees}
\label{sec:proof-hard-trees}

In this section we prove
  \autoref{theo:lower-graph}. It is well-known that the {\sc 3-Sat} problem is \NP-complete. Further, we will need the following lemma derived from the \ETH using the so-called Sparsification
  Lemma~\cite{I}.

\begin{lemma}[Impagliazzo et al.~\cite{I}]
  \label{lem:ETH}
Assuming the \ETH,  there is an $\varepsilon > 0$ such that there is no algorithm for solving a {\sc 3-Sat} formula
with $\ell$ variables and $m$ clauses in time $2^{\varepsilon m} \cdot (\ell+m)^{O(1)}$.
\end{lemma}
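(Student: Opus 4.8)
The plan is to derive Lemma~\ref{lem:ETH} from the \ETH by combining it with the \emph{Sparsification Lemma} of Impagliazzo, Paturi, and Zane~\cite{I}, which we use as a black box. Recall that the Sparsification Lemma asserts that for every $\eta > 0$ there is a constant $C=C(\eta)$ and an algorithm that, given a 3-{\sc Sat} formula $\phi$ on $\ell$ variables and $m$ clauses, runs in time $2^{\eta \ell}\cdot (\ell+m)^{O(1)}$ and outputs 3-{\sc Sat} formulas $\psi_1,\dots,\psi_t$ with $t\leq 2^{\eta \ell}$, each on the variable set of $\phi$ and with at most $C\ell$ clauses, such that $\phi$ is satisfiable if and only if at least one $\psi_i$ is satisfiable.

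We argue by contraposition. The conclusion of Lemma~\ref{lem:ETH} is an ``$\exists \varepsilon>0$'' statement, so its negation reads: for \emph{every} $\varepsilon>0$ there is an algorithm $A_\varepsilon$ deciding 3-{\sc Sat} in time $2^{\varepsilon m}\cdot(\ell+m)^{O(1)}$. We show that this family of algorithms contradicts the \ETH, that is, we use it to produce, for every $\delta>0$, an algorithm deciding 3-{\sc Sat} in time $2^{\delta \ell}\cdot(\ell+m)^{O(1)}$, which is exactly the negation of the \ETH as stated in \autoref{sec:intro}.

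So fix an arbitrary $\delta>0$, and consider the following algorithm on an input formula $\phi$ with $\ell$ variables and $m$ clauses. First apply the Sparsification Lemma with parameter $\eta:=\delta/2$; this yields, in time $2^{(\delta/2)\ell}\cdot(\ell+m)^{O(1)}$, formulas $\psi_1,\dots,\psi_t$ with $t\leq 2^{(\delta/2)\ell}$, each with at most $C\ell$ clauses, where $C=C(\delta/2)$ depends only on $\delta$. Now run $A_\varepsilon$ with $\varepsilon:=\delta/(2C)$ on each $\psi_i$, and answer \yes if and only if some $\psi_i$ is satisfiable; correctness is immediate from the Sparsification Lemma. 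Since each $\psi_i$ has at most $C\ell$ clauses, a single call costs $2^{(\delta/(2C))\cdot C\ell}\cdot(\ell+m)^{O(1)} = 2^{(\delta/2)\ell}\cdot(\ell+m)^{O(1)}$, so the total running time is $2^{(\delta/2)\ell}\cdot 2^{(\delta/2)\ell}\cdot(\ell+m)^{O(1)} = 2^{\delta \ell}\cdot(\ell+m)^{O(1)}$. As $\delta>0$ was arbitrary, in particular taking $\delta$ equal to the constant $\varepsilon$ witnessing the \ETH gives an algorithm that refutes it.

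The only step that requires care is the order in which the two parameters are chosen: we must fix the sparsification parameter $\eta=\delta/2$ first, read off the resulting constant $C=C(\delta/2)$, and only then pick the parameter $\varepsilon=\delta/(2C)$ for the algorithm that is subexponential in the number of clauses, so that the factor $C$ in the exponent $2^{\varepsilon\cdot C\ell}$ cancels and each of the (at most $2^{(\delta/2)\ell}$) sparse subinstances is solved in time $2^{(\delta/2)\ell}\cdot(\ell+m)^{O(1)}$. We expect this bookkeeping to be the main (and essentially only) point to get right; everything else is routine.
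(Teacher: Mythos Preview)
The paper does not give its own proof of this lemma: it is stated as a known result attributed to Impagliazzo et al.~\cite{I}, with the sentence preceding it noting only that it is ``derived from the \ETH using the so-called Sparsification Lemma.'' Your argument is exactly that standard derivation---apply sparsification to reduce to instances with $O(\ell)$ clauses, then invoke the assumed clause-subexponential algorithm---and the bookkeeping you flag (fix $\eta$ first, read off $C(\eta)$, then choose $\varepsilon$) is handled correctly, so your proposal is both correct and in line with what the paper intends.
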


The proof of \autoref{theo:lower-graph} given below
  is strongly inspired from the reduction given
  in~\cite[Theorem~5.2]{bhy}, but we provide it here entirely for the sake of
  completeness.


\begin{proof4}
Observe that, given a rooted graph $G=(V+r,E)$ and two positive
integers $p$ and $k$, $G$ contains an $(r,k)$-safe spanning tree if
and only if the graph that is obtained from $G$ by replacing each of
its edges by $p$ parallel copies of itself contains $p$ edge-disjoint
$(r,k)$-safe spanning trees. Hence, it suffices to prove the statement
for $p = 1$. Let $\phi$ be an instance of {\sc 3-Sat}, with variables
$x_1,x_2, \dots ,x_{\ell}$ and clauses $C_1, C_2, \dots , C_m$.
Adding a variable that is not contained in any clause if necessary, we
can assume that $\ell$ is even. We construct a simple rooted graph
$G=(V+r,E)$ as follows; see \autoref{fig:r-rooted-graph} for an
illustration. For $i=1,\dots ,\ell$ let $V_i$ be an independent set
containing two vertices $v_i$ and $\bar{v_i}$, and let $r$ and $t$ be
two extra vertices. Add all possible edges between $r$ and $V_1$,
between $V_i$ and $V_{i+1}$ for $i=1,\dots ,\ell-1$, and between
$V_{\ell}$ and $t$. Next, add $m$ vertices $c_1,\dots ,c_m$ and link
$c_i$ to $v_j$ (resp. $\bar{v_j}$) with a path containing $\ell/2$ interior vertices if
$x_i$ (resp. $\bar{x_i}$) is a literal of $C_i$. Finally, let
$k:=1+\ell+3\ell m/2+m$ and add to $G$ an independent set $Q$ on $q$
vertices all linked to $t$, where $q >k-\ell-1$ will be specified later. Notice
that we have $n=|V|=q+k+\ell+1$.

\begin{figure}[!ht]
\centering
\scalebox{0.73}{\input{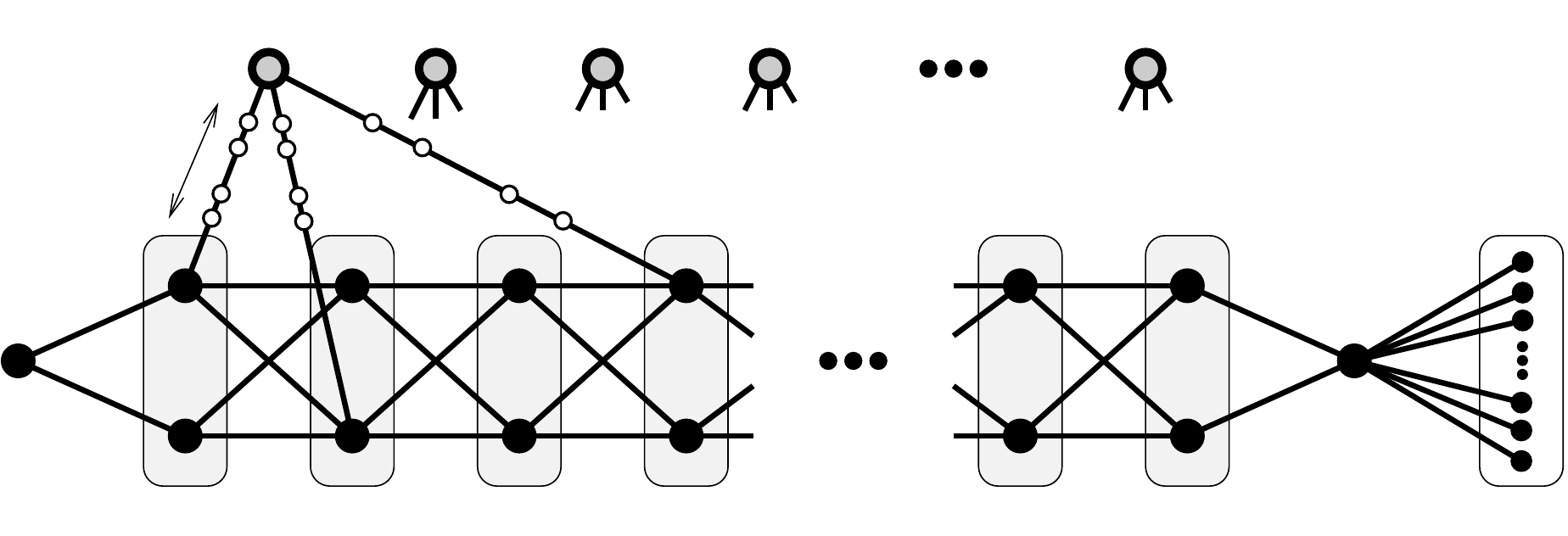_t}}
\caption{The rooted graph $G$ in the proof of
  \autoref{theo:lower-graph} with $C_1=\bar{x_1}\vee x_2
\vee \bar{x_4}$.}
\label{fig:r-rooted-graph}
\end{figure}

We now prove that $\phi$ is satisfiable if and only if $G$ admits an
$(r,k)$-safe spanning tree. First assume that $G$ contains an
$(r,k)$-safe spanning tree $T$. Observe that, by construction and
the definition of $(r,k)$-safe spanning tree, $T-r$ has exactly two
 components whose vertex sets we denote by $T_1$ and
$T_2$. We may assume that $T_1$ contains $t$. Then $G[T_1]$ contains a
path $P$ from $V_1$ to $t$ and $T_1$ contains $Q$. As a shortest path
from $V_1$ to $t$ contains $\ell+1$ vertices, we have $|T_1|\ge
q+\ell+1$ and $n-|T_1|\le k$.  As $T$ is $(r,k)$-safe, we obtain that
$|T_1|=q+\ell+1$ and $T_1$ consists exactly of the vertex set of the
path $P$, which contains exactly $\ell+1$ vertices, and the whole set $Q$. In
particular, $P$ intersects each $V_i$ in exactly one vertex. Now for
every variable $x_i$ of $\cal F$, we set $x_i$ to \true if $P$
contains $\bar{v_i}$ and to \false if $P$ contains $v_i$. For each
clause $C_j$, there must be a path from the corresponding vertex $c_j$ to $V_1$ in $T_2$.  Then one of the vertices corresponding to a
literal of $C_j$ must not be contained in $P$ and so, this literal is
set to \true and $C_j$ is satisfied by it. It follows that the
constructed assignment satisfies $\phi$.

  Conversely, assume that $\phi$ admits a truth assignment.  Let $P$
  be the path defined so that, for every $i=1,\dots ,\ell$, it
  contains $v_i$ if $x_i$ is set to \false and $\bar{v_i}$ if $x_i$ is
  set to \true. Further, let $T_1 = V (P) \cup t \cup Q$. As $\phi$ is
  satisfied, for every $1\le j\le m$, there exists a path from $c_j$
  to $V_1\cup \cdots \cup V_{\ell}$ in $G- (T_1\cup r)$. It follows that $G-
  (T_1\cup r)$ is connected and we select a spanning tree of it. The
  union of this spanning tree with $G[T_1]$, $r$, and the edges
  incident to $r$ is a spanning tree of $G$. In order to see that this
  spanning tree is $(r,k)$-safe, it suffices to observe that, as $q>k-\ell-1$, we have
  $|T_1|=q+\ell+1>k$ and $|V-T_1|=q+k+1+\ell-(q+\ell+1)=k$.

If we fix $q=k$, then the size of $G$ is bounded by a polynomial in
$\ell$ and~$m$. Thus, the above reduction implies that, as {\sc 3-Sat} is \NP-complete, given a rooted
graph $G$ and an integer $k$, deciding whether $G$ admits an $(r,k)$-safe
spanning tree is \NP-complete.

Let now $\varepsilon$ be a positive constant and assume that $k$ is an
integer function satisfying $(\log(n))^{2+\varepsilon}\leq k(n) \leq
\frac{n}{2}$ for all $n>0$. Furthermore, suppose that there exists a
constant $C^*$ such that for all $c \geq C^*$ there exists an $n$ such
that $k(n)=c$. Finally, for the sake of a contradiction assume that
there exists a polynomial-time algorithm {\bf A}, running in time
$O(n^{c_0})$ for some $c_0>0$, for deciding if a given rooted graph
$G=(V+r,E)$ on $n$ vertices contains an $(r,k(n))$-safe spanning
tree. Then let $\phi$ be a {\sc 3-Sat} formula with $\ell$ variables
and $m$ clauses. We may assume that $\ell$ and $m$ are large enough so
that $1+\ell+3\ell m/2+m\ge C^*$.  Adding trivial clauses if
necessary, we may also assume that $\ell\le m$. By hypothesis, there
exists $n$ such that $k(n)=1+\ell+3\ell m/2+m$. So, in the above
reduction, we choose $q$ to be $n-(k(n)+\ell +1)$, in order to have
$n=q+k(n)+\ell+1$. Then, using algorithm {\bf A}, one could decide if
$\phi$ is satisfiable in time $O(n^{c_0})=O(2^{c_0 \cdot \log n})=
O(2^{c_0 \cdot k(n)^{1/(2+\varepsilon)}})$, where we have used the
hypothesis that $k(n) \geq (\log(n))^{2+\varepsilon}$.  Moreover, in
the previous construction, we have $k(n)=1+\ell+3\ell m/2+m\le
3m+3m^2/2\le 3m^2$. So we could decide whether $\phi$ is satisfiable
in time $O(2^{c_0 \cdot (3m^2)^{1/(2+\varepsilon)}})=O(2^{c_0' \cdot
  m^{\varepsilon'}})$ with $\varepsilon'=2/(2+\varepsilon)<1$, a
contradiction to \autoref{lem:ETH} assuming the \ETH.
\end{proof4}

\section{Conclusion}\label{con}

We considered three problems on finding certain disjoint substructures in graphs and digraphs. While in our proofs we restrict to finding two
of these substructures for the sake of simplicity, our results can be
generalized to allow for finding an arbitrary number of them using the
same proof techniques. More concretely, the following results can be
established using the techniques of this article. As in~\cite{bhy}, we omit the proofs of these generalized statements.
\begin{theorem}\label{fyg}
Given a rooted digraph $D=(V+r,A)$ and an integer $p \geq 2$,
deciding whether $D$ contains $p$ arc-disjoint $k$-safe spanning
$r$-arborescences is \FPT with parameter $k$. More precisely, the problem can be solved in time $2^{O(p \cdot k^2 \cdot \log k)} \cdot n^c$, where $c$ is a constant depending on $p$.
Further, if they exist, the
$p$ arc-disjoint $k$-safe spanning $r$-arborescences can be computed
within the same running time.
\end{theorem}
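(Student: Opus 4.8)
The plan is to follow the proof of \autoref{safefpt} almost verbatim, replacing the pair $(X_1,X_2)$ by a $p$-tuple $(X_1,\dots,X_p)$ throughout and tracking how the constants scale with $p$. First I would import from~\cite{bhy}, by exactly their arguments, the $p$-ary versions of the two ingredients used in the case $p=2$: an appropriate notion of an \emph{extendable $p$-tuple} of arc-disjoint subdigraphs --- a condition asserting that $D$ remains sufficiently root-connected after deleting suitable unions of the arc sets $A(X_i)$, generalizing the notion in \autoref{prelim:arborescences} --- which can be tested in time $n^{c(p)}$ (the analogue of \autoref{extend}, via Edmonds' arc-disjoint arborescence theorem with prescribed forests), together with the analogue of \autoref{classksafe}, stating that $D$ contains $p$ arc-disjoint $k$-safe spanning $r$-arborescences if and only if it contains an extendable $p$-tuple of arc-disjoint classic $(r,k)$-kernels, the arborescences being recoverable in time $n^{c(p)}$. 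I would also redefine a vertex $v$ to be \emph{large} when $|N_D^+(v)|\ge c_1\cdot p\cdot k$ for a suitable absolute constant $c_1$, and leave the definition of \emph{compact $(r,k)$-kernel} otherwise unchanged.

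The heart of the argument is the $p$-ary analogue of \autoref{compactsafe}: for a $p$-root-connected rooted digraph $D$ with $|V|\ge 2k-2$, $D$ contains an extendable $p$-tuple of arc-disjoint compact $(r,k)$-kernels if and only if it contains an extendable $p$-tuple of arc-disjoint classic $(r,k)$-kernels, the latter (and $p$ arc-disjoint $k$-safe spanning $r$-arborescences) being constructible in polynomial time given the former. The direction from classic to compact is obtained by applying the pruning of \autoref{compactsafe} coordinatewise. For the converse I would run the same greedy procedure as in \autoref{compactsafe}, maintaining a vertex-maximal $p$-tuple $(X_1',\dots,X_p')$ subject to the obvious $p$-ary versions of conditions \textbf{(i)}--\textbf{(iv)} (each $X_i'$ is grown from $X_i$ by adding arcs leaving large vertices; $d^+_{X_i'}(v)\le d^+_{Y_i}(v)$ for all large $v$; the $A_i'$ are pairwise disjoint; and $(X_1',\dots,X_p')$ is extendable). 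If the procedure fails to grow some $X_1'$ at a large vertex $v$, then every candidate arc $a=vz$ with $z\notin V_1'$ violates \textbf{(iii)} or \textbf{(iv)}; the number of arcs violating \textbf{(iii)} is at most $\sum_{j\neq 1}|V_j'|\le (p-1)(2k-2)$, while the number violating \textbf{(iv)} is at most $(p-1)|A_1'|\le (p-1)(2k-2)$ by the lemma described below. Choosing $c_1$ large enough that $|N_D^+(v)|-|V_1'|$ exceeds the sum of these two quantities yields a contradiction, and the same run extracts the desired structures in time $n^{c(p)}$.

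The one genuinely new ingredient, and the step I expect to be the most delicate, is the $p$-ary generalization of \autoref{utfy}: if $D=(V+r,A)$ is $p$-root-connected and $D'$ is a $(p-1)$-root-connected rooted digraph obtained from $D$ by deleting $\alpha$ arcs, then for every $v\in V+r$ there are at most $(p-1)\alpha$ arcs with tail $v$ that are critical for $(p-1)$-root-connectedness in $D'$. The proof of \autoref{utfy} does not transfer directly, since for $p\ge 3$ the submodularity inequality used there no longer forces a contradiction. Instead I would induct on $\alpha$ as in \autoref{utfy} but bound the number of \emph{newly} critical out-arcs at $v$ created by one deletion by $p-1$ rather than by $1$: the $(p-1)$-tight sets witnessing these new critical arcs all contain the head of the just-deleted arc, hence pairwise intersect, so by \autoref{subm} (uncrossing) their union $M$ is again $(p-1)$-tight and avoids $v$; thus every new critical out-arc at $v$ enters $M$, and there are at most $d^-_{D'}(M)=p-1$ of these. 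Iterating over the $\alpha$ deletions gives the claimed bound; once this is in place the generalization causes no further difficulty, consistent with the proof being omitted in the paper.

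Granting these, the proof of \autoref{fyg} mirrors that of \autoref{safefpt}. After reducing to at most two parallel arcs between any pair of vertices and checking $p$-root-connectedness of $D$ (returning a negative answer otherwise), the case $|V|<2k-2$ is settled by brute force over all $p$-tuples of subdigraphs in time $2^{O(pk^2)}$. Otherwise every vertex of a compact $(r,k)$-kernel is reachable from $r$ by a directed path of length at most $k-1$ through small vertices, each of out-degree less than $c_1 pk$, so at most $(c_1pk)^k$ vertices are reachable and there are at most $(c_1pk)^{O(k^2)}\cdot 2^{O(k\log k)}=2^{O(k^2\log k)}$ compact $(r,k)$-kernels (absorbing the $\log p$ factor into the constant, $p$ being treated as fixed as in the statement); enumerating all $p$-tuples of them and testing extendability via the $p$-ary analogues of \autoref{extend} and \autoref{classksafe} yields total running time $2^{O(p\cdot k^2\cdot\log k)}\cdot n^{c}$ with $c=c(p)$, and when a valid $p$-tuple is found the $p$ arc-disjoint $k$-safe spanning $r$-arborescences are produced within the same bound.
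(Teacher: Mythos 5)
Your proposal follows exactly the route the paper intends: \autoref{fyg} is stated without proof (the authors only assert that it ``can be established using the techniques of this article''), and your sketch carries out that plan faithfully --- coordinatewise pruning for the classic-to-compact direction, the greedy vertex-maximal $p$-tuple subject to the $p$-ary conditions \textbf{(i)}--\textbf{(iv)}, and the enumeration of compact kernels with the large/small threshold rescaled to $\Theta(pk)$. You also correctly isolate the one genuinely new ingredient, the $p$-ary analogue of \autoref{utfy}, and your uncrossing proof of it is sound: every tight set witnessing a newly critical out-arc at $v$ contains the head of the last deleted arc and excludes $v$, so these sets pairwise intersect and, by \autoref{subm}, their union $M$ is again tight with $d^-(M)=p-1$ and $v\notin M$, whence at most $p-1$ new critical out-arcs per deletion; this specializes to the paper's \autoref{utfy} at $p=2$. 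The only item to pin down in a full write-up is the precise definition of an extendable $p$-tuple, which must be taken from \cite{bhy} (it is not simply ``each $D-A_i$ is root-connected'' for $p\ge 3$) since it determines exactly which deficient sets appear in condition \textbf{(iv)}; the same uncrossing argument applies to that formulation, so this is a matter of bookkeeping rather than a gap.
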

\begin{theorem}
 Given a rooted digraph $D=(V+r,A)$ and an integer $p\geq 2$, deciding whether $D$ contains $p$ arc-disjoint $(r,k)$-flow branchings is \FPT with  parameter $k$. More precisely, the problem can be solved in time $2^{O(p \cdot k^2 \cdot \log k)} \cdot n^c$, where $c$ is a constant depending on $p$. Further, if they exist, the $p$ arc-disjoint $(r,k)$-flow branchings can be computed within the same running time.
\end{theorem}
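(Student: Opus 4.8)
The plan is to lift, almost verbatim, the strategy used to prove \autoref{flowfpt} from pairs of structures to $p$-tuples. As there, one first disposes of degenerate cases: if $|V|<2k-1$ the answer is found by brute force exactly as in the proof of \autoref{flowfpt}, now distributing the (at most $|V|$ parallel, as observed there) arcs of each vertex pair among $p$ rather than two candidate flow branchings, which gives $(2k)^{O(p \cdot k^2)} = 2^{O(p \cdot k^2 \cdot \log k)}$ candidate $p$-tuples to test in total; and since $p$ arc-disjoint spanning $(r,k)$-flow branchings have $p$ pairwise arc-disjoint root-connected spanning supports, we may assume that $D$ is $p$-root-connected, and in particular $2$-root-connected, as otherwise the answer is negative. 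The key ingredient is then the $p$-object analogue of \autoref{classbranch}, which is contained in the techniques of~\cite{bhy}: with the natural $p$-tuple notion of extendability (a $p$-tuple $(X_1,\dots,X_p)$ with $X_i=(V_i+r,A_i)$ is \emph{extendable} if the $A_i$ are pairwise disjoint and $D-\bigcup_{j\neq i}A_j$ is root-connected for every $i$, which for $p=2$ is precisely the notion of \autoref{prelim:flow}), one has that $D$ contains $p$ arc-disjoint spanning $(r,k)$-flow branchings if and only if $D$ contains an extendable $p$-tuple of classic $(r,k)$-cores, and in the positive case the $p$ flow branchings are recovered from such a $p$-tuple in time $n^{c}$ with $c=c(p)$.

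Next I would introduce, exactly as in \autoref{flow}, \emph{compact $(r,k)$-cores}, now with respect to a threshold for \emph{large} vertices that is a function of both $p$ and $k$, of the form $|N^+(v)|\geq c'(p)\cdot k^2$ for a suitable constant $c'(p)$ (taking $c'(p)$ of order $p^2$ will suffice); a compact $(r,k)$-core is a subdigraph $X=(V'+r,A')$ of $D$ with $|V'|\leq 2k-1$ in which all vertices of $V'\cap L_D$ are sinks and which, by attaching $2k-1-|V'|$ new vertices hanging from vertices of $V'\cap L_D$, can be turned into an $(r,k)$-flow branching. The heart of the proof is the $p$-analogue of \autoref{compactbranch}: $D$ contains an extendable $p$-tuple of arc-disjoint classic $(r,k)$-cores if and only if it contains an extendable $p$-tuple of arc-disjoint compact $(r,k)$-cores, the former being computable from the latter in polynomial time. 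The ``classic $\Rightarrow$ compact'' direction is the truncation argument of \autoref{compactbranch} applied in each coordinate. For the converse, given an extendable $p$-tuple $(X_1,\dots,X_p)$ of compact cores with certifying flow branchings $Y_1,\dots,Y_p$ (which, using \autoref{not3} and $|V|\geq 2k-1$, may be taken triple-free), one takes a vertex-maximal $p$-tuple $(X_1',\dots,X_p')$ obtained by repeatedly attaching to some $X_i$ a vertex $z\in V-V_i$ via an arc of $A$ leaving a vertex of $V_i\cap L_D$, while maintaining (ii) $d^+_{X_i'}(v)\leq d^+_{Y_i}(v)$ for $v\in V_i\cap L_D$, (iii) the $A_i'$ pairwise disjoint, and (iv) extendability of the $p$-tuple. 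If some $X_i'$ is not yet isomorphic to $Y_i$, pick a large $v\in V_i$ with $d^+_{X_i'}(v)<d^+_{Y_i}(v)$; an arc $vz$ with $z\in N^+(v)-V_i'$ fails to be attachable only if $z\in\bigcup_{j\neq i}V_j'$ (at most $(p-1)\cdot 2k$ vertices, since $|V_j'|\leq 2k$) or $vz$ becomes critical in some $D-\bigcup_{j\neq i'}A_j'$ with $i'\neq i$; since $D$ is $2$-root-connected, each such $D-\bigcup_{j\neq i'}A_j'$ is root-connected by (iv) and, by triple-freeness, has at most $(p-1)\cdot 16k^2$ fewer arcs than $D$, so \autoref{utfy} caps the number of critical arcs with tail $v$ in it by $(p-1)\cdot 16k^2$, hence the number of bad $z$ arising this way by $(p-1)^2\cdot 16k^2$. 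Choosing $c'(p)$ so that $c'(p)\cdot k^2 > (p-1)^2\cdot 16k^2+(p-1)\cdot 2k+2k$ then leaves a valid arc $vz$, contradicting maximality; this argument also yields, via $n^{O(1)}$ root-connectivity tests (\autoref{extend}), a polynomial-time procedure producing an extendable $p$-tuple of classic cores.

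For the running time, the number of candidate compact $(r,k)$-cores is bounded as in the proof of \autoref{flowfpt}: each of the $\leq 2k-1$ vertices of such a core is reachable from $r$ by a directed path of length $\leq 2k-1$ through small vertices, of which there are at most $(c'(p)k^2)^{O(k)}=2^{O(k \cdot \log k)}$ for fixed $p$, giving $2^{O(k^2 \cdot \log k)}$ choices for $V'$ and $2^{O(k^2)}$ choices for the internal arc set, so $2^{O(k^2 \cdot \log k)}$ compact cores and $2^{O(p \cdot k^2 \cdot \log k)}$ candidate $p$-tuples; each is tested for being a $p$-tuple of compact cores in time $2^{O(k \cdot \log k)}\cdot n^{O(1)}$ (trying all $\leq (2k)^{2k}$ ways of attaching the new vertices and invoking \autoref{classkern}) and for extendability in time $n^{O(1)}$, and the positive case is completed in time $n^{c(p)}$ via the $p$-object analogue of \autoref{classbranch} together with \autoref{not3}; the total is $2^{O(p \cdot k^2 \cdot \log k)}\cdot n^{c}$ with $c=c(p)$, and the $p$ flow branchings are produced along the way. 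The step I expect to be most delicate is the inductive extension in the second paragraph with $p-1$ cores removed rather than one: one must verify that \autoref{utfy} is invoked correctly (it requires only that $D$ be $2$-root-connected and that the digraph obtained by deleting the relevant $(p-1)\cdot 16k^2$ arcs be root-connected, and it bounds the critical arcs with a fixed tail by the \emph{number} of deleted arcs, which stays a function of $k$ for fixed $p$), and that the $p$-object base reduction of~\cite{bhy}, once phrased in terms of extendable $p$-tuples as above, genuinely allows completing $p$ classic cores into $p$ pairwise arc-disjoint spanning $(r,k)$-flow branchings.
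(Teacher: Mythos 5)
Your proposal is correct and follows essentially the same route as the paper: the paper explicitly omits the proof of this theorem, stating only that it follows from the techniques used for the $p=2$ case, and your argument is precisely the expected generalization of \autoref{compactbranch} and the proof of \autoref{flowfpt} to $p$-tuples (with the large-vertex threshold, the count of ``bad'' arcs via \autoref{utfy}, and the enumeration of compact cores all rescaled by functions of $p$). The two points you flag as delicate --- that \autoref{utfy} only needs $2$-root-connectivity of $D$ and root-connectivity of the digraph obtained after deleting the $O(p k^2)$ arcs of the other cores, and that the $p$-object completion lemma of Bang-Jensen, Havet, and Yeo must be taken from~\cite{bhy} --- are exactly the dependencies the paper itself relies on without further proof.
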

\begin{theorem}\label{thm:last}
Given a rooted graph $G=(V+r,E)$ and an integer $p \geq 2$, deciding whether $G$ contains $p$ arc-disjoint $(r,k)$-safe spanning trees is \FPT with parameter $k$. More precisely, the problem can be solved in time $2^{O(p \cdot k^2 \cdot \log k)} \cdot n^c$, where $c$ is a constant depending on $p$. Further, if they exist, the $p$ edge-disjoint $(r,k)$-safe spanning trees can be computed within the same running time.
\end{theorem}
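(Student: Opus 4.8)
The plan is to follow the proof of \autoref{treefpt} almost verbatim, replacing \emph{pairs} of substructures by \emph{$p$-tuples} throughout. First I would extend the notions of \autoref{prelim:trees} to $p$-tuples: call a $p$-tuple $(X_1,\dots,X_p)$ of pairwise edge-disjoint subtrees of $G$ \emph{completable} if there are pairwise edge-disjoint spanning trees $T_1,\dots,T_p$ of $G$ with $E(X_i)\subseteq E(T_i)$ for every $i$. As in \autoref{uncheck}, deciding whether a given $p$-tuple is completable reduces to a matroid union computation and can be done in polynomial time. Likewise, the $p$-tuple version of \autoref{unclass} --- already implicit in~\cite{bhy}, whose \XP algorithm handles an arbitrary number of trees --- states that $G$ has $p$ edge-disjoint $(r,k)$-safe spanning trees if and only if $G$ has a completable $p$-tuple of classic $(r,k)$-certificates, and that one can be obtained from the other in polynomial time.

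The heart of the argument is the $p$-tuple analogue of \autoref{compactun}. I would redefine $v$ to be \emph{large} if $|N_G(v)|\geq (2k-2)+(p-1)(6k-6)+1$, which is still $O(pk)$ and reduces to $8k-7$ when $p=2$, and keep the definition of a compact $(r,k)$-certificate unchanged. The implication from classic to compact is unchanged: delete from each $X_i$ the subtrees $C^v_{X_i}$ rooted at its large vertices. For the converse, given a completable $p$-tuple $(X_1,\dots,X_p)$ of compact certificates with witnessing $(r,k)$-safe spanning trees $Y_1,\dots,Y_p$, I would pick a completable $p$-tuple $(X_1',\dots,X_p')$ of subgraphs of $G$ that is vertex-maximal subject to the three conditions of \autoref{compactun}, now read for all $p$ indices: (i) each $X_i'$ grows from $X_i$ by attaching new vertices to vertices of $V_i\cap L_G$; (ii) $d_{X_i'}(v)\leq d_{Y_i}(v)$ for every $v\in V_i\cap L_G$; (iii) the $p$-tuple is completable. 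If (ii) is tight in all coordinates we are done; otherwise some $v\in V_1\cap L_G$, say, has $d_{X_1'}(v)<d_{Y_1}(v)$, and I must bound the number of $z\in N_G(v)-V_1'$ for which no edge $vz$ can be added to $X_1'$. Fixing completability witnesses $T_1,\dots,T_p$, the analogue of \autoref{few1} (using \autoref{unsamevertex} on $T_1$) shows no such $z$ has $vz\notin\bigcup_iE(T_i)$, and edges $vz\in E(T_1)$ are harmless because then $E_1'\cup\{vz\}\subseteq E(T_1)$; so every bad $z$ has $vz\in E(T_j)$ for some $j\geq 2$. For a fixed such $j$, the analogue of \autoref{few2} applies: at most $|V_j'|\leq 2k-2$ bad $z$ have $vz\in E_j$, and for the remaining ones a tree-mapping function $\sigma_j\colon E(T_j)\to E(T_1)$ combined with \autoref{jamais3} and $|E_1'|\leq 2k-2$ gives at most $4k-4$ bad $z$, because otherwise some bad $z$ has $\sigma_j(vz)\in E(T_1)-E_1'$ and swapping $vz$ with $\sigma_j(vz)$ between $T_j$ and $T_1$ --- which leaves $T_i$, $i\notin\{1,j\}$, untouched and the whole family still pairwise edge-disjoint --- produces a completability witness for $(X_1'+vz,X_2',\dots,X_p')$. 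Summing over $j$, there are at most $(p-1)(6k-6)$ bad $z$; since $|N_G(v)-V_1'|>(p-1)(6k-6)$ by the choice of the threshold, some edge $vz$ can be added, contradicting maximality.

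With this lemma the proof of \autoref{thm:last} runs like that of \autoref{treefpt}: when $|V|<2k-2$ we decide the instance by brute force over all $p$-tuples of subgraphs in time $2^{O(p\cdot k^2)}$; otherwise, since a compact $(r,k)$-certificate has at most $2k-2$ vertices, each reachable from $r$ by a path of length at most $k-1$ through small vertices (each of which has fewer than $O(pk)$ neighbours), there are at most $(O(pk))^{O(k^2)}$ compact $(r,k)$-certificates and hence $2^{O(p\cdot k^2\cdot\log k)}$ candidate $p$-tuples; each candidate is tested for being a $p$-tuple of compact certificates in time $2^{O(k\log k)}\cdot n^{O(1)}$ (trying the $(2k)^{2k}$ ways of hanging the fresh vertices off large vertices) and for completability in time $n^{c}$, and by the two lemmas the algorithm answers \yes exactly when a completable $p$-tuple of compact certificates is found, from which the $p$ edge-disjoint $(r,k)$-safe spanning trees are recovered in polynomial time --- the classic $p$-tuple by the greedy edge-adding procedure implicit in the maximality argument, and then the spanning trees by the $p$-tuple version of \autoref{unclass}.

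I expect the main obstacle to be the bookkeeping in the generalized exchange step: one must check carefully that swapping an edge between $T_1$ and $T_j$ along the tree-mapping function preserves pairwise edge-disjointness with every other $T_i$ and keeps $E_1'$ and $E_j'$ inside the modified trees, so that the per-index bounds $|V_j'|\leq 2k-2$ and $4k-4$ combine to the clean total $(p-1)(6k-6)$ and the threshold $O(pk)$ is enough. Establishing the $p$-tuple forms of \autoref{unclass} and \autoref{uncheck} should, by contrast, be routine consequences of matroid union, exactly as for $p=2$.
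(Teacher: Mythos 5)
Your proposal is correct and is exactly the generalization the paper intends: the authors explicitly omit the proofs of Theorems~\ref{fyg}--\ref{thm:last}, stating only that they follow from the techniques of the two-tree case, and your argument reproduces the proof of \autoref{treefpt} with the ``large'' threshold scaled to $O(pk)$, the exchange bounds of \autoref{few1} and \autoref{few2} applied per index $j\geq 2$ and summed to $(p-1)(6k-6)$, and the matroid-union versions of \autoref{uncheck} and \autoref{unclass}. The only (immaterial) quibble is that counting $p$-tuples of certificates over $(O(pk))^{O(k^2)}$ candidates yields an extra $\log p$ in the exponent, which the theorem's statement already absorbs by letting the constants depend on $p$.
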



It is natural to ask whether the dependency on $k$ of our \FPT algorithms can be improved. In the case of $k$-safe spanning $r$-arborescences (cf.~\autoref{safefpt}), we can derive a lower bound from \autoref{theo:lower-arbo}. Indeed, a
  corollary of \autoref{theo:lower-arbo} is that, assuming the \ETH, for any two constants $\varepsilon >0$ and
  $c>0$,
  deciding whether a rooted digraph contains two arc-disjoint $k$-safe spanning
  arborescences cannot be solved in time $2^{c \cdot k^{1-\varepsilon}} \cdot n^{O(1)}$.
  To see this, note that if such an algorithm existed, letting $k(n):=(\log(n))^{1+\varepsilon}$ we would obtain an algorithm in time $2^{c \cdot (\log(n))^{(1+\varepsilon)(1-\varepsilon)}} \cdot n^{O(1)} = n^{O(1)}$, contradicting \autoref{theo:lower-arbo}. In other words, assuming the \ETH, the problem cannot be solved in time $2^{O(k^{1-\varepsilon})} \cdot n^{O(1)}$ for any $\varepsilon>0$.


Similarly to \autoref{theo:lower-arbo},
  \autoref{theo:lower-branching}  implies a lower bound
   for packing $(r,k)$-flow branchings (cf.~\autoref{safefpt}): assuming the \ETH, deciding whether a  rooted
  digraph contains two arc-disjoint $(r,k)$-flow branchings
  cannot be solved in time $2^{O( k^{1-\varepsilon})} \cdot n^{O(1)}$ for any $\varepsilon >0$. Also, concerning $(r,k)$-safe spanning
  trees (cf.~\autoref{treefpt}), a consequence of \autoref{theo:lower-graph} is that, assuming the \ETH, for every  $p\geq 1$, deciding whether a
  rooted graph contains $p$ edge-disjoint $(r,k)$-safe spanning
  trees cannot be solved in time
  $2^{O(k^{1/2-\varepsilon})} \cdot n^{O(1)}$ for any $\varepsilon >0$.

  There is still a significant gap between the above lower bounds, which are $2^{O(k^{1-\varepsilon})}$ or $2^{O(k^{1/2-\varepsilon})}$, and the function
$2^{O(k^2 \cdot \log k)}$ in our \FPT algorithms.

We did not focus on optimizing the polynomial factors in $n$ of our algorithms, and we leave it for further research. Further, we leave as an open question whether any of the considered problems admits a polynomial kernel parameterized by $k$. Finally, it would be interesting to find a theorem on packing $k$-safe mixed arborescences in mixed graphs, hence generalizing both \autoref{fyg} and \autoref{thm:last}.

\bibliography{Bib}

\begin{thebibliography}{10}

\bibitem{bb}
J{\o}rgen Bang-Jensen and St{\'{e}}phane Bessy.
\newblock (arc-)disjoint flows in networks.
\newblock {\em Theoretical Computer Science}, 526:28--40, 2014.
\newblock \href {https://doi.org/10.1016/j.tcs.2014.01.011}
  {\path{doi:10.1016/j.tcs.2014.01.011}}.

\bibitem{bg}
J{\o}rgen Bang{-}Jensen and Gregory Gutin.
\newblock {\em Digraphs: Theory, Algorithms and Applications}.
\newblock London, 2nd edition, 2009.
\newblock URL: \url{https://dblp.org/rec/books/daglib/0022205.bib}.

\bibitem{bhy}
J{\o}rgen Bang{-}Jensen, Fr{\'{e}}d{\'{e}}ric Havet, and Anders Yeo.
\newblock The complexity of finding arc-disjoint branching flows.
\newblock {\em Discrete Applied Mathematics}, 209:16--26, 2016.
\newblock \href {https://doi.org/10.1016/j.dam.2015.10.012}
  {\path{doi:10.1016/j.dam.2015.10.012}}.

\bibitem{balanced}
J{\o}rgen Bang{-}Jensen and Anders Yeo.
\newblock Balanced branchings in digraphs.
\newblock {\em Theoretical Computer Science}, 595:107--119, 2015.
\newblock \href {https://doi.org/10.1016/j.tcs.2015.06.026}
  {\path{doi:10.1016/j.tcs.2015.06.026}}.

\bibitem{FPT-book}
Marek Cygan, Fedor~V. Fomin, Lukasz Kowalik, Daniel Lokshtanov, D{\'{a}}niel
  Marx, Marcin Pilipczuk, Michal Pilipczuk, and Saket Saurabh.
\newblock {\em Parameterized Algorithms}.
\newblock Springer, 2015.
\newblock \href {https://doi.org/10.1007/978-3-319-21275-3}
  {\path{doi:10.1007/978-3-319-21275-3}}.

\bibitem{e}
Jack Edmonds.
\newblock Edge-disjoint branchings.
\newblock In B.~Rustin, editor, {\em Combinatorial Algorithms}, pages 91--96,
  New York, 1973. Academic Press.

\bibitem{e4}
Jack Edmonds.
\newblock Some well-solved problems in combinatorial optimization.
\newblock In B.~Roy, editor, {\em Combinatorial Programming: Methods and
  Applications (Proceedings of the NATO Advanced Study Institute, Versailles,
  1974)}, Reidel, Dordrecht, 1975. Academic Press.
\newblock \href {https://doi.org/10.1007/978-94-011-7557-9_15}
  {\path{doi:10.1007/978-94-011-7557-9_15}}.

\bibitem{book}
Andr\'as Frank.
\newblock {\em Connections in Combinatorial Optimization}.
\newblock 2011.
\newblock URL:
  \url{https://global.oup.com/academic/product/connections-in-combinatorial-optimization-9780199205271}.

\bibitem{ETH}
Russell Impagliazzo and Ramamohan Paturi.
\newblock On the complexity of $k$-{SAT}.
\newblock {\em Journal of Computer and System Sciences}, 62(2):367--375, 2001.
\newblock \href {https://doi.org/10.1006/jcss.2000.1727}
  {\path{doi:10.1006/jcss.2000.1727}}.

\bibitem{I}
Russell Impagliazzo, Ramamohan Paturi, and Francis Zane.
\newblock Which problems have strongly exponential complexity?
\newblock {\em Journal of Computer and System Sciences}, 63(4):512--530, 2001.
\newblock \href {https://doi.org/10.1006/jcss.2001.1774}
  {\path{doi:10.1006/jcss.2001.1774}}.

\bibitem{l}
L\'aszl\'o Lov\'asz.
\newblock Connectivity in digraphs.
\newblock {\em Journal of Combinatorial Theory, Series B}, 15(2):174--177,
  1973.
\newblock \href {https://doi.org/10.1016/0095-8956(73)90018-X}
  {\path{doi:10.1016/0095-8956(73)90018-X}}.

\bibitem{t}
William~T. Tutte.
\newblock On the problem of decomposing a graph into $n$ connected factors.
\newblock {\em Journal of the London Mathematical Society}, 36(1):221--230,
  1961.
\newblock \href {https://doi.org/10.1112/jlms/s1-36.1.221}
  {\path{doi:10.1112/jlms/s1-36.1.221}}.

\end{thebibliography}


\end{document}